\def\qt{\quad\times}
\def\g{\gamma}
\def\E{\mathbb{E}}
\def\P{\mathbb{P}}
\def\Grml{G_{\mathrm{r,ml}}}
\def\Grsl{G_{\mathrm{r,sl}}}
\def\Nl{N_{v}}
\def\BPP{{\Phi}}
\def\BPPe{\bar{\Phi}}
\def\BPPeml{\bar{\Phi}_{\mathrm{ml}}}
\def\BPPesl{\bar{\Phi}_{\mathrm{sl}}}
\def\PPPe{\bar{\Phi}^{\prime}}
\def\PPPeml{\bar{\Phi}_{\mathrm{ml}}^{\prime}}
\def\PPPesl{\bar{\Phi}_{\mathrm{sl}}^{\prime}}
\def\BPPl{{\Phi_v}}
\def\BPPle{\bar{\Phi}_v}
\def\BPPeEq{\BPPe=\emptyset}
\def\BPPemlEq{\BPPeml=\emptyset}
\def\BPPemlNeq{\BPPeml\neq\emptyset}
\def\BPPeslEq{\BPPesl=\emptyset}
\def\BPPeslNeq{\BPPesl\neq\emptyset}
\def\lame{{\lambda_{\mathrm{e}}}}
\def\thes{{\theta_{\mathrm{s}}}}
\def\as{a_{\mathrm{s}}}
\def\av{a_{\mathrm{e}}}
\def\al{a_{\mathrm{e},v}}
\def\gs{{\gamma_{\mathrm{s}}}}
\def\ge{{\gamma_e}}
\def\gemd{{\gamma_{\mathrm{e}^*}}}
\def\gemdml{{\gamma_{\mathrm{e^*,ml}}^{(p)}}}
\def\gemdsl{{\gamma_{\mathrm{e^*,sl}}^{(q)}}}
\def\glemd{{\gamma_{\mathrm{e}^*,v}}}
\def\gLemd{{\hat{\gamma}_{\mathrm{e}^*}}}
\def\gemdapp{{\tilde{\gamma}_{\mathrm{e}^*}}}
\def\gemdmlapp{{\tilde{\gamma}_{\mathrm{e^*,ml}}}}
\def\gemdslapp{{\tilde{\gamma}_{\mathrm{e^*,sl}}}}
\def\ge0{\g_{\mathrm{e}_0}}
\def\e0{\mathrm{e}_0}
\def\ds{d_{\mathrm{s}}}
\def\dth{d_{\mathrm{th}}}
\def\dmax{d_{\mathrm{max}}}
\def\hs{{h_{\mathrm{s}}}}
\def\cdfgs{F_{\gs}(x)}
\def\pdfgs{f_{\gs}(x)}
\def\Pout{P_{\mathrm{out}}}
\def\PoutL{\hat{P}_{\mathrm{out}}}
\def\Rt{R_{\mathrm{t}}}
\def\Cerg{C_{\mathrm{erg}}}
\def\Cout{C_{\mathrm{out}}}
\def\CergL{\hat{C}_{\mathrm{erg}}}
\def\X{\mathcal{X}}
\def\A{\mathcal{A}}
\def\Ae{\bar{\mathcal{A}}}
\def\Aeml{\bar{\mathcal{A}}_{\mathrm{ml}}}
\def\Aesl{\bar{\mathcal{A}}_{\mathrm{sl}}}
\def\idxe{{\bar{e}}}
\def\idxet{{\bar{e}}}
\def\S{\mathcal{S}}
\def\SA{\mathcal{S}_{\A}}
\def\SAe{\mathcal{S}_{\Ae}}
\def\SAeml{\mathcal{S}_{\Aeml}}
\def\SAesl{\mathcal{S}_{\Aesl}}
\def\psmax{\psi_{\mathrm{max}}}
\def\psth{\psi_{\mathrm{th}}}
\def\wth{\omega_{\mathrm{th}}}
\def\wsb{\Delta\omega_{\mathrm{sb}}}
\def\delequal{\mathrel{\ensurestackMath{\stackon[1pt]{=}{\scriptstyle\Delta}}}}
\newtheorem{thm}{Theorem}
\newtheorem{lem}{Lemma}
\newtheorem{cor}{Corollary}
\newtheorem{rem}{Remark}
\xpatchcmd{\proof}{\hskip\labelsep}{\hskip5\labelsep}{}{}
\DeclarePairedDelimiter{\floor}{\lfloor}{\rfloor}
\newcounter{myeqncount}
\begin{document}
\title{When Satellites Work as Eavesdroppers}

\author{Dong-Hyun Jung, Joon-Gyu Ryu, and Junil Choi\\

\thanks{This work was supported by Institute of Information \& communications Technology Planning \& Evaluation (IITP) grant funded by the Korea government (MSIT) (No.2021-0-00847, Development of 3D Spatial Satellite Communications Technology). \textit{(Corresponding author: Junil Choi)}}
\thanks{D.-H. Jung is with the School of Electrical Engineering, KAIST, and with the Radio and Satellite Research Division, Communication and Media Research Laboratory, Electronics and Telecommunications Research Institute, Daejeon, South Korea (e-mail: donghyunjung@kaist.ac.kr).}
\thanks{J.-G. Ryu is with the Radio and Satellite Research Division, Communication and Media Research Laboratory, Electronics and Telecommunications Research Institute, Daejeon, South Korea (e-mail: jgryurt@etri.re.kr).}
\thanks{J. Choi is with the School of Electrical Engineering, KAIST, Daejeon, South Korea (e-mail: junil@kaist.ac.kr).}
\vspace{-0.7cm}
}
\maketitle

\begin{abstract}
This paper considers \textit{satellite eavesdroppers} in uplink satellite communication systems where the eavesdroppers are randomly distributed at arbitrary altitudes according to homogeneous binomial point processes and attempt to overhear signals that a ground terminal transmits to a serving satellite.
Non-colluding eavesdropping satellites are assumed, i.e., they do not cooperate with each other, so that their received signals are not combined but are decoded individually.
Directional beamforming with two types of antennas: fixed- and steerable-beam antennas, is adopted at the eavesdropping satellites.
The possible distribution cases for the eavesdropping satellites and the distributions of the distances between the terminal and the satellites are analyzed.
The distributions of the signal-to-noise ratios (SNRs) at both the serving satellite and the most detrimental eavesdropping satellite are derived as closed-form expressions.
The ergodic and outage secrecy capacities of the systems are derived with the secrecy outage probability using the SNR distributions.
Simpler approximate expressions for the secrecy performance are obtained based on the Poisson limit theorem, and asymptotic analyses are also carried out in the high-SNR regime.
Monte-Carlo simulations verify the analytical results for the secrecy performance.
The analytical results are expected to be used to evaluate the secrecy performance and design secure satellite constellations by considering the impact of potential threats from malicious satellite eavesdroppers.

\textbf{\emph{Index terms}} --- Satellite communication systems, physical-layer security, satellite eavesdropper, secrecy outage probability, secrecy capacity.
\\
\end{abstract}

\IEEEpeerreviewmaketitle

\vspace{-0.6cm}

\section{Introduction}\label{Sec:Intro}
\IEEEPARstart{S}{atellite}
communications have been recently employed to provide global internet services exploiting the large coverage of satellites.
As a part of the fifth generation standard, non-terrestrial networks (NTNs) have been standardized by 3rd Generation Partnership Project (3GPP) since Release 15 [\ref{Ref:3GPP_38.811}], [\ref{Ref:3GPP_38.821}].
The 3GPP standard is to harmonize the original terrestrial networks (TNs) with the NTNs composed of flying entities such as satellites, unmanned aerial vehicles (UAVs), and high-altitude platforms (HAPs).
Such three-dimensional networks including both TNs and NTNs will provide communication services to not only ground terminals but those in the sky such as drones, airplanes, and vehicles for urban air mobility.
In addition, several companies such as SpaceX, OneWeb, Telesat, and Amazon have planned to launch a large number of low Earth orbit (LEO) satellite constellations to enhance the system throughput in the near future [\ref{Ref:Pachler}]. 
For example, SpaceX has plans to launch more than 10,000 satellites over the next couple of years, while OneWeb designed a constellation with 720 satellites at the altitude of 1,200 km.
Telesat has planned to construct the constellation of 117 satellites on the polar and inclined orbits at the altitude of 1,000 and 1,200 km, respectively.
However, the growing number of satellites may cause problems from a communication security perspective. 
For example, some satellites with specific purposes could overhear important information transmitted from ground nodes without permission.
With these potential threats by the massive number of satellites, it is important to keep communication systems safe in terms of communication security.

\vspace{-0.3cm}
\subsection{Related Works}\label{Sec:Intro_rel_work}
Due to the characteristics of wireless channels, the information signals from a transmitter can reach not only a legitimate receiver but other malicious receivers, which act as eavesdroppers [\ref{Ref:Wyner}].
With the knowledge of the communication protocols of the legitimate link, the eavesdroppers are able to overhear the signals without permission.
The positions of eavesdroppers may not be known at legitimate transceivers, as the eavesdroppers usually act as passive nodes.
To tackle this uncertainty, many works considered multiple eavesdroppers randomly distributed in infinite areas based on Poisson point processes (PPPs) [\ref{Ref:Geraci}]-[\ref{Ref:Chen2}].
However, when the number of nodes distributed in the networks is finite, the randomness of the positions of the nodes should be modeled by using a finite point process other than the PPP [\ref{Ref:Afshang}].

The binomial point process (BPP) is a finite point process where each point is independently distributed, and the number of points in a bounded region follows the binomial distribution.
The distributions of LEO satellite constellations can be modeled as the BPP over a sphere-shaped region because the satellites may look randomly distributed due to the mobility of the LEO satellites and various orbits of the practical constellations [\ref{Ref:Okati}]-[\ref{Ref:Talgat2}]. It was shown in [\ref{Ref:Okati}] that the BPP is appropriate to model practical LEO satellites' distributions from coverage and rate perspectives.
The user coverage probability of LEO satellite communication systems was studied in [\ref{Ref:Talgat1}] where gateways act as relays between users and LEO satellites.
The distance distributions for gateway-satellite and inter-satellite links were studied in [\ref{Ref:Talgat2}].

As interest in the NTNs including satellites, UAVs, and HAPs is growing, physical layer security in the NTNs has recently been investigated in [\ref{Ref:Zhu}]-[\ref{Ref:Wu}] assuming ground eavesdroppers.
The ergodic secrecy capacity for UAV networks was analyzed in [\ref{Ref:Zhu}] where a transmit jamming strategy was proposed to confuse randomly-located eavesdroppers. 
Zero-forcing-based beamforming schemes for multi-beam satellite systems were proposed in [\ref{Ref:Lei}] and [\ref{Ref:Zheng1}] to minimize the transmit power of the satellite with a secrecy rate constraint.
For both perfect and imperfect channel state information, secure transmission schemes in cognitive satellite-terrestrial networks were proposed in [\ref{Ref:Li}].
As stated, however, these works are based only on the ground eavesdroppers in the NTNs.

\subsection{Motivation and Contributions}\label{sec:Intro_motiv_cont}
As the number of NTN elements, such as satellites and UAVs, rapidly grows in the sky, some of them may illegally attempt to eavesdrop on information signals from ground transmitters for political or military purposes. In addition, the large coverage, which is an advantage of the NTNs, may rather become a major threat, making it easier for the signals to be overheard.
The impacts of UAV eavesdroppers on the secrecy performance have been recently investigated in [\ref{Ref:Sharma}]-[\ref{Ref:Bao}].
The secrecy outage probability of hybrid satellite-terrestrial networks was analyzed in [\ref{Ref:Sharma}] where a UAV eavesdropper tries to overhear information signals from a UAV relay.
Asymptotic expressions for the ergodic and outage secrecy capacities were derived in [\ref{Ref:Yuan1}] with a UAV eavesdropper.
A secure connection probability was analyzed in [\ref{Ref:Tang}] in the presence of UAV eavesdroppers randomly located in a disk according to a BPP. 
For both colluding and non-colluding aerial eavesdroppers that are distributed over a hemisphere, the ergodic and outage secrecy capacities were analyzed in [\ref{Ref:Yuan2}].
The secrecy outage probability of ground-to-air communication networks was derived in [\ref{Ref:Bao}] where a deep learning model to predict the secrecy performance was developed.
However, there have been few studies on the impact of satellites working as eavesdroppers, which are worthwhile to investigate as the number of satellites increases.

Motivated by this, we aim to analyze the secrecy performance of satellite communication systems in the presence of multiple satellite eavesdroppers.
The main contributions of this paper are summarized as follows.
\begin{itemize}
    \item \textbf{Eavesdropping satellites:} The impact of satellites working as eavesdroppers is investigated, while previous works on the physical layer security in satellite communication systems [\ref{Ref:Lei}]-[\ref{Ref:Kalantari}] only consider the eavesdroppers located on the Earth.
    Different from aerial UAV eavesdroppers in [\ref{Ref:Tang}]-[\ref{Ref:Bao}], satellites are distributed over a sphere centered at the Earth's center, which makes distance distributions different. 
    \item \textbf{Two types of beamforming antennas:} 
    Directional beamforming with both fixed- and steerable-beam antennas is considered for the eavesdropping satellites to compensate the large path loss [\ref{Ref:3GPP_38.821}], which has not been considered in the previous works [\ref{Ref:Tang}]-[\ref{Ref:Bao}].
    The satellites with the fixed-beam antennas maintain the boresight fixed in the direction of the subsatellite point (the nearest point on the Earth), while with the steerable-beam antennas, the boresight direction of the satellite's beam can be steered to the location of the targeted terminal to improve the quality of received signals.
    \item \textbf{BPP-based secrecy performance analyses:} The possible distribution cases for eavesdropping satellites are studied, and the probabilities of these cases are derived based on the mathematical properties of the BPP.
    The distributions of distances between the terminal and eavesdropping satellites are analyzed.
    With the derived distance distributions, we obtain the distributions of the signal-to-noise ratios (SNRs) at the serving satellite and the most detrimental eavesdropping satellite, i.e., the eavesdropping satellite with the highest received SNR. 
    For both fixed- and steerable-beam antennas, we analyze three secrecy performance: (i) ergodic secrecy capacity, (ii) secrecy outage probability, and (iii) outage secrecy capacity of the systems by using the SNR distributions. 
    Using the Poisson limit theorem, we derive simpler approximate expressions for the secrecy performance and compare their computational complexity with the exact expressions. In addition, the asymptotic behavior of the secrecy performance is investigated in the high-SNR regime.
    \item \textbf{Verification by simulations:} Finally, we numerically verify the derived expressions through Monte-Carlo simulations.
    The impact of the number of eavesdropping satellites, position of the serving satellite, and beam-steering capability is also shown in the simulations. 
\end{itemize}

The rest of this paper is organized as follows.
In Section \ref{Sec:System_model}, the system model for satellite communication systems with randomly-located eavesdropping satellites is described.
In Section \ref{Sec:BPP-based_Anal}, the possible distribution cases and the distance distributions for the eavesdropping satellites are studied.
In Section \ref{Sec:Secrecy_performance_anal},the SNR distributions and the secrecy performance of the systems are analyzed.
In Section \ref{Sec:Asym_secrecy_performance}, the approximate and asymptotic secrecy performance are obtained.
In Section \ref{sec:sat_diff_alti}, the analyzed results are extended to the satellites at different altitudes.
In Section \ref{Sec:Sim_results}, simulation results are provided, and conclusions are drawn in Section~\ref{Sec:Conclusions}.

\emph{Notation:}
$\mathbb{P}[\cdot]$ indicates the probability measure, and $\mathbb{E}[\cdot]$ denotes the expectation operator.
The empty set is $\emptyset$, and the complement of a set $\mathcal{X}$ is $\mathcal{X}^{\mathrm{c}}$.
The length between two points $\mathrm{A}$ and $\mathrm{B}$ is $\overline{\mathrm{AB}}$.
The absolute value of a real number $x$ is $|x|$.
The indicator function is $\mathbf{1}(x\in\mathcal{X})$, which has the value of 1 if $x\in \mathcal{X}$ and 0 otherwise.
$\binom{n}{k}$ denotes the binomial coefficient.
The cumulative distribution function (CDF) and the probability density function (PDF) of random variable $X$ are $F_X(x)$ and $f_x(x)$, respectively.
$\Gamma(\cdot)$ is the Gamma function, and the Pochhammer symbol is defined as $(x)_n=\Gamma(x+n)/\Gamma(x)$.
The lower incomplete Gamma function is defined as $\gamma(a, x)=\int_0^x t^{a-1}e^{-t}dt$.
The inverse function of $f(\cdot)$ is $f^{-1}(\cdot)$.
The floor function is $\floor{\cdot}$.

\begin{figure*}[!t]
\centering
\subfigure[]{
\includegraphics[width=0.855\columnwidth]{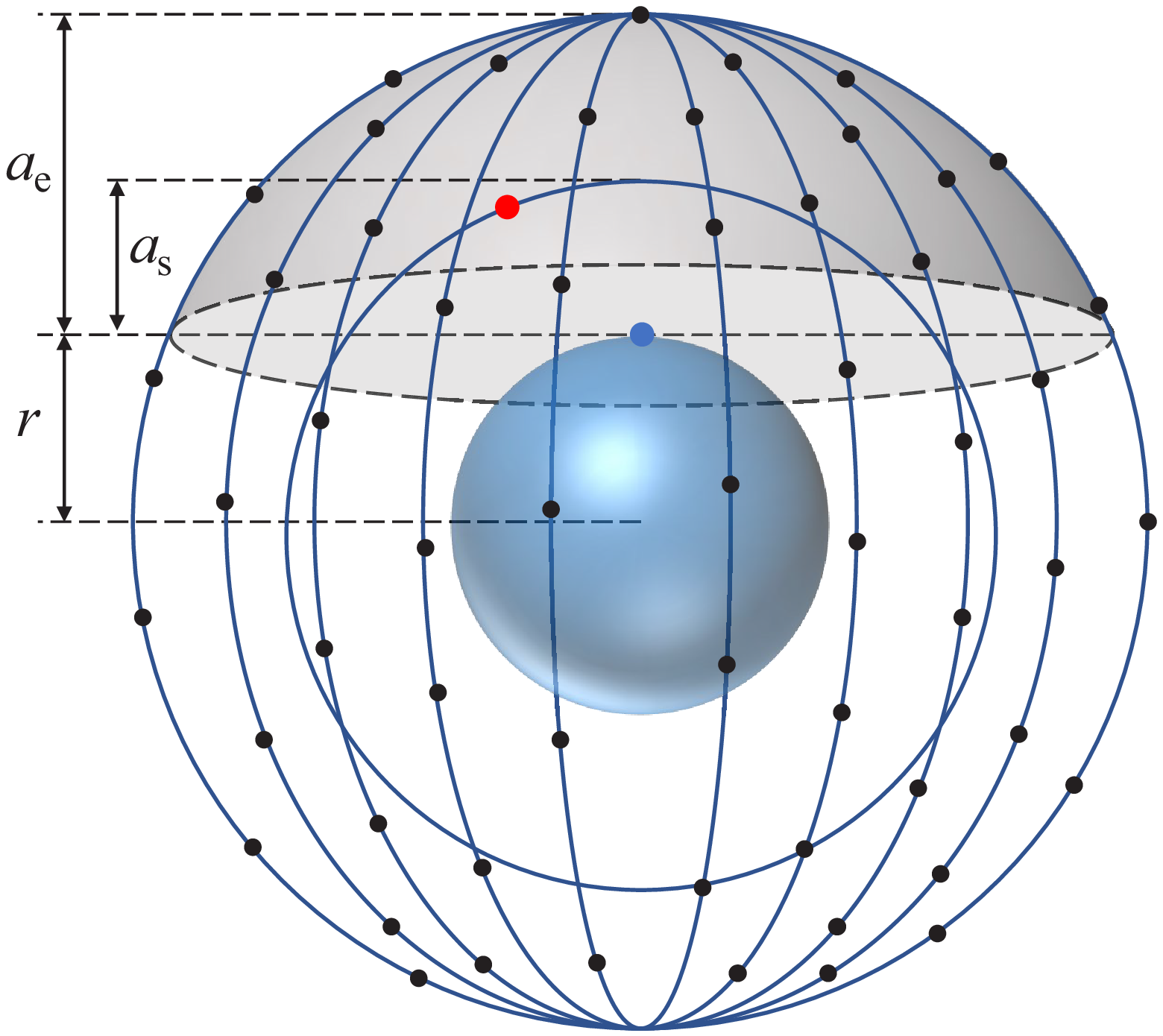}
\label{Fig:system_model_1}
}
\subfigure[]{
\includegraphics[width=0.855\columnwidth]{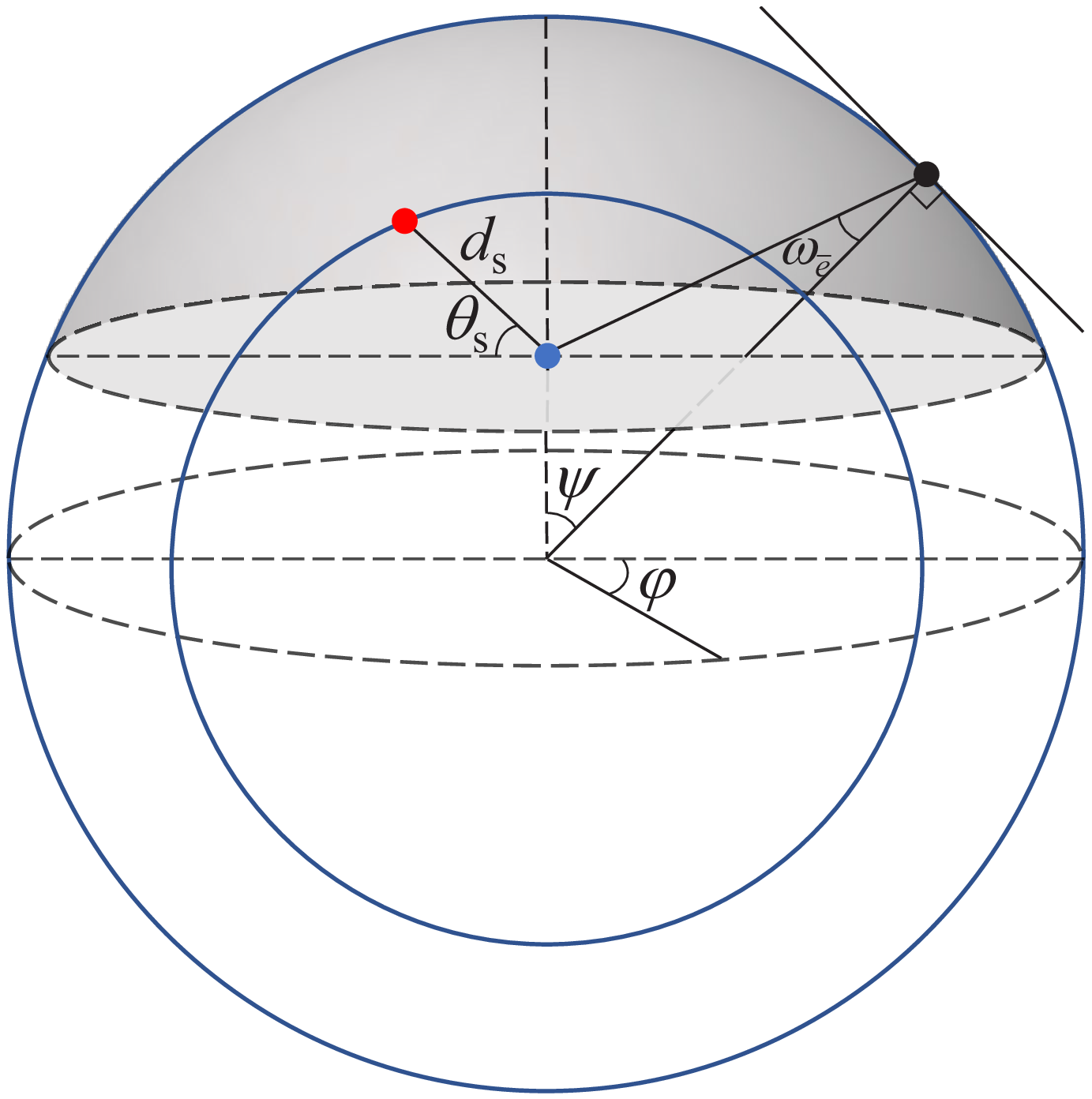}
\label{Fig:system_model_2}
}
\caption{(a) System model and (b) parameter description. Blue, red, and black dots indicate the terminal, serving satellite, and eavesdropping satellites, respectively.}
\label{Fig:System_model}
\vspace{-0.3cm}
\end{figure*}

\vspace{-0.2cm}
\section{System Model}\label{Sec:System_model}
Consider an uplink LEO satellite communication system where a ground terminal communicates to a serving satellite at the altitude $\as$ and the elevation angle $\thes$ in the presence of multiple eavesdropping satellites. 
Assume the handheld-type terminal that is equipped with an omni-directional antenna as considered in the 3GPP NTN standard.\footnote{The 3GPP NTN standard considers the handheld terminal, i.e., the user equipment with the power class 3, with an omni-directional antenna as one of the target terminals [\ref{Ref:3GPP_38.821}]. This antenna configuration can be applicable to low-cost terminals, such as handheld and IoT terminals.} 
We begin with the assumption that $N$ eavesdropping satellites are randomly distributed at the altitude $\av$ according to a homogeneous BPP ${\Phi}$, as shown in Fig. \ref{Fig:system_model_1}, which will be extended to the general case of eavesdropping satellites at different altitudes in Section \ref{sec:sat_diff_alti}.
The region where the eavesdropping satellites can be located is a sphere with the radius $r+\av$, where $r$ is the radius of the Earth.
With spherical coordinates, this region can be expressed as $\A=\{\rho=r+\av, 0\le\psi\le\pi, 0 \le\varphi\le 2\pi\}$, where $\rho$, $\psi$, and $\varphi$ are the radial distance, polar angle, and azimuthal angle, respectively. 
Among the eavesdropping satellites in $\A$, those located above the horizontal plane at the terminal are only considered as effective eavesdropping satellites, which attempt to overhear the information signals transmitted from the terminal.
Let $\BPPe$ denote the set of the effective eavesdropping satellites, and $\Ae$ represent the region where the satellites in $\BPPe$ can be located,  shown as the shaded areas in Fig. \ref{Fig:system_model_2}.
Since $\Ae$ is the surface of the spherical cap with the radius $r+\av$ above the horizontal plane at the terminal, $\Ae$ can be expressed as $\Ae = \{\rho=r+\av, 0\le\psi\le\psmax, 0 \le\varphi\le 2\pi\}$, where the maximum polar angle is given by $\psi_{\mathrm{max}}=\cos^{-1}\left(\frac{r}{r+\av}\right)$.
Assume that the eavesdropping satellites are non-colluding, i.e., they do not cooperate with each other.\footnote{Colluding satellite eavesdroppers may not be practical due to the limited capacities of inter-satellite and feeder links. In addition, the signals transmitted from the eavesdroppers experience very different propagation delays and Doppler offsets due to geographical distance between the satellites. From these facts, a collecting node may have difficulty combining the received signals.}
 
Directional beamforming is adopted at both the serving and eavesdropping satellites to compensate the path loss.
Assume that the main lobe of the serving satellite's beam is directed towards the terminal based on the terminal's location.
For the eavesdropping satellites, two types of beamforming capabilities are considered: (i) satellites with fixed-beam antennas, which maintain the boresight in the direction of the subsatellite point and (ii) satellites with steerable-beam antennas, of which the boresight direction can be steered to a targeted position.

For mathematical tractability, the satellites are assumed to have a sectorized beam pattern whose antenna gains for the main and side lobes are respectively given by $G_{\mathrm{r,ml}}$ and $G_{\mathrm{r,sl}}$ [\ref{Ref:Alkhateeb}].
Let $\omega_i$, $i \in \{\mathrm{s}, \idxe\}$, $\idxe\in \BPPe$, denote the angle between the terminal and the boresight direction of the satellite $i$, where the indices $\mathrm{s}$ and $\idxe$ denote the serving and effective eavesdropping satellites, respectively.
Then, the receive antenna gain of the satellite $i$ is given by
\begin{align}
G_{\mathrm{r},i}
    =& \begin{cases} 
    G_{\mathrm{r,ml}}, & \mbox{if  } |\omega_{i}| \le \wth,\\
    G_{\mathrm{r,sl}}, & \mbox{otherwise}, 
    \end{cases}
\end{align}
where $\wth$ is the threshold angle between the main and side lobes of the beam pattern.
The propagation loss for the link from the terminal to the satellite $i$ is modeled as $\ell(d_i)= G_{\mathrm{t}} G_{\mathrm{r},i} \left(\frac{c}{4\pi f_{\mathrm{c}}}\right)^2 d_i^{-\alpha}$, where $d_i$ is the distance between the terminal and the satellite $i$, $G_{\mathrm{t}}$ is the transmit antenna gain, $c$ is the speed of light, $f_{\mathrm{c}}$ is the carrier frequency, and $\alpha$ is the path-loss exponent.
The shadowed-Rician fading model is assumed for the channels between the terminal and satellites.\footnote{The shadowed-Rician fading model is widely adopted for satellite channels, which has been proved its suitability in various frequency bands, e.g., the UHF-band, L-band, S-band, and Ka-band [\ref{Ref:Jung}]-[\ref{Ref:Bhatnagar2}].}
Let $h_{i}$ denote the channel gain between the terminal and the satellite $i$. Then, the CDF of the channel gain is given by [\ref{Ref:Abdi}]
\begin{equation}\label{eq:CDF_ch_gain}
{F_{h_{i}}}(x) = K\sum\limits_{n = 0}^\infty  {\frac{{{{(m)}_n}{\delta^n}{{(2b)}^{1 + n}}}}{{{{(n!)}^2}}}}\gamma\left(1+n,\frac{x}{2b}\right),
\end{equation}
where $K = {\left({2bm}/{(2bm + \Omega) }\right)^m}/{2b}$, $\delta  = (\Omega /(2bm + \Omega ))/2b$ with $\Omega$ being the average power of the line-of-sight (LOS) component, $2b$ is the average power of the multi-path components except the LOS component, and $m$ is the Nakagami parameter.

The SNR at the satellite $i$, $i \in \{\mathrm{s}, \idxe\}$, $\idxe \in \BPPe$, is given by
\begin{equation}\label{eq:SNR}
\g_i =  \frac{P h_i \ell(d_i)}{N_0 W},
\end{equation}
where $P$ is the transmit power of the terminal, $N_0$ is the noise power spectral density, and $W$ is the bandwidth.
For the non-colluding satellites, the secrecy rate of the system is determined by the most detrimental eavesdropping satellite, i.e., the eavesdropping satellite with the highest SNR, which is given by
\begin{align}\label{eq:SNRe*}
\gemd
    =& \begin{cases} 
    \max\limits_{\idxe\in\BPPe} \g_\idxe, & \mbox{if } \BPPe \neq \emptyset,\\
    0, & \mbox{if } \BPPe = \emptyset.
     \end{cases}
\end{align}
Since the eavesdropping satellites are randomly distributed over a sphere, it is possible that no eavesdropping satellites are located in $\Ae$, i.e., $\BPPe=\emptyset$, which means that there is no malicious node to eavesdrop on the signals from the terminal.
In such case, the SNR at the most detrimental eavesdropping satellite is considered to be zero.

\section{Binomial Point Process-Based Analyses}\label{Sec:BPP-based_Anal}
In this section, the possible distribution cases and distance distributions for the eavesdropping satellites are analyzed based on the characteristics of the BPP.
Hereafter, we analyze secrecy performance assuming the eavesdropping satellites with the fixed-beam antennas and then extend the results to those with the steerable-beam antennas in Section~\ref{Sec:Secrecy_performance_anal_sb}.

\subsection{Distribution Cases for Eavesdropping Satellites}
When the eavesdropping satellites with the fixed-beam antennas maintain the boresight in the direction of the subsatellite point, the set of effective eavesdropping satellites $\BPPe$ can be divided into two sets $\BPPeml$ and $\BPPesl$ consisting of the effective eavesdropping satellites whose main and side lobes are directed towards the terminal, respectively.
The regions where the satellites in $\BPPeml$ and $\BPPesl$ can be located are denoted by $\Aeml$ and $\Aesl$, respectively, which are shown in Fig. \ref{Fig:area_description}. 
The threshold polar angle $\psth$ differentiating $\Aeml$ and $\Aesl$ can be obtained as
\begin{align} \label{eq:OP}
\overline{\mathrm{OP}} 
    =&\overline{\mathrm{OS}}\sin\wth=\overline{\mathrm{OT}}\sin(\wth+\psth)\nonumber\\   
    =& (r+\av)\sin\wth=r\sin(\wth+\psth),
\end{align}
which gives the threshold polar angle as
\begin{align}
\psth
    = \sin^{-1}\left(\frac{r+\av}{r} \sin\wth\right)-\wth,
\end{align}
when $\wth \le \sin^{-1}\left(\frac{r}{r+\av}\right)$.
If $\wth > \sin^{-1}\left(\frac{r}{r+\av}\right)$, the main lobes of all the effective eavesdropping satellites are directed toward the terminal so that $\Aeml$ is expanded to $\Ae$, i.e., $\psth=\psmax$.
With respect to $\psth$, the regions $\Aeml$ and $\Aesl$ can be expressed as $\Aeml= \{\rho=r+\av, 0\le\psi\le\psth, 0 \le\varphi\le 2\pi\}$ and $\Aesl= \{\rho=r+\av, \psth<\psi\le\psi_{\mathrm{max}}, 0 \le\varphi\le 2\pi\}$,
respectively.
As the satellites are randomly distributed over a sphere, the numbers of satellites in $\Aeml$ and $\Aesl$ can be zero, i.e., $\BPPemlEq$ and/or $\BPPeslEq$. 
Hence, there can be four distribution cases for the eavesdropping satellites as follows.
\begin{itemize}
    \item Case 1: There are no effective eavesdropping satellites, i.e., $\BPPeEq$, or equivalently $\BPPemlEq$ and $\BPPeslEq$.
    \item Case 2: There are no eavesdropping satellites in $\Aeml$, but in $\Aesl$, i.e., $\BPPemlEq$ and $\BPPeslNeq$.
    \item Case 3: There are no eavesdropping satellites in $\Aesl$, but in $\Aeml$, i.e., $\BPPemlNeq$ and $\BPPeslEq$.
    \item Case 4: At least one eavesdropping satellite exists in both $\Aeml$ and $\Aesl$, i.e., $\BPPemlNeq$ and $\BPPeslNeq$.
\end{itemize}
Before driving the probabilities of the four cases, we obtain the finite-dimensional distribution of the BPP $\BPP$ in the following lemma.

\begin{figure}
\begin{center}
\includegraphics[width=0.95\columnwidth]{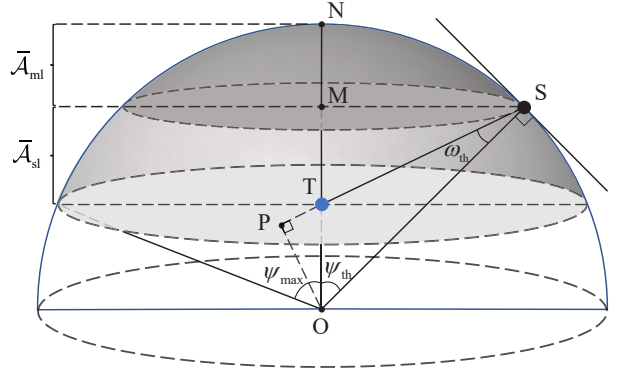}
\end{center}
\setlength\abovecaptionskip{.25ex plus .125ex minus .125ex}
\setlength\belowcaptionskip{.25ex plus .125ex minus .125ex}
\caption{Descriptions of the areas $\Aeml$ and $\Aesl$ where the main and side lobes of the satellites' beams are respectively directed toward the terminal. The points O, T, and S respectively represent the positions of the Earth's center, the terminal, and the eavesdropping satellite at the boundary between $\Aeml$ and $\Aesl$.}
\label{Fig:area_description}
\end{figure}

\begin{lem}\label{Lem:finite_dim_dist}
Let $\BPP(\mathcal{X})$ denote a random variable representing the number of eavesdropping satellites of the BPP $\BPP$ lying in a region $\mathcal{X}$, i.e., $\BPP(\mathcal{X})=\sum_{\idxe\in\BPP}\mathbf{1}(\idxe\in\mathcal{X})$. 
For disjoint regions $\X_1$, $\X_2$, $\cdots$, $\X_J$ satisfying $\X_1\cup\X_2\cup\cdots\cup\X_J=\A$, 
the finite-dimensional distribution, i.e., the probability distribution of the random vector $[\BPP(\X_1) \,\, \BPP(\X_2)\,\, \cdots\,\, \BPP(\X_J)]$, is given by
\begin{align}
\mathbb{P}[\BPP(&\X_1)=n_1,\BPP(\X_2)=n_2, \cdots, \BPP(\X_J)=n_J] \nonumber\\
    &= \frac{N!}{n_1!n_2!\cdots n_J!} \times \frac{(\S_{\X_1})^{n_1} (\S_{\X_2})^{n_2} \cdots (\S_{\X_J})^{n_J}}{(\S_\A)^N},
\end{align}
where $\mathcal{S}_{\mathcal{X}_j}$ is the surface area of the region $\mathcal{X}_j$.
\end{lem}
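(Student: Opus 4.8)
The plan is to recognize this as the standard multinomial sampling result for independent, identically distributed points. First I would unpack what the homogeneous BPP $\BPP$ means: it consists of $N$ points placed independently, each one distributed uniformly over the sphere $\A$. Because each point's law is uniform on the surface, the probability that a single point lands in the region $\X_j$ equals the surface-area ratio $p_j = \mathcal{S}_{\X_j}/\SA$. Since the $\X_j$ are disjoint with $\X_1\cup\cdots\cup\X_J=\A$, these probabilities satisfy $\sum_{j=1}^J p_j = \SA/\SA = 1$, so assigning a single point to one of the regions is a well-defined categorical trial over $J$ outcomes.

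Next I would fix one of the $N$ points and observe that it falls into exactly one region, with category probabilities $p_1,\dots,p_J$ as just computed. Because the $N$ points are independent and identically distributed, the count vector $[\BPP(\X_1)\;\cdots\;\BPP(\X_J)]$ is precisely the outcome of $N$ independent categorical trials; that is, it follows the multinomial distribution with parameters $N$ and $(p_1,\dots,p_J)$.

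It then remains to evaluate this multinomial probability at a fixed count vector $(n_1,\dots,n_J)$ with $\sum_j n_j = N$. Here I would invoke the usual combinatorial counting argument: any particular assignment of the labeled points to regions that places exactly $n_j$ points in $\X_j$ has probability $\prod_{j=1}^J p_j^{n_j}$ by independence, while the number of such assignments is the multinomial coefficient $N!/(n_1!\,n_2!\cdots n_J!)$. Multiplying these and substituting $p_j = \mathcal{S}_{\X_j}/\SA$, then collecting the denominators into $\SA^N$, reproduces the claimed expression.

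I expect the only delicate point to be the first step, namely making precise that ``homogeneous BPP'' is equivalent to $N$ i.i.d. uniform points and that the single-point hitting probability is exactly the surface-area ratio $\mathcal{S}_{\X_j}/\SA$; once that identification is secured, the remainder is the textbook derivation of the multinomial law and involves only routine bookkeeping.
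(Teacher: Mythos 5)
Your proposal is correct: the identification of the homogeneous BPP with $N$ i.i.d.\ points uniform on $\A$, the single-point hitting probability $\mathcal{S}_{\X_j}/\S_{\A}$, and the resulting multinomial law constitute exactly the standard derivation of this finite-dimensional distribution. The paper does not write out a proof at all --- it simply cites the textbook of Chiu et al.\ --- and your argument is precisely the canonical one found there, so there is nothing to compare beyond noting that you have supplied the details the authors delegated to the reference.
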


\begin{proof}[Proof:\nopunct]
This result is given in [\ref{Ref:Book:Chiu}].
\end{proof}

From Lemma \ref{Lem:finite_dim_dist}, we obtain the probability distribution for the three disjoint regions $\Aeml$, $\Aesl$, and $\Ae^{\mathrm{c}}$ in the following corollary.

\begin{cor}\label{Cor:prob_pq}
The probability distribution for the three disjoint regions $\Aeml$, $\Aesl$, and $\Ae^{\mathrm{c}}$ is given by
\begin{align}\label{eq:prob_pq_fin}
\mathcal{P}[N,p,q]&\triangleq
    \mathbb{P}[\BPP(\Aeml)=p,\BPP(\Aesl)=q, \BPP(\Ae^{\mathrm{c}})=N\!-\!p\!-\!q] \nonumber\\
    &= \frac{N!}{p!q!(N-p-q)!}\left(\frac{1-\cos\psth}{2}\right)^p\nonumber\\
    &\quad\times\left(\frac{(r+\av)\cos\psth-r}{2(r+\av)}\right)^q\! \left(1\!-\!\frac{\av}{2(r+\av)}\right)^{N-p-q}.
\end{align}
\end{cor}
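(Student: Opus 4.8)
The plan is to apply Lemma~\ref{Lem:finite_dim_dist} directly, specialized to $J=3$ with the regions $\X_1=\Aeml$, $\X_2=\Aesl$, and $\X_3=\Ae^{\mathrm{c}}$, and with $n_1=p$, $n_2=q$, $n_3=N-p-q$. First I would confirm that these three regions satisfy the hypotheses of the lemma, namely that they are pairwise disjoint and cover all of $\A$. This holds by construction: $\Aeml$ and $\Aesl$ are disjoint with $\Aeml\cup\Aesl=\Ae$, and $\Ae^{\mathrm{c}}$ is by definition the complement of $\Ae$ within $\A$, so their union is $\A$. Lemma~\ref{Lem:finite_dim_dist} then immediately gives the multinomial coefficient $N!/(p!\,q!\,(N-p-q)!)$ multiplied by $(\SAeml/\SA)^p(\SAesl/\SA)^q(\S_{\Ae^{\mathrm{c}}}/\SA)^{N-p-q}$, so the entire task reduces to evaluating the three surface-area ratios.

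To compute these areas, I would use the standard spherical-cap formula on the sphere of radius $r+\av$: the cap $\{\rho=r+\av,\,0\le\psi\le\psi_0\}$ has surface area $2\pi(r+\av)^2(1-\cos\psi_0)$, obtained by integrating the area element $(r+\av)^2\sin\psi\,d\psi\,d\varphi$. This yields $\SA=4\pi(r+\av)^2$ for the full sphere, $\SAeml=2\pi(r+\av)^2(1-\cos\psth)$ for the main-lobe cap, $\SAesl=2\pi(r+\av)^2(\cos\psth-\cos\psmax)$ for the side-lobe annulus between $\psth$ and $\psmax$, and $\S_{\Ae^{\mathrm{c}}}=2\pi(r+\av)^2(1+\cos\psmax)$ for the region below the horizon.

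Dividing each by $\SA$ and substituting $\cos\psmax=r/(r+\av)$ from the definition $\psmax=\cos^{-1}(r/(r+\av))$ then reproduces the three base factors in \eqref{eq:prob_pq_fin}: the main-lobe ratio is $(1-\cos\psth)/2$; the side-lobe ratio becomes $(\cos\psth-r/(r+\av))/2=((r+\av)\cos\psth-r)/(2(r+\av))$; and the complement ratio becomes $(1+r/(r+\av))/2=(2r+\av)/(2(r+\av))=1-\av/(2(r+\av))$. I do not expect a genuine obstacle, since the statement is a direct specialization of Lemma~\ref{Lem:finite_dim_dist}; the only point requiring care is the clean algebraic rewriting of the side-lobe and complement ratios into the rational forms displayed in \eqref{eq:prob_pq_fin} via the identity $\cos\psmax=r/(r+\av)$.
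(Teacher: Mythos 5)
Your proposal is correct and follows essentially the same route as the paper's proof: both apply Lemma~\ref{Lem:finite_dim_dist} to the three disjoint regions and then reduce the problem to computing the spherical-cap surface-area ratios, with only cosmetic differences (you integrate the area element and compute $\SAesl$ and $\S_{\Ae^{\mathrm{c}}}$ directly via $\cos\psmax=r/(r+\av)$, whereas the paper uses the cap-height formula $2\pi R g$ and obtains them as differences $\SAe-\SAeml$ and $\S_\A-\SAe$). The resulting ratios agree with \eqref{eq:prob_pq_fin} in both cases.
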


\begin{proof}[Proof:\nopunct]
See Appendix \ref{App:Cor1}.
\end{proof}

In \eqref{eq:prob_pq_fin}, we define a new function $\mathcal{P}[N,p,q]$ for simplicity of notation since the probability distribution becomes a constant with given $N$, $p$, and $q$.
Using Corollary 1, the probabilities of the four cases are obtained as
$P_1=\mathcal{P}[N,0,0]$, $P_2=\sum_{p=1}^{N}\mathcal{P}[N,p,0]$, $P_3=\sum_{q=1}^{N}\mathcal{P}[N,0,q]$, and $P_4=\sum_{p=1}^{N}\sum_{q=1}^{N-p}\mathcal{P}[N,p,q]$,
respectively, where $\sum_{i=1}^{4}P_i = \sum_{p=0}^{N}\sum_{q=0}^{N-p}\mathcal{P}[N,p,q] = 1$.

\subsection{Distance Distributions}
In this subsection, we characterize the distributions of the distances from the terminal to the satellites, which is an important step to derive secrecy performance of the system.
\begin{lem}\label{Lem:CDF_d_e}
The CDF of the distance between the terminal and eavesdropping satellite $\idxe\in\BPP$ is given by 
\begin{align}\label{eq:CDFde}
F_{d_\idxe}(x)=& \begin{cases} 0, & \mbox{if } x \le \av,\\ 
     \frac{x^2-\av^2}{4r (r+\av)}, & \mbox{if } \av < x \le 2r+\av, \\
     1, & \mbox{if } x > 2r+\av.
     \end{cases}
\end{align}
\end{lem}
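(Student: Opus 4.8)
The plan is to compute the CDF directly from the geometry, treating the single eavesdropping satellite $\idxe\in\BPP$ as a point uniformly distributed on the sphere $\A$ of radius $r+\av$. First I would introduce the polar angle $\psi$ of the satellite measured from the axis joining the Earth's center O to the terminal T, and invoke the law of cosines in the triangle OTS (with $\overline{\mathrm{OT}}=r$, $\overline{\mathrm{OS}}=r+\av$, and the angle at O equal to $\psi$) to write the terminal--satellite distance as
\begin{equation}
d_\idxe^2 = r^2 + (r+\av)^2 - 2r(r+\av)\cos\psi.
\end{equation}
As $\psi$ increases from $0$ to $\pi$ this expression increases monotonically from $\av^2$ to $(2r+\av)^2$, which already pins down the support $\av < x \le 2r+\av$ appearing in the statement and forces $F_{d_\idxe}(x)=0$ below $\av$ and $F_{d_\idxe}(x)=1$ above $2r+\av$.

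Next I would use the fact that, for a homogeneous BPP on a sphere, each point is uniformly distributed over the surface, so its polar angle has density proportional to the area element, namely $\sin\psi/2$ on $[0,\pi]$. Because $d_\idxe$ is a strictly increasing function of $\psi$ on this interval, I can write $F_{d_\idxe}(x)=\P[\psi\le\psi(x)]$, where $\psi(x)$ is the unique angle at which the distance equals $x$. Integrating the density gives
\begin{equation}
F_{d_\idxe}(x) = \int_0^{\psi(x)} \frac{\sin t}{2}\,dt = \frac{1-\cos\psi(x)}{2}.
\end{equation}
Equivalently, one may argue by areas: the favorable region is the spherical cap of points within distance $x$ of the terminal, whose surface area is $2\pi(r+\av)^2\bigl(1-\cos\psi(x)\bigr)$, and dividing by $\SA=4\pi(r+\av)^2$ yields the same ratio.

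The final step is to eliminate $\psi(x)$. Solving the law-of-cosines relation for $\cos\psi(x)$ gives $\cos\psi(x)=\frac{r^2+(r+\av)^2-x^2}{2r(r+\av)}$; substituting this into $\bigl(1-\cos\psi(x)\bigr)/2$ and simplifying the numerator using the identity $r^2-2r(r+\av)+(r+\av)^2=\av^2$ collapses the expression to $\frac{x^2-\av^2}{4r(r+\av)}$, which is the middle branch of \eqref{eq:CDFde}. I do not expect any serious obstacle: the computation is a single monotone change of variables together with an elementary algebraic simplification. The only point requiring mild care is verifying the boundary behavior---that $\psi(\av)=0$ and $\psi(2r+\av)=\pi$ match the endpoints---so that the middle branch continuously attains $0$ and $1$ at the edges of its support, ensuring the three pieces agree at $x=\av$ and $x=2r+\av$.
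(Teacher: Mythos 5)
Your proof is correct and follows essentially the same route as the paper: both reduce the CDF to the ratio of the surface area of the spherical cap of points within distance $x$ of the terminal to the total sphere area $4\pi(r+\av)^2$. The only cosmetic difference is that the paper computes the cap area directly from the cap height $g(x)=(x^2-\av^2)/(2r)$ via the Pythagorean theorem, whereas you parameterize by the polar angle and use the law of cosines before simplifying to the same expression.
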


\begin{proof}[Proof:\nopunct]
See Appendix \ref{App:Lem2}.
\end{proof}

Let $X_\idxe$ denote the distance between the terminal and eavesdropping satellite $\idxe$, given $\idxe \in \BPPeml$.
Then, the CDF and PDF of $X_\idxe$ are obtained in the following two lemmas using the result in Lemma \ref{Lem:CDF_d_e}.

\begin{lem}\label{Lem:CDF_X_e}
The CDF of $X_\idxe$ is given by 
\begin{align}\label{eq:CDF_Xe}
F_{X_\idxe}(x)=& \begin{cases} 0, & \mbox{if } x \le \av,\\ 
     \frac{x^2-\av^2}{\dth^2-\av^2}, & \mbox{if } \av < x \le \dth, \\
     1, & \mbox{if } x > \dth,
     \end{cases}
\end{align}
where $\dth=\sqrt{r^2+(r+\av)^2-2r(r+\av)\cos\psth}$.
\end{lem}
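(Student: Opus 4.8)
The plan is to obtain $F_{X_\idxe}$ by \emph{truncating} the unconditional distance law of Lemma~\ref{Lem:CDF_d_e}, exploiting the fact that an eavesdropping satellite lies in the main-lobe cap $\Aeml$ if and only if its distance to the terminal does not exceed $\dth$. In other words, I expect $F_{X_\idxe}$ to be nothing but $F_{d_\idxe}$ conditioned on $\{d_\idxe\le\dth\}$, which immediately explains why $\dth$ (rather than $2r+\av$) plays the role of the upper support point.

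First I would pin down the geometry that links the polar angle to the distance. Writing O and T for the Earth's center and the terminal and letting $\psi$ be the polar angle of $\idxe$ on the sphere of radius $r+\av$, the law of cosines in triangle $\mathrm{OT}\idxe$ gives $d_\idxe^2=r^2+(r+\av)^2-2r(r+\av)\cos\psi$. Its derivative with respect to $\psi$ is $2r(r+\av)\sin\psi\ge 0$ on $[0,\pi]$, so $d_\idxe$ is strictly increasing in $\psi$; in particular $\psi=0$ yields $d_\idxe=\av$ and $\psi=\psth$ yields $d_\idxe=\dth$, matching the definition of $\dth$ in the statement. Since $\Aeml$ is exactly the cap $\{0\le\psi\le\psth\}$, this monotone correspondence yields the key equivalence $\{\idxe\in\Aeml\}=\{d_\idxe\le\dth\}$, and it also identifies $\av$ and $\dth$ as the minimal and maximal distances attainable within $\Aeml$.

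Next I would invoke the per-point uniformity of the homogeneous BPP: conditioned on $\idxe\in\Aeml$, the point $\idxe$ remains uniformly distributed over the subregion $\Aeml$, so the conditional law of $X_\idxe$ is precisely that of $d_\idxe$ restricted to $\{d_\idxe\le\dth\}$. Hence, for $\av<x\le\dth$,
\begin{align}
F_{X_\idxe}(x)=\mathbb{P}[d_\idxe\le x \mid d_\idxe\le\dth]=\frac{F_{d_\idxe}(x)}{F_{d_\idxe}(\dth)},
\end{align}
and substituting the closed form from Lemma~\ref{Lem:CDF_d_e} cancels the common factor $4r(r+\av)$, leaving $\frac{x^2-\av^2}{\dth^2-\av^2}$. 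The boundary pieces ($0$ for $x\le\av$ and $1$ for $x>\dth$) then follow directly from the support bounds established above.

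The main obstacle, such as it is, lies not in the algebra but in the conditioning step: one must justify that restricting attention to $\Aeml$ leaves the spatial distribution of $\idxe$ uniform on that cap, which is exactly where the homogeneity and independence of the BPP points (as captured by Lemma~\ref{Lem:finite_dim_dist}) are used. Once the equivalence $\{\idxe\in\Aeml\}=\{d_\idxe\le\dth\}$ and this uniformity are in hand, the truncated-CDF computation is routine.
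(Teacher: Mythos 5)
Your proposal is correct and follows essentially the same route as the paper's proof: the law-of-cosines monotonicity of $d_\idxe$ in $\psi$ gives the equivalence $\{\idxe\in\Aeml\}=\{d_\idxe\le\dth\}$, after which $F_{X_\idxe}(x)=F_{d_\idxe}(x)/F_{d_\idxe}(\dth)$ on the middle branch and Lemma~\ref{Lem:CDF_d_e} yields the stated closed form. Your extra remark about the conditional uniformity of the point on $\Aeml$ is a valid (and implicit in the paper) justification of the conditioning step.
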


\begin{proof}[Proof:\nopunct]
See Appendix \ref{App:Lem3}.
\end{proof}

\begin{lem}\label{Lem:PDF_Xe}
The PDF of $X_\idxe$ is given by 
\begin{align}\label{eq:PDF_Xe}
f_{X_\idxe}(x)=& \begin{cases} \frac{2x}{\dth^2-\av^2}, & \mbox{if } \av < x \le \dth, \\
     0, & \mbox{otherwise}.
     \end{cases}
\end{align}
\end{lem}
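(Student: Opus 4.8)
The plan is to obtain the PDF directly by differentiating the CDF established in Lemma \ref{Lem:CDF_X_e}, since the density of a continuous random variable equals the derivative of its CDF wherever the latter is differentiable. First I would write $f_{X_\idxe}(x) = \frac{d}{dx} F_{X_\idxe}(x)$ and substitute the three-piece expression for $F_{X_\idxe}(x)$ from \eqref{eq:CDF_Xe}. On the two flat pieces, where $x \le \av$ (CDF identically $0$) and where $x > \dth$ (CDF identically $1$), the derivative vanishes. On the middle piece $\av < x \le \dth$, differentiating $\frac{x^2 - \av^2}{\dth^2 - \av^2}$ with respect to $x$, treating $\av$ and $\dth$ as constants, yields $\frac{2x}{\dth^2 - \av^2}$. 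Collecting the three pieces then gives exactly \eqref{eq:PDF_Xe}.

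The only points requiring a moment of care are the boundaries $x = \av$ and $x = \dth$, where $F_{X_\idxe}$ has corners and is therefore not differentiable. Since these are isolated points of Lebesgue measure zero, the value assigned to the density there is immaterial for any probability computation, and I would simply adopt the convention recorded in \eqref{eq:PDF_Xe}. As a consistency check I would verify that $\int_{\av}^{\dth} \frac{2x}{\dth^2 - \av^2}\,dx = \frac{\dth^2 - \av^2}{\dth^2 - \av^2} = 1$, confirming that the stated expression is a valid density and is consistent with the CDF endpoints $F_{X_\idxe}(\av)=0$ and $F_{X_\idxe}(\dth)=1$.

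I do not anticipate any substantive obstacle here: the result is a routine piecewise differentiation of the CDF derived in the preceding lemma, and the entire argument reduces to applying $f_{X_\idxe} = F_{X_\idxe}'$ and the elementary derivative of a quadratic. The proof can accordingly be stated in a single short paragraph referring back to \eqref{eq:CDF_Xe}.
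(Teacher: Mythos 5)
Your proposal is correct and matches the paper's proof exactly: the paper likewise obtains the PDF by differentiating the CDF in \eqref{eq:CDF_Xe}. Your added remarks on the non-differentiable corner points and the normalization check are sound but not needed beyond the one-line differentiation.
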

\begin{proof}[Proof:\nopunct]
The proof is complete by differentiating \eqref{eq:CDF_Xe}.
\end{proof}

Given $\idxe \in \BPPesl$, the CDF and PDF of the distance between the terminal and eavesdropping satellite $\idxe$, denoted by $Y_\idxe$, are provided next.

\begin{lem}\label{Lem:CDF_Ye}
The CDF of $Y_\idxe$ is given by 
\begin{align}\label{eq:CDF_Ye}
F_{Y_\idxe}(x)
    =& \begin{cases} 0, & \mbox{if } x \le \dth,\\ 
     \frac{x^2-\dth^2}{\dmax^2-\dth^2}, & \mbox{if } \dth < x \le \dmax, \\
     1, & \mbox{if } x > \dmax,
     \end{cases}
\end{align}
where $\dmax=\sqrt{\av(2r+\av)}$.
\end{lem}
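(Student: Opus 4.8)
The plan is to exploit the fact that, for a homogeneous BPP, the position of any point conditioned to lie in a sub-region of $\A$ is uniformly distributed (by surface area) over that sub-region, exactly as was used for $X_\idxe$ in Lemma \ref{Lem:CDF_X_e}. The only new ingredient here is that the conditioning region is the side-lobe cap $\Aesl=\{\rho=r+\av,\ \psth<\psi\le\psmax\}$ rather than the main-lobe cap. Since the terminal--satellite distance is a strictly increasing function of the polar angle $\psi$ on the whole range $[0,\pi]$ (via the law of cosines $x^2=r^2+(r+\av)^2-2r(r+\av)\cos\psi$, in which $-\cos\psi$ is increasing), the map $\psi\mapsto x$ is a bijection, so the event $\{\idxe\in\Aesl\}$ corresponds \emph{exactly} to the distance interval $(\dth,\dmax]$, where $\dth$ and $\dmax$ are the distances attained at $\psi=\psth$ and $\psi=\psmax$, respectively.

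I would then obtain $F_{Y_\idxe}$ by truncating the unconditional distance CDF of Lemma \ref{Lem:CDF_d_e} to this interval. Concretely, for $\dth<x\le\dmax$,
\begin{align}
F_{Y_\idxe}(x) &= \P\big[d_\idxe \le x \,\big|\, \dth < d_\idxe \le \dmax\big] \nonumber \\
&= \frac{F_{d_\idxe}(x) - F_{d_\idxe}(\dth)}{F_{d_\idxe}(\dmax) - F_{d_\idxe}(\dth)}.
\end{align}
Substituting $F_{d_\idxe}(x)=(x^2-\av^2)/(4r(r+\av))$ from Lemma \ref{Lem:CDF_d_e}, the $-\av^2$ offset cancels inside each difference and the common factor $1/(4r(r+\av))$ cancels between numerator and denominator, leaving $F_{Y_\idxe}(x)=(x^2-\dth^2)/(\dmax^2-\dth^2)$, which is the claimed middle branch; the branches $x\le\dth$ and $x>\dmax$ follow at once from the support $(\dth,\dmax]$ of the conditioned variable. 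Equivalently, one could integrate the spherical-zone surface area $2\pi(r+\av)^2(\cos\psth-\cos\psi)$ directly and normalize by that of $\Aesl$, reproducing the same expression.

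The computation is routine once the setup is fixed; the only points requiring care are the two endpoint identifications. I would verify $\dth=\sqrt{r^2+(r+\av)^2-2r(r+\av)\cos\psth}$ (consistent with Lemma \ref{Lem:CDF_X_e}) and, crucially, that $\dmax=\sqrt{\av(2r+\av)}$ arises from $\psi=\psmax$: substituting $\cos\psmax=r/(r+\av)$ into the law of cosines gives $\dmax^2=(r+\av)^2-r^2=\av(2r+\av)$. This endpoint, the largest distance to a satellite above the horizon, is the one genuinely new quantity relative to the earlier lemmas, so confirming it is the main thing to get right; everything else is a direct specialization of the argument already used for the main-lobe region.
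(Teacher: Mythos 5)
Your proposal is correct and follows essentially the same route as the paper, which simply notes that the proof is analogous to that of Lemma \ref{Lem:CDF_X_e}: you condition on $\idxe\in\BPPesl$, use the monotonicity of the law-of-cosines map $\psi\mapsto d_\idxe$ to identify the region with the distance interval $(\dth,\dmax]$, and truncate the unconditional CDF of Lemma \ref{Lem:CDF_d_e}, with the two-sided normalization $\bigl(F_{d_\idxe}(x)-F_{d_\idxe}(\dth)\bigr)/\bigl(F_{d_\idxe}(\dmax)-F_{d_\idxe}(\dth)\bigr)$ being the only (routine) adjustment relative to the main-lobe case. Your verification that $\dmax^2=(r+\av)^2-r^2=\av(2r+\av)$ at $\psi=\psmax$ correctly pins down the one new endpoint.
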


\begin{proof}[Proof:\nopunct]
The proof is similar to that of Lemma \ref{Lem:CDF_X_e}.
\end{proof}

\begin{lem}\label{Lem:PDF_Y_e}
The PDF of $Y_\idxe$ is given by 
\begin{align}\label{eq:PDF_Ye}
f_{Y_\idxe}(x)=& \begin{cases} \frac{2x}{\dmax^2-\dth^2}, & \mbox{if } \dth < x \le \dmax, \\
     0, & \mbox{otherwise}.
     \end{cases}
\end{align}
\end{lem}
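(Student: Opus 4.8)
The plan is to obtain the PDF by differentiating the CDF already established in Lemma~\ref{Lem:CDF_Ye}, exactly mirroring the way Lemma~\ref{Lem:PDF_Xe} was derived from its CDF. All of the geometric work—translating the uniform distribution over the spherical-cap annulus between polar angles $\psth$ and $\psmax$ into a distance law—has already been absorbed into the expression \eqref{eq:CDF_Ye}, so nothing beyond elementary calculus is required here.

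Concretely, I would differentiate \eqref{eq:CDF_Ye} piecewise in $x$. On the open interval $\dth < x < \dmax$ the CDF equals $(x^2-\dth^2)/(\dmax^2-\dth^2)$, whose derivative is $2x/(\dmax^2-\dth^2)$, which is precisely the first branch claimed in \eqref{eq:PDF_Ye}. Outside this interval the CDF is constant (equal to $0$ for $x\le\dth$ and to $1$ for $x>\dmax$), so its derivative vanishes there, accounting for the second branch. The CDF is continuous everywhere but has corners at $x=\dth$ and $x=\dmax$; placing the density on the half-open interval $\dth < x \le \dmax$ and setting it to $0$ elsewhere is the standard convention and does not alter the underlying distribution.

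There is essentially no obstacle in this argument. The only point worth a confirming sentence is that \eqref{eq:CDF_Ye} is absolutely continuous on $(\dth,\dmax)$, so differentiation legitimately produces a density, and that the resulting density integrates to unity, $\int_{\dth}^{\dmax}\frac{2x}{\dmax^2-\dth^2}\,dx=1$, which serves as a consistency check against Lemma~\ref{Lem:CDF_Ye}. The proof is therefore complete upon differentiating \eqref{eq:CDF_Ye}.
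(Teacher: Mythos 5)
Your proposal is correct and matches the paper's own argument, which likewise obtains the PDF simply by differentiating the CDF in \eqref{eq:CDF_Ye}. The extra remarks on absolute continuity and the normalization check are harmless additions but not needed.
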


\begin{proof}[Proof:\nopunct]
The PDF of $Y_\idxe$ is obtained by differentiating \eqref{eq:CDF_Ye}.
\end{proof}

The finite-dimensional distribution and the distance distributions for the eavesdropping satellites will be used to derive the secrecy performance in the following section.

\section{Secrecy Performance Analyses}\label{Sec:Secrecy_performance_anal}
In this section, the secrecy performance of the system is analyzed as follows.
We first derive the CDFs of the SNRs at the serving satellite and the most detrimental eavesdropping satellite. 
Using the results in the previous section with the CDFs of the SNRs, we derive the analytical expressions for the ergodic and outage secrecy capacities with the secrecy outage probability.
Finally, we extend the results to the eavesdropping satellites with the steerable-beam antennas.

\subsection{SNR Distributions}
To analyze the secrecy performance, we need the distributions of the SNRs at the serving and the most detrimental eavesdropping satellites, which are obtained in the following lemmas. 
\begin{lem}\label{Lem:CDF_SNRs}
The CDF of the SNR at the serving satellite located at the altitude $\as$ with the elevation angle $\thes$ is given by 
\begin{align}\label{eq:CDF_SNRs_fin}
\cdfgs 
    = K\sum\limits_{n = 0}^\infty  {\frac{{{{(m)}_n}{\delta ^n}{{(2b)}^{1 + n}}}}{{{{(n!)}^2}}}\gamma \left( {1+n, \frac{w_1 d_{\mathrm{s}}^{\alpha} x}{2b}} \right)},
\end{align}
where $w_1 = 16 \pi^2 f_\mathrm{c}^2 N_0 W /(c^2 P G_{\mathrm{t}} G_{\mathrm{r,ml}}  )$ and $\ds=\sqrt{r^2 \sin^2\thes + \as^2 + 2r\as}-r\sin\thes$.
\end{lem}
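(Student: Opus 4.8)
The plan is to exploit the fact that, unlike an eavesdropping satellite, the serving satellite occupies a fixed and deterministic position, so the only source of randomness in its SNR is the shadowed-Rician channel gain $\hs$, whose CDF is already supplied in \eqref{eq:CDF_ch_gain}. Consequently the CDF of $\gs$ will follow from a single monotone change of variables, with no integration against a distance distribution required (in contrast to the eavesdropper lemmas, where the random distances $X_\idxe$, $Y_\idxe$ had to be marginalized out).

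First I would pin down the geometry to recover $\ds$. Writing O, T, and S for the Earth's center, the terminal, and the serving satellite, we have $\overline{\mathrm{OT}}=r$ and $\overline{\mathrm{OS}}=r+\as$. Since the elevation angle $\thes$ is measured from the local horizontal plane at T, the interior angle at T in triangle OTS equals $\tfrac{\pi}{2}+\thes$, so the law of cosines gives $(r+\as)^2 = r^2 + \ds^2 + 2r\ds\sin\thes$. Solving this quadratic in $\ds$ and keeping the positive root yields $\ds=\sqrt{r^2\sin^2\thes+\as^2+2r\as}-r\sin\thes$, matching the stated expression.

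Next I would write the SNR explicitly. Because the serving satellite steers its main lobe toward the terminal, its receive gain is $\Grml$; substituting $\ell(\ds)=G_{\mathrm{t}}\Grml(c/4\pi f_{\mathrm{c}})^2\ds^{-\alpha}$ into \eqref{eq:SNR} collapses to $\gs = \hs/(w_1\ds^{\alpha})$, where $w_1=16\pi^2 f_{\mathrm{c}}^2 N_0 W/(c^2 P G_{\mathrm{t}}\Grml)$ gathers exactly the deterministic constants quoted in the lemma. As $w_1\ds^{\alpha}>0$ is a constant, the event $\{\gs\le x\}$ coincides with $\{\hs\le w_1\ds^{\alpha}x\}$, so $\cdfgs = F_{\hs}(w_1\ds^{\alpha}x)$. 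Evaluating the channel-gain CDF \eqref{eq:CDF_ch_gain} at the argument $w_1\ds^{\alpha}x$ then reproduces \eqref{eq:CDF_SNRs_fin}, completing the derivation.

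This lemma is essentially bookkeeping, so I do not anticipate a genuine obstacle; the only point that demands care is the angle convention in the geometry step, namely recognizing the interior angle at T as $\tfrac{\pi}{2}+\thes$ rather than $\tfrac{\pi}{2}-\thes$, since the wrong choice would flip the sign of the $r\sin\thes$ term in $\ds$. The substitution and change-of-variables steps are then routine.
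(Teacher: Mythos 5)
Your proposal is correct and follows the same route as the paper: since the serving satellite's position is deterministic, the paper likewise reduces $\cdfgs=\mathbb{P}[\gs<x]$ to $F_{\hs}(w_1\ds^{\alpha}x)$ by a monotone change of variables and then substitutes the shadowed-Rician CDF \eqref{eq:CDF_ch_gain}. Your additional law-of-cosines derivation of $\ds$ (with the interior angle $\tfrac{\pi}{2}+\thes$ at the terminal) is a correct justification of the distance expression that the paper simply states without proof.
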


\begin{proof}[Proof:\nopunct]
The CDF of the SNR at the serving satellite is given by 
\begin{align}\label{eq:CDF_SNRs}
\cdfgs 
    &= \mathbb{P}[\gs<x]
    =\mathbb{P}\left[\hs < \frac{N_0 W x}{P \ell(d_{\mathrm{s}})}\right]
    =F_\hs (w_1 d_{\mathrm{s}}^{\alpha} x).
\end{align}
From \eqref{eq:CDF_ch_gain} and \eqref{eq:CDF_SNRs}, the CDF of the SNR at  the serving satellite is obtained.
\end{proof}

\begin{rem}\label{Rem:der_thes}
It is expected that the SNR at the serving satellite with the higher elevation angle is more likely to be larger than that with the lower elevation angle. To show this, with the fact that the derivative of the lower incomplete gamma function is given by $\frac{d\gamma(a,x)}{dx}=e^{-x}x^{a-1}$,
the derivative of $\cdfgs$ with respect to $\thes$ is obtained as
\begin{align}
\frac{d\cdfgs}{d\thes}
    = \alpha K e^{-\frac{w_1 \ds^\alpha x}{2b}}\sum\limits_{n = 0}^\infty \frac{{{(m)}_n}{\delta ^n}{{(w_1 x)}^{1 + n} }}{{{{(n!)}^2\ds^{1-\alpha (1+n)}}}} \frac{d}{d\thes} \ds,
\end{align}
where
\begin{align}\label{eq:der_ds}
\frac{d}{d\thes}\ds
    &=r \cos\thes \left(\frac{r \sin\thes}{\sqrt{r^2 \sin^2\thes + \as^2 + 2r\as}}-1\right).
\end{align}
Since $r \sin\thes < \sqrt{r^2 \sin^2\thes + \as^2 + 2r\as}$,  $\frac{d\cdfgs}{d\thes}<0$ for all $x$, i.e., $\cdfgs$ is a monotonically decreasing function of $\thes$.
Note that, when the CDF of a random variable $X_1$ is less than that of another random variable $X_2$, i.e., $F_{X_1}(x) < F_{X_2}(x), \forall{x}$, $X_1$ is more likely to be larger than $X_2$.
\end{rem}

\begin{rem}
Similarly, the derivative of $\cdfgs$ with respect to $\as$ is given by 
\begin{align}
\frac{d\cdfgs}{d\as}
    \!=\!  \frac{\alpha K e^{-\frac{w_1 \ds^\alpha x}{2b}}(\as+r)}{\sqrt{r^2 \sin^2\thes + \as^2 + 2r\as}} \sum\limits_{n = 0}^\infty  \frac{{{{(m)}_n}{\delta ^n}{{(w_1 x)}^{1 + n}}}}{{{{(n!)}^2 \ds^{1-\alpha(1+n)}}}}.
\end{align}
Since $\frac{d\cdfgs}{d\as}>0$, it is expected that the SNR at the serving satellite with the lower altitude is more likely to be larger than that with the higher altitude.
\end{rem}

We denote $\gemdml$ and $\gemdsl$ as the SNRs at the most detrimental eavesdropping satellite in $\Aeml$ and $\Aesl$, respectively, when the numbers of eavesdropping satellites in $\Aeml$ and $\Aesl$ are $p$ and $q$, i.e., $\BPP(\Aeml)=p$ and $\BPP(\Aesl)=q$. Then, the most detrimental eavesdropping satellite in $\Ae$ is given by $\g_{\mathrm{e^*}}^{(p,q)}=\max\{\gemdml, \gemdsl\}$.
With these definitions, the following two lemmas show the CDFs of $\gemdml$ and $\gemdsl$, respectively.

\begin{lem}\label{Lem:CDF_SNReml}
Given the number of eavesdropping satellites in $\Aeml$ is $p$, the CDF of the SNR at the most detrimental eavesdropping satellite in $\Aeml$ is given by 
\begin{align}\label{eq:CDF_SNReml_fin}
{F_{\gemdml}}(x) 
    =& \left[\frac{2K}{\dth^2-\av^2}\sum\limits_{n = 0}^\infty  {\frac{{{{(m)}_n}{\delta ^n}{{(2b)}^{1 + n}}}}{{{{(n!)}^2}}}} \mathcal{I}_{\mathrm{ml}}(x,n)\right]^{p},
\end{align}
where 
\begin{align}
&\mathcal{I}_{\mathrm{ml}}(x,n)
    =\frac{\dth^2}{2} \gamma\left(1+n,\frac{w_1 \dth^{\alpha} x}{2b} \right)-\frac{\av^2}{2}\gamma\left(1+n,\frac{w_1 \av^{\alpha} x}{2b} \right)\nonumber\\
    &\!-\!\frac{1}{2}\!\left(\frac{2b}{w_1 x}\right)^{\frac{2}{\alpha}}\!\! \left\{\gamma\!\left(1+n+\frac{2}{\alpha},\frac{w_1 \dth^{\alpha} x}{2b}\right)-\gamma\!\left(1+n+\frac{2}{\alpha},\frac{w_1 \av^{\alpha} x}{2b}\right)\right\}.  
\end{align}
\end{lem}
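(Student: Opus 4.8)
The plan is to exploit the order-statistics structure of $\gemdml$. Conditioned on $\BPP(\Aeml)=p$, the $p$ eavesdropping satellites in $\Aeml$ are i.i.d.: by the conditioning property of the BPP their positions are independent and uniform on $\Aeml$, so their terminal distances are i.i.d.\ copies of $X_\idxe$ (Lemma \ref{Lem:PDF_Xe}), and their fading gains $h_\idxe$ are independent draws from \eqref{eq:CDF_ch_gain}. Since $\gemdml$ is the maximum of these $p$ SNRs, its CDF is the $p$-th power of the common single-satellite SNR CDF, i.e.\ $F_{\gemdml}(x)=[F_{\g_\idxe}(x)]^{p}$, where $\g_\idxe$ is the SNR of one satellite in $\Aeml$. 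It therefore suffices to show that $F_{\g_\idxe}(x)$ equals the bracketed factor in \eqref{eq:CDF_SNReml_fin}.

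First I would compute $F_{\g_\idxe}(x)$ by conditioning on the distance. From the SNR definition \eqref{eq:SNR} with the main-lobe gain $G_{\mathrm{r,ml}}$, the event $\{\g_\idxe<x\}$ is equivalent to $\{h_\idxe<w_1 X_\idxe^{\alpha}x\}$, with $w_1$ as in Lemma \ref{Lem:CDF_SNRs}. Using independence of the fading $h_\idxe$ and the geometry $X_\idxe$ and conditioning on $X_\idxe=t$, I would write $F_{\g_\idxe}(x)=\int_{\av}^{\dth}F_{h_\idxe}(w_1 t^{\alpha}x)\,f_{X_\idxe}(t)\,dt$, where $f_{X_\idxe}(t)=\frac{2t}{\dth^2-\av^2}$ on $(\av,\dth]$ by Lemma \ref{Lem:PDF_Xe}. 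Substituting the series CDF \eqref{eq:CDF_ch_gain} and interchanging the (uniformly convergent) sum with the integral, the prefactor $\frac{2K}{\dth^2-\av^2}$ and the series coefficients separate out exactly as stated, leaving the single integral $\mathcal{I}_{\mathrm{ml}}(x,n)=\int_{\av}^{\dth}t\,\gamma\!\left(1+n,\frac{w_1 t^{\alpha}x}{2b}\right)dt$ to be evaluated in closed form.

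The main obstacle is this closed-form evaluation of $\mathcal{I}_{\mathrm{ml}}(x,n)$. I would integrate by parts with $u=\gamma(1+n,\frac{w_1 t^{\alpha}x}{2b})$ and $dv=t\,dt$, differentiating the incomplete gamma via the chain rule together with the identity $\frac{d\gamma(a,y)}{dy}=e^{-y}y^{a-1}$ (already used in Remark \ref{Rem:der_thes}). The boundary term produces the first two contributions, $\frac{\dth^2}{2}\gamma(1+n,\frac{w_1\dth^{\alpha}x}{2b})-\frac{\av^2}{2}\gamma(1+n,\frac{w_1\av^{\alpha}x}{2b})$. The residual integral contains $t^{\alpha(1+n)+1}e^{-w_1 t^{\alpha}x/(2b)}$; the substitution $s=\frac{w_1 t^{\alpha}x}{2b}$ converts it into $\int s^{\,n+2/\alpha}e^{-s}\,ds$ over $\left[\frac{w_1\av^{\alpha}x}{2b},\frac{w_1\dth^{\alpha}x}{2b}\right]$, which is precisely a difference of lower incomplete gamma functions of order $1+n+\frac{2}{\alpha}$. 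Tracking the constants generated by the substitution yields the factor $\frac{1}{2}\left(\frac{2b}{w_1 x}\right)^{2/\alpha}$, matching the last term of $\mathcal{I}_{\mathrm{ml}}(x,n)$. Raising $F_{\g_\idxe}(x)$ to the $p$-th power then gives \eqref{eq:CDF_SNReml_fin}, completing the argument.
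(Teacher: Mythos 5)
Your proposal is correct and follows essentially the same route as the paper's Appendix proof: condition on $\BPP(\Aeml)=p$, use the i.i.d.\ structure to write the CDF of the maximum as the $p$-th power of the single-satellite SNR CDF, average the fading CDF over the distance density $f_{X_\idxe}$, and interchange sum and integral. The only difference is in evaluating $\mathcal{I}_{\mathrm{ml}}(x,n)=\int_{\av}^{\dth}t\,\gamma(1+n,\tfrac{w_1 t^{\alpha}x}{2b})\,dt$: you integrate by parts, whereas the paper swaps the order of integration in the equivalent double integral over the regions $S_1$ and $S_2$; both yield the identical closed form.
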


\begin{proof}[Proof:\nopunct]
See Appendix \ref{App:Lem8}.
\end{proof}

\begin{lem}\label{Lem:CDF_SNResl}
Given the number of eavesdropping satellites in $\Aesl$ is $q$, the CDF of the SNR at the most detrimental eavesdropping satellite in $\Aesl$ is given by 
\begin{align}\label{eq:CDF_SNResl_fin}
{F_{\gemdsl}}(x) 
    =\left[ \frac{2K}{\dmax^2-\dth^2}\sum\limits_{n = 0}^\infty  {\frac{{{{(m)}_n}{\delta^n}{{(2b)}^{1 + n}}}}{{{{(n!)}^2}}}}\mathcal{I}_{\mathrm{sl}}(x,n) \right]^q,
\end{align}
where 
\begin{align}
&\mathcal{I}_{\mathrm{sl}}(x,n)
    =\frac{\dmax^2}{2} \gamma\left(1+n,\frac{w_2 \dmax^{\alpha} x}{2b}\right)-\frac{\dth^2}{2}\gamma\left(1+n,\frac{w_2 \dth^{\alpha} x}{2b}\right)\nonumber\\
    &\!-\!\frac{1}{2}\!\left(\frac{2b}{w_2 x}\right)^{\frac{2}{\alpha}}\! \left\{\gamma\!\left(1\!+n+\frac{2}{\alpha},\frac{w_2 \dmax^{\alpha} x}{2b}\right)\!-\!\gamma\!\left(1\!+n+\frac{2}{\alpha},\frac{w_2 \dth^{\alpha} x}{2b}\right)\right\}
\end{align}
with $w_2 = 16 \pi^2 f_\mathrm{c}^2 N_0 W /(c^2 P G_{\mathrm{t}} G_{\mathrm{r,sl}}  )$.
\end{lem}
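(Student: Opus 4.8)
The plan is to follow the same route as the proof of Lemma~\ref{Lem:CDF_SNReml}, replacing the main-lobe quantities by their side-lobe counterparts. Two ingredients change: the receive antenna gain is now $G_{\mathrm{r,sl}}$ (hence the constant $w_2$ in place of $w_1$), and the relevant distance law is that of $Y_\idxe$ supported on $(\dth,\dmax]$, as given in Lemmas~\ref{Lem:CDF_Ye} and~\ref{Lem:PDF_Y_e}, rather than that of $X_\idxe$.

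First I would exploit the fact that, conditioned on $\BPP(\Aesl)=q$, the $q$ effective eavesdropping satellites in $\Aesl$ are independent with identical marginals: their positions are i.i.d.\ uniform on $\Aesl$ by the conditioning property of the BPP, and their channel gains are mutually independent. Since $\gemdsl$ is the maximum of the $q$ individual SNRs, its CDF factorizes as $F_{\gemdsl}(x)=\left[F_{\g_\idxe}(x)\right]^q$, where $\g_\idxe$ denotes the SNR of a single such satellite. It then suffices to compute $F_{\g_\idxe}(x)$ and raise it to the power $q$.

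Next I would compute the single-satellite CDF by conditioning on the distance. Exactly as in~\eqref{eq:CDF_SNRs}, conditioning on $Y_\idxe=d$ gives $\P[\g_\idxe<x\mid Y_\idxe=d]=F_{h_\idxe}(w_2 d^\alpha x)$, where $w_2$ now appears because the side-lobe gain $G_{\mathrm{r,sl}}$ enters the path loss. Averaging over the density $f_{Y_\idxe}$ in~\eqref{eq:PDF_Ye}, substituting the series~\eqref{eq:CDF_ch_gain} for $F_{h_\idxe}$, and interchanging summation and integration reduces the problem to evaluating, for each $n$, the integral $\mathcal{I}_{\mathrm{sl}}(x,n)=\int_{\dth}^{\dmax} d\,\g\!\left(1+n,\frac{w_2 d^\alpha x}{2b}\right)dd$. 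This reproduces the claimed form up to the prefactor $\tfrac{2K}{\dmax^2-\dth^2}$ coming from $f_{Y_\idxe}$.

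The main obstacle is the closed-form evaluation of $\mathcal{I}_{\mathrm{sl}}(x,n)$, which is where the real computation lives. I would integrate by parts with $dv=d\,dd$ so that $v=d^2/2$, using the identity $\frac{d}{dy}\g(a,y)=e^{-y}y^{a-1}$ together with the chain rule for $y=w_2 d^\alpha x/(2b)$. The boundary term directly yields $\tfrac{\dmax^2}{2}\g(\cdot)-\tfrac{\dth^2}{2}\g(\cdot)$, matching the first two terms. In the remaining integral the substitution $t=w_2 d^\alpha x/(2b)$ turns $d^2=\big(\tfrac{2bt}{w_2 x}\big)^{2/\alpha}$, so the integrand collapses to $e^{-t}t^{\,n+2/\alpha}$ and produces the incomplete Gamma functions of order $1+n+\tfrac{2}{\alpha}$ in the third term. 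Collecting the prefactor and raising the result to the power $q$ completes the proof; the only point requiring care is the justification of the term-by-term integration, which follows from the uniform convergence of the series in~\eqref{eq:CDF_ch_gain} on the bounded interval $[\dth,\dmax]$.
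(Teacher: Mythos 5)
Your proposal is correct and takes essentially the same approach as the paper, which proves this lemma by simply repeating the argument of Lemma~\ref{Lem:CDF_SNReml} with $w_1$ and $f_{X_\idxe}$ replaced by $w_2$ and $f_{Y_\idxe}$ on the support $(\dth,\dmax]$. The only cosmetic difference is that you evaluate $\mathcal{I}_{\mathrm{sl}}(x,n)$ by integration by parts followed by the substitution $t=w_2 z^{\alpha}x/(2b)$, whereas the paper writes the lower incomplete Gamma function as a double integral and exchanges the order of integration over two subdomains; both computations rely on the same identity $\frac{d}{dy}\gamma(a,y)=e^{-y}y^{a-1}$ and yield the identical closed form.
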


\begin{proof}[Proof:\nopunct]
The proof is similar to that of Lemma \ref{Lem:CDF_SNReml}.
\end{proof}

Using these lemmas, the CDF of $\g_{\mathrm{e^*}}^{(p,q)}$ is obtained in the following lemma.

\begin{lem}\label{Lem:CDF_SNRe*}
Given the numbers of eavesdropping satellites in $\Aeml$ and $\Aesl$ are $p$ and $q$, the CDF of the SNR at the most detrimental eavesdropping satellite in $\Ae$ is the product of the CDFs of $\gemdml$ and $\gemdsl$, i.e.,
\begin{align}\label{eq:CDF_SNRe*_fin}
F_{\g_{\mathrm{e^*}}^{(p,q)}}(x)
    =F_{\gemdml}(x)F_{\gemdsl}(x).
\end{align}
\end{lem}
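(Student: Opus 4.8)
The plan is to establish \eqref{eq:CDF_SNRe*_fin} by recognizing that, conditioned on the event $\{\BPP(\Aeml)=p, \BPP(\Aesl)=q\}$, the most detrimental eavesdropping satellite in $\Ae$ achieves the SNR $\g_{\mathrm{e^*}}^{(p,q)}=\max\{\gemdml,\gemdsl\}$ by the definition given just before the statement. Since the CDF of a maximum factorizes into a product of CDFs precisely when the two random variables are independent, the whole task reduces to arguing that $\gemdml$ and $\gemdsl$ are conditionally independent given the counts $p$ and $q$.

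First I would write $F_{\g_{\mathrm{e^*}}^{(p,q)}}(x)=\mathbb{P}[\max\{\gemdml,\gemdsl\}<x]=\mathbb{P}[\gemdml<x, \gemdsl<x]$, using the standard identity that the maximum of two quantities is below $x$ if and only if both are below $x$. Next I would justify the factorization $\mathbb{P}[\gemdml<x,\gemdsl<x]=\mathbb{P}[\gemdml<x]\,\mathbb{P}[\gemdsl<x]$ by invoking the independence structure of the BPP. The key observation is that $\Aeml$ and $\Aesl$ are disjoint regions (as established in the distribution-case analysis), and by Lemma \ref{Lem:finite_dim_dist} the points of $\BPP$ are placed independently, so the locations of the $p$ satellites in $\Aeml$ are independent of the locations of the $q$ satellites in $\Aesl$. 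Moreover, the shadowed-Rician fading gains $h_i$ are i.i.d.\ across satellites and independent of the positions. Consequently, $\gemdml$, which depends only on the positions and fading of the $p$ satellites in $\Aeml$, is independent of $\gemdsl$, which depends only on those of the $q$ satellites in $\Aesl$.

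Having secured conditional independence, I would simply substitute to obtain $F_{\g_{\mathrm{e^*}}^{(p,q)}}(x)=F_{\gemdml}(x)F_{\gemdsl}(x)$, with the two factors given explicitly by \eqref{eq:CDF_SNReml_fin} and \eqref{eq:CDF_SNResl_fin} from Lemmas \ref{Lem:CDF_SNReml} and \ref{Lem:CDF_SNResl}. The main obstacle, though a mild one, is making the conditional-independence argument rigorous: one must be careful that the conditioning is on the \emph{counts} $\BPP(\Aeml)=p$ and $\BPP(\Aesl)=q$ rather than on a joint count, so that the marginal SNRs $\gemdml$ and $\gemdsl$ are each well-defined functions of the disjoint point configurations. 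Once it is clear that conditioning on the counts leaves the within-region positions independently and uniformly distributed over their respective regions (which is exactly the content of the finite-dimensional distribution in Lemma \ref{Lem:finite_dim_dist}), the factorization follows immediately and the proof is complete.
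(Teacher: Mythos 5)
Your proposal is correct and follows essentially the same route as the paper: the paper also reduces the claim to the conditional independence of the satellites' positions (and i.i.d.\ fading) across the disjoint regions $\Aeml$ and $\Aesl$, merely writing the CDF as a per-satellite product first and then splitting the expectation over $X_\idxe$ and $Y_\idxe$, whereas you split the maximum into the two group maxima up front. The substance — independence of the two regions' contributions given the counts $p$ and $q$ — is identical.
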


\begin{proof}[Proof:\nopunct]
The CDF of $\g_{\mathrm{e^*}}^{(p,q)}$ is given by
\begin{align}
F_{\g_{\mathrm{e^*}}^{(p,q)}}&(x)
    = \mathbb{P}[\gemd \le x|\BPP(\Aeml)=p,\BPP(\Aesl)=q]\nonumber\\
    &= \mathbb{E}_{X_\idxe,Y_\idxe}\left[ \prod\limits_{\idxe \in {\BPPeml}} {\mathbb{P}[{\gamma_\idxe} \le x]}
    \prod\limits_{\idxe \in {\BPPesl}} {\mathbb{P}[{\gamma_\idxe} \le x]}
    \right]\nonumber\\
    &\mathop=\limits^{(a)} \mathbb{E}_{X_\idxe}\left[ \prod\limits_{\idxe \in {\BPPeml}} {\mathbb{P}[{\gamma_\idxe} \le x]}\right]\mathbb{E}_{Y_\idxe}\left[
    \prod\limits_{\idxe \in {\BPPesl}} {\mathbb{P}[{\gamma_\idxe} \le x]}
    \right],
\end{align}
where ($a$) follows from the independence between the random variables $X_\idxe$ and $Y_\idxe$.
From the definition of the CDFs of $\gemdml$ and $\gemdsl$, the proof is complete.
\end{proof}

The derived expressions for the CDFs of $\gs$ and $\g_{\mathrm{e^*}}^{(p,q)}$ will be useful to derive the ergodic and outage secrecy capacities in the next subsections.

\subsection{Ergodic Secrecy Capacity}
The instantaneous secrecy rate of the system is defined as the data rate securely transferred from the terminal to the serving satellite, which is given by [\ref{Ref:Wyner}]
\begin{align}\label{eq:R}
R
    =[\log_2(1+\gs)-\log_2(1+\gemd)]^+,
\end{align}
where $[x]^+ = \max\{0,x\}$. 
The ergodic secrecy capacity of the system is the average of the instantaneous secrecy rate, i.e., $\Cerg=\mathbb{E}[R]$, which is obtained in the following theorem.

\begin{thm}\label{thm:Cerg}
The ergodic secrecy capacity of the uplink satellite communication system in the presence of $N$ eavesdropping satellites randomly located at the altitude $\av$ is given by
\begin{align}\label{eq:Cerg_fin}
\Cerg
    \!=\!\sum_{p=0}^{N}\! \sum_{q=0}^{N-p} \frac{\mathcal{P}[N,p,q]}{\ln2}\!\! \int_0^{\infty}\!\frac{F_{\g_{\mathrm{e}^*}^{(p,q)}}(x)}{1+x}(1-F_\gs(x))dx,
\end{align}
where $\mathcal{P}[N,p,q]$, $F_{\gs}(x)$, and $F_{\g_{\mathrm{e}^*}^{(p,q)}}(x)$ are given in \eqref{eq:prob_pq_fin}, \eqref{eq:CDF_SNRs_fin}, and \eqref{eq:CDF_SNRe*_fin}, respectively.
\end{thm}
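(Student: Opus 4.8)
The plan is to start from the definition $\Cerg=\E[R]$ with $R$ given in \eqref{eq:R}, and to dispose of the $[\cdot]^+$ operator by representing the positive part of the log-difference as an integral. For any $a,b\ge 0$ one has the elementary identity
\begin{equation}
[\log_2(1+a)-\log_2(1+b)]^+=\frac{1}{\ln2}\int_0^{\infty}\frac{\mathbf{1}(b<x<a)}{1+x}\,dx,
\end{equation}
which holds because the integrand's support $\{b<x<a\}$ is empty exactly when $a\le b$, i.e. precisely when the positive part vanishes. Substituting $a=\gs$ and $b=\gemd$ rewrites $R$ as a single integral of $\frac{1}{1+x}\mathbf{1}(\gemd<x)\,\mathbf{1}(x<\gs)$, converting the capacity into an expectation of an integral.

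Next I would peel off the randomness of the eavesdropper configuration. Conditioning on the event $\{\BPP(\Aeml)=p,\BPP(\Aesl)=q\}$, whose probability is $\mathcal{P}[N,p,q]$ by Corollary~\ref{Cor:prob_pq}, and summing over all admissible pairs $(p,q)$ with $p,q\ge 0$ and $p+q\le N$ via the law of total expectation, produces the outer double sum in \eqref{eq:Cerg_fin}. Crucially, under this conditioning the most detrimental eavesdropper SNR $\gemd$ has CDF $F_{\g_{\mathrm{e}^*}^{(p,q)}}(x)$ by Lemma~\ref{Lem:CDF_SNRe*}; the case $p=q=0$ (no effective eavesdropper) is handled automatically, since then $F_{\g_{\mathrm{e}^*}^{(0,0)}}(x)=1$, consistent with $\gemd=0$.

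I would then exchange the conditional expectation with the $x$-integral, justified by Tonelli's theorem since the integrand is nonnegative. The remaining quantity is $\E[\mathbf{1}(\gemd<x)\,\mathbf{1}(x<\gs)\mid p,q]$. Because the serving-link SNR $\gs$ depends only on the serving satellite's fixed geometry and its own fading, whereas $\gemd$ depends only on the eavesdroppers' positions and fading, the two are independent, so the joint probability factors as $F_{\g_{\mathrm{e}^*}^{(p,q)}}(x)\,(1-F_{\gs}(x))$, where continuity of the SNR CDFs makes strict and non-strict inequalities interchangeable. Dividing by $1+x$, integrating, and reinserting the prefactor $\tfrac{1}{\ln2}$ together with the weights $\mathcal{P}[N,p,q]$ yields exactly \eqref{eq:Cerg_fin}.

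The main obstacle I anticipate is the bookkeeping around the $[\cdot]^+$ representation and the independence argument, rather than any hard analytic estimate: one must check that the integral identity remains valid uniformly on the degenerate branch $a\le b$, and that conditioning on the BPP occupancy numbers $(p,q)$ does not secretly couple $\gs$ to $\gemd$. Everything else reduces to a direct application of Corollary~\ref{Cor:prob_pq} and Lemmas~\ref{Lem:CDF_SNRs}--\ref{Lem:CDF_SNRe*}.
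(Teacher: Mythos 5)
Your proposal is correct and follows essentially the same route as the paper: the law of total expectation over the occupancy pair $(p,q)$ with weights $\mathcal{P}[N,p,q]$ from Corollary~\ref{Cor:prob_pq}, followed by the independence of $\gs$ and $\g_{\mathrm{e}^*}^{(p,q)}$ to reduce each conditional expectation to the single integral. The only cosmetic difference is that you reach that integral via the indicator representation of $[\cdot]^+$ and Tonelli, whereas the paper writes the conditional expectation as a double integral against the joint PDF and collapses it using independence; both land on \eqref{eq:Cerg_fin}.
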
 

\begin{proof}[Proof:\nopunct]
See Appendix \ref{App:Thm1}.
\end{proof}

\begin{rem}
Since the integrands in \eqref{eq:Cerg_fin} decrease with the CDF of $\gs$ and increase with the CDF of $\g_{\mathrm{e}^*}^{(p,q)}$, the ergodic secrecy capacity increases as $\gs$ increases or $\g_{\mathrm{e}^*}^{(p,q)}$ decreases, which is expected from the definition of the secrecy rate \eqref{eq:R}.
\end{rem}

\subsection{Outage Secrecy Capacity}
A secrecy outage occurs when the secrecy rate of the system is less than or equal to a target secrecy rate $\Rt$.
Mathematically, the secrecy outage probability of the system is defined as $\Pout=\mathbb{P}[R\le\Rt]$, which is obtained in the following theorem.

\begin{thm}\label{thm:Pout}
The secrecy outage probability of the uplink satellite communication system in the presence of $N$ eavesdropping satellites randomly located at the altitude $\av$ is given by
\begin{align}\label{eq:Pout_fin}
\Pout
    =&1-\sum_{p=0}^{N} \sum_{q=0}^{N-p} \mathcal{P}[N,p,q]\nonumber\\
    &\qquad\times \int_{2^{\Rt}-1}^{\infty} F_{\g_{\mathrm{e}^*}^{(p,q)}}(2^{-\Rt}(1+x)-1) f_{\gs}(x)dx,
\end{align}
where $\mathcal{P}[N,p,q]$ and $F_{\g_{\mathrm{e}^*}^{(p,q)}}(x)$ are given in \eqref{eq:prob_pq_fin} and \eqref{eq:CDF_SNRe*_fin}, respectively, and 
$\pdfgs$ is given by
\begin{align}\label{eq:PDF_SNRs}
\pdfgs 
    &= K e^{- w_1 d_{\mathrm{s}}^{\alpha } x} \sum\limits_{n = 0}^\infty  \frac{{{(m)}_n (\delta x)^n (2b w_1)^{1+n} }}{(n!)^2 d_{\mathrm{s}}^{-\alpha (1+n)}} 
\end{align}
\end{thm}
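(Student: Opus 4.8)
The plan is to work with the complementary event and write $\Pout = 1 - \mathbb{P}[R > \Rt]$, then reduce the probability to a $\pdfgs$-weighted integral after conditioning on the distribution cases of the BPP. The argument is essentially bookkeeping: translate the outage event into an SNR inequality, decompose over the satellite counts, and exploit independence between the serving and eavesdropping links.

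First I would translate the outage event into an inequality between the two SNRs. Since $R=[\log_2(1+\gs)-\log_2(1+\gemd)]^+$ and $\Rt\ge 0$, the event $R>\Rt$ is equivalent to $\log_2(1+\gs)-\log_2(1+\gemd)>\Rt$, because for a nonnegative threshold the positive part exceeds $\Rt$ if and only if the bracketed quantity itself does. Exponentiating and rearranging gives $\gemd < 2^{-\Rt}(1+\gs)-1$, so $\mathbb{P}[R>\Rt]=\mathbb{P}[\gemd < 2^{-\Rt}(1+\gs)-1]$.

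Next I would decompose over the distribution of the eavesdropping satellites. Conditioning on $\BPP(\Aeml)=p$ and $\BPP(\Aesl)=q$ and applying the law of total probability with the case probabilities $\mathcal{P}[N,p,q]$ from Corollary \ref{Cor:prob_pq}, the conditional most detrimental SNR becomes $\g_{\mathrm{e^*}}^{(p,q)}$. Because $\gs$ (which depends only on the serving link) and $\g_{\mathrm{e^*}}^{(p,q)}$ (which depends on the independent eavesdropping channels and random distances) are statistically independent, I would then condition on $\gs=x$ and integrate against its density, giving the inner factor $\mathbb{P}[\g_{\mathrm{e^*}}^{(p,q)} < 2^{-\Rt}(1+x)-1]=F_{\g_{\mathrm{e}^*}^{(p,q)}}(2^{-\Rt}(1+x)-1)$.

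The step needing care is the support of this CDF. Since $\g_{\mathrm{e^*}}^{(p,q)}\ge 0$, the quantity $F_{\g_{\mathrm{e}^*}^{(p,q)}}(2^{-\Rt}(1+x)-1)$ vanishes whenever its argument is negative, i.e. whenever $x<2^{\Rt}-1$; this is exactly what fixes the lower integration limit at $2^{\Rt}-1$ in \eqref{eq:Pout_fin}. I would also verify that the empty-set case $p=q=0$ is consistently absorbed: by Lemmas \ref{Lem:CDF_SNReml} and \ref{Lem:CDF_SNResl} the corresponding CDF reduces to an empty product equal to $1$, matching $\gemd=0$ when $\BPPeEq$. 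Collecting the conditional integrals and subtracting from one then yields \eqref{eq:Pout_fin}, while the density $\pdfgs$ in \eqref{eq:PDF_SNRs} follows by differentiating the CDF \eqref{eq:CDF_SNRs_fin} term by term via $\frac{d}{dx}\gamma(a,cx)=c(cx)^{a-1}e^{-cx}$. I expect the only subtle point to be the handling of the positive part and the induced support restriction, not any hard analytic estimate.
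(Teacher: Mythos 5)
Your proposal is correct and follows essentially the same route as the paper's proof: apply the law of total probability over the $(p,q)$ distribution cases with weights $\mathcal{P}[N,p,q]$, use independence of $\gs$ and $\g_{\mathrm{e}^*}^{(p,q)}$ to condition on $\gs=x$ and integrate the CDF $F_{\g_{\mathrm{e}^*}^{(p,q)}}(2^{-\Rt}(1+x)-1)$ against $f_{\gs}(x)$ from $2^{\Rt}-1$, and obtain $\pdfgs$ by termwise differentiation of \eqref{eq:CDF_SNRs_fin} using $\frac{d\gamma(a,x)}{dx}=e^{-x}x^{a-1}$. Your added care about the positive part, the support-induced lower limit, and the empty-set case $p=q=0$ is sound and merely makes explicit what the paper leaves implicit.
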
 

\begin{proof}[Proof:\nopunct]
See Appendix \ref{App:Thm2}.
\end{proof}

The outage secrecy capacity is defined as the maximum secrecy rate successfully transferred from the terminal to the serving satellite with a target secrecy outage probability $\epsilon$. Using Theorem \ref{thm:Pout}, the outage secrecy capacity is given in the following corollary.

\begin{cor}\label{cor:Cout}
With a target secrecy outage probability $\epsilon$, the outage secrecy capacity of the uplink satellite communication system in the presence of $N$ eavesdropping satellites randomly located at the altitude $\av$ is given by $\Cout=(1-\epsilon)\Rt^*$, where $\Rt^*$ is the target secrecy rate satisfying $\Pout=\epsilon$.
\end{cor}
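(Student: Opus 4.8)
The plan is to read the result off the operational definition of the outage secrecy capacity, using only the monotonicity of $\Pout$ in the target rate $\Rt$. First I would recall the interpretation underlying the definition: when the terminal transmits at a target secrecy rate $\Rt$, a secrecy outage occurs with probability $\Pout$ given in Theorem \ref{thm:Pout}, while secure transmission succeeds with the complementary probability $1-\Pout$, in which case exactly $\Rt$ secure bits per channel use are delivered. Hence the securely-delivered throughput is $(1-\Pout)\Rt$, and imposing the outage budget $\Pout=\epsilon$ fixes the success probability at $1-\epsilon$. So the claimed formula $\Cout=(1-\epsilon)\Rt^*$ is just this throughput evaluated at the rate that meets the budget.

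Next I would show that the constraint $\Pout=\epsilon$ determines $\Rt^*$ uniquely, which amounts to verifying that $\Pout$ is a strictly increasing (and continuous) function of $\Rt$. Inspecting \eqref{eq:Pout_fin}, increasing $\Rt$ raises the lower integration limit $2^{\Rt}-1$, shrinking each integration domain, and simultaneously lowers the argument $2^{-\Rt}(1+x)-1$ of the non-decreasing CDF $F_{\g_{\mathrm{e}^*}^{(p,q)}}$, reducing each integrand. Both effects decrease every inner integral, so the subtracted sum $\sum_{p,q}\mathcal{P}[N,p,q]\int(\cdots)\,dx$ decreases and $\Pout$ increases. Because these two dependencies push $\Pout$ in the same direction, no delicate cancellation occurs, and continuity in $\Rt$ is inherited from $F_{\g_{\mathrm{e}^*}^{(p,q)}}$ and $f_{\gs}$. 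This guarantees a well-defined inverse $\Rt^*=\Pout^{-1}(\epsilon)$, which is also the largest target secrecy rate whose outage stays within the budget $\epsilon$.

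Finally I would substitute $\Rt=\Rt^*$ and $\Pout=\epsilon$ into the throughput $(1-\Pout)\Rt$ to obtain $\Cout=(1-\epsilon)\Rt^*$, completing the argument. The only step requiring any work is the strict-monotonicity claim that makes $\Rt^*$ well defined; the remainder is definitional. I therefore expect the monotonicity to be the main (and only mild) obstacle, namely tracking the two competing dependencies on $\Rt$ inside \eqref{eq:Pout_fin}, both of which, as noted, increase $\Pout$.
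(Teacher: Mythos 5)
Your proposal is correct and matches the paper's (essentially definitional) treatment: the paper states Corollary \ref{cor:Cout} without a formal proof, relying on the operational definition $\Cout=(1-\epsilon)\Rt^*$ and on the remark immediately afterward that $\Pout$ is an increasing function of $\Rt$, which is exactly the monotonicity you verify by noting that raising $\Rt$ both shrinks the integration domain in \eqref{eq:Pout_fin} and lowers the argument of the non-decreasing CDF $F_{\g_{\mathrm{e}^*}^{(p,q)}}$. Your elaboration of why these two effects act in the same direction is a sound and slightly more explicit version of the paper's reasoning.
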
 

The closed-form expressions of \eqref{eq:Cerg_fin} and \eqref{eq:Pout_fin} are difficult to obtain due to the integrals, which can instead be computed numerically.
In addition, the optimal target secrecy rate $\Rt^*$ for the outage secrecy capacity can be easily obtained using a simple line-search since $\Pout$ is an increasing function of $\Rt$.

\subsection{Extension to Steerable-Beam Antennas With Terminal's Location Information}\label{Sec:Secrecy_performance_anal_sb}
So far, we considered the eavesdropping satellites with the fixed-beam antennas.
In this subsection, we assume that the eavesdropping satellites are equipped with the steerable-beam antennas and have information of the terminal's position, so that the eavesdropping satellites are able to steer the main lobes of the beams in the direction of the terminal.
Let $\wsb$ indicate the steerable angle of the boresight from the direction to the subsatellite point. 
Then, the steerable-beam case can be seen as the case that the beamwidth of the main lobe is increased by $\wsb$ without any loss of the antenna gain, i.e., $\tilde{\omega}_{\mathrm{th}}={\omega}_{\mathrm{th}} + \wsb$. 
For example, when $\wsb$ is close to zero, the case becomes comparable to that with the fixed-beam antennas, while when $\wsb$ is large enough, the main lobes of most of effective eavesdropping satellites can be directed to the terminal.
The analytical results for the case with the steerable-beam antennas can be easily obtained through the same steps for the fixed-beam case  with the different threshold angle $\tilde{\omega}_{\mathrm{th}}$.

\begin{rem}\label{Rem:wsb*}
It can be seen from Fig. \ref{Fig:area_description} that the main lobes of the satellites at the boundary of $\Ae$ are directed to the terminal if $\sin\wth\geq r/(r+\av)$. 
With the steerable-beam antennas, the condition that all effective eavesdropping satellites can steer the main lobes to the terminal is given by $ \sin(\wth+\wsb) \geq r/(r+\av)$.
Hence, the minimum steerable angle with which all effective eavesdropping satellites can steer the main lobe to the terminal, is given by
\begin{align}\label{eq:wsb*}
\Delta \omega_{\mathrm{sb}}^*
    =\sin^{-1}\left(\frac{r}{r+\av}\right)-\wth.
\end{align}
As $\Delta \omega_{\mathrm{sb}}$ increases, the surface area of $\Aeml$ is more expanded and finally becomes that of $\Ae$ when $\Delta \omega_{\mathrm{sb}}$ reaches $\Delta \omega_{\mathrm{sb}}^*$.
\end{rem}

\section{Approximation and Asymptotic Analyses for Secrecy Performance}\label{Sec:Asym_secrecy_performance}
The expressions for the secrecy performance in Theorems \ref{thm:Cerg} and \ref{thm:Pout} are exact but complicated to evaluate because $\frac{(N+1)(N+2)}{2}$ integrals need to be computed numerically.
In this section, we approximate the secrecy performance to reduce the computational complexity, assuming that the altitude of the eavesdropping satellites is sufficiently low.\footnote{This assumption is valid for the eavesdropping satellites with the low altitudes, e.g., LEO and very low Earth orbit (VLEO) satellites, and its validity will be shown in Figs. \ref{Fig:Cerg_vs_ae} and \ref{Fig:Pout_vs_ae}.}
In addition, simpler expressions are derived to investigate asymptotic behavior of the secrecy performance.

\subsection{Performance Approximation}\label{Sec:Perf_approx}
From the definition of the BPP, the number of eavesdropping satellites in $\Ae$ follows the binomial distribution with the total number of points $N$ and the success probability\footnote{Note that the success probability is the probability that a point is located on the area of interest. For homogeneous BPPs, the success probability is obtained as the ratio of the surface area of interest to the total surface area where all points are distributed [\ref{Ref:Book:Chiu}].} $p(\Ae)=\SAe/\SA=\frac{\av}{2(r+\av)}$.
Based on the Poisson limit theorem, when the success probability of the binomial distribution is very small, the distribution asymptotically follows the Poisson distribution [\ref{Ref:Book:Chiu}].
Thus, when $\av \rightarrow 0$, $p(\Ae)\rightarrow 0$ so that the BPP in $\Ae$, $\BPPe$, can be approximated as a PPP $\PPPe$ with the density of $\lambda_{\mathrm{e}}=\frac{N}{4\pi(r+\av)^2}$.
The PPP $\PPPe$ can be divided into two PPPs $\PPPeml$ and $\PPPesl$ for the two disjoint regions $\Aeml$ and $\Aesl$, respectively. 

We denote $\gemdmlapp$ and $\gemdslapp$ as the SNRs at the most detrimental eavesdropping satellite in $\Aeml$ and $\Aesl$, respectively, when $\av \rightarrow 0$. Then, the CDFs of $\gemdmlapp$ and $\gemdslapp$ are given in the following two lemmas.

\begin{lem}\label{Lem:CDFgemlapp}
When $\av \rightarrow 0$, the CDF of the SNR at the most detrimental eavesdropping satellite in $\Aeml$, $\gemdmlapp$, is approximated as
\begin{align}\label{eq:CDF_SNRemlapp_fin}
&{F_{\gemdmlapp}}(x) =\exp \left( { - \frac{N}{2}\left\{ {1 - \cos \psth - K\sum\limits_{n = 0}^\infty  {\frac{{{{(m)}_n}{\delta ^n}{{(2b)}^{1 + n}}}}{{{{(n!)}^2}}}} } \right.} \right.\nonumber\\
    &\times\left[\left\{ {\frac{{{r^2} + {{(r + {a_{\mathrm{e}}})}^2}}}{{2r(r + {a_{\mathrm{e}}})}} - \cos \psth} \right\} \gamma (1 + n,\Pi_{\mathrm{ml}}(x,\psth))  \right.\nonumber\\
    &- \frac{{a_{\mathrm{e}}^2}}{{2r(r + {a_{\mathrm{e}}})}}\gamma (1 + n,\Pi_{\mathrm{ml}}(x,0)) - \frac{1}{{2r(r + {a_{\mathrm{e}}})(w_1 x)^{\frac{2}{\alpha}}}}\nonumber\\
    &\left. {\left. {\left. \!\!\times \!\left\{ {\gamma \left( 1 + n + \frac{2}{\alpha },\Pi_{\mathrm{ml}}(x,\psth) \right) - \gamma \left( 1 + n + \frac{2}{\alpha },\Pi_{\mathrm{ml}}(x,0) \right)} \right\} \right]} \right\}} \right),
\end{align}
where 
\begin{align}\label{eq:Pi_ml}
\Pi_{\mathrm{ml}}(x,\psi)
    =\frac{w_1 x}{2b}\left(\!\sqrt{r^2+(r+\av)^2-2r(r+\av)\cos\psi}\right)^{\alpha}\!.
\end{align}
\end{lem}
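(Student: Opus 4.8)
The plan is to use the Poisson approximation set up at the start of Section~\ref{Sec:Perf_approx}: as $\av\to0$ the effective eavesdroppers in $\Aeml$ are modelled by a homogeneous PPP $\PPPeml$ of density $\lame=\frac{N}{4\pi(r+\av)^2}$ on the spherical cap $\Aeml$, each point carrying an independent shadowed-Rician channel gain. Since $\gemdmlapp=\max_{\idxe\in\PPPeml}\gamma_\idxe$, its CDF is the void probability that no point of $\PPPeml$ produces an SNR exceeding $x$,
\begin{equation*}
F_{\gemdmlapp}(x)=\P\!\left[\,\bigcap_{\idxe\in\PPPeml}\{\gamma_\idxe\le x\}\right].
\end{equation*}
First I would mark each point with its channel gain; by the marking theorem the ``detectable'' eavesdroppers (those with SNR $>x$) form an inhomogeneous PPP. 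Conditioned on a point at distance $d$ from the terminal, the argument of Lemma~\ref{Lem:CDF_SNRs} gives $\P[\gamma_\idxe>x\mid d]=1-F_{h_\idxe}(w_1 d^{\alpha}x)$ with $F_{h_\idxe}$ from \eqref{eq:CDF_ch_gain}, so the void probability becomes
\begin{equation*}
F_{\gemdmlapp}(x)=\exp\!\left(-\lame\!\int_{\Aeml}\!\big(1-F_{h_\idxe}(w_1 d^{\alpha}x)\big)\,dA\right).
\end{equation*}

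Next I would reduce the surface integral to a one-dimensional one. Parametrising $\Aeml$ by the polar angle $\psi\in[0,\psth]$, the surface element is $(r+\av)^2\sin\psi\,d\psi\,d\varphi$ and $d(\psi)=\sqrt{r^2+(r+\av)^2-2r(r+\av)\cos\psi}$, so that $w_1 d(\psi)^{\alpha}x/(2b)=\Pi_{\mathrm{ml}}(x,\psi)$ of \eqref{eq:Pi_ml}. The azimuthal integration yields $2\pi$, and the key simplification $\lame\cdot2\pi(r+\av)^2=\frac{N}{2}$ collapses every altitude prefactor, giving
\begin{equation*}
F_{\gemdmlapp}(x)=\exp\!\left(-\frac{N}{2}\!\int_0^{\psth}\!\big(1-F_{h_\idxe}(w_1 d(\psi)^{\alpha}x)\big)\sin\psi\,d\psi\right).
\end{equation*}
Splitting the integrand, the constant part integrates to $\int_0^{\psth}\sin\psi\,d\psi=1-\cos\psth$, which is precisely the leading term inside the braces of the claim.

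For the remaining piece I would insert the series \eqref{eq:CDF_ch_gain} and integrate term by term. The substitution $u=d(\psi)^2$ turns $\sin\psi\,d\psi$ into $du/(2r(r+\av))$ with limits $u=\av^2$ at $\psi=0$ and $u=\dth^2$ at $\psi=\psth$, reducing each summand to $\int_{\av^2}^{\dth^2}\gamma\big(1+n,\frac{w_1 x}{2b}u^{\alpha/2}\big)\,du$. A further change of variables $u=t^2$ identifies this integral with $2\mathcal{I}_{\mathrm{ml}}(x,n)$ from Lemma~\ref{Lem:CDF_SNReml}, so the integration by parts that raises the order of the incomplete gamma function from $1+n$ to $1+n+\frac{2}{\alpha}$ need not be repeated. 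Collecting constants and using $\dth^2-\av^2=2r(r+\av)(1-\cos\psth)$ together with $\dth^2=r^2+(r+\av)^2-2r(r+\av)\cos\psth$ rewrites the coefficient $\frac{\dth^2}{2r(r+\av)}$ as $\frac{r^2+(r+\av)^2}{2r(r+\av)}-\cos\psth$, matching the claimed expression.

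The conceptual content is entirely in the Poisson limit plus the spherical-cap geometry; the main obstacle is purely computational, namely the integration-by-parts and change-of-variables bookkeeping that generates the higher-order terms $\gamma(1+n+\frac{2}{\alpha},\cdot)$ and the careful tracking of the powers of $2b$ and $w_1 x$. Because that computation is identical to the one already carried out for $\mathcal{I}_{\mathrm{ml}}(x,n)$ in Lemma~\ref{Lem:CDF_SNReml}, the remaining work is essentially quoting it and rearranging constants.
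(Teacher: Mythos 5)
Your proposal is correct and follows essentially the same route as the paper's proof in Appendix G: approximate $\BPPeml$ by the PPP $\PPPeml$, apply the probability generating functional (your thinning/void-probability argument is the same identity), reduce to a one-dimensional integral over the polar angle in spherical coordinates, and evaluate it by reusing the $\mathcal{I}_{\mathrm{ml}}(x,n)$ computation from Lemma \ref{Lem:CDF_SNReml}. Your explicit coefficient matching (e.g., $\dth^2/(2r(r+\av)) = \frac{r^2+(r+\av)^2}{2r(r+\av)}-\cos\psth$) fills in the step the paper omits "due to space limitation."
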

\begin{proof}[Proof:\nopunct]
See Appendix \ref{App:Cor3}.
\end{proof}

\begin{lem}\label{Lem:CDFgeslapp}
When $\av \rightarrow 0$, the CDF of the SNR at the most detrimental eavesdropping satellite in $\Aesl$, $\gemdslapp$, is approximated as
\begin{align}\label{eq:CDF_SNReslapp_fin}
&{F_{\gemdslapp}}(x) 
    = \!\exp \left(  - \frac{N}{2}\!\left\{ \cos \psth - \frac{r}{r+\av} - K\!\sum\limits_{n = 0}^\infty \! \frac{{{{(m)}_n}{\delta ^n}{{(2b)}^{1 + n}}}}{{{{(n!)}^2}}}\right. \right.\nonumber\\
    &\!\times \!\left[ \frac{\av(\av+2r)}{2r(r+\av)}\left\{ {\gamma \left( 1 + n,\Pi_{\mathrm{sl}}(x,\psmax) \right)} \!-\! \gamma \left( 1 + n,\Pi_{\mathrm{sl}}(x,\psth) \right) \right\}\right.\nonumber\\
    &\!+ {\left(\cos \psth - \frac{r}{r+\av}\right)} \gamma \left( 1 + n, \Pi_{\mathrm{sl}}(x,\psth) \right)- \frac{1}{{2r(r + {a_{\mathrm{e}}})(w_2 x)^{\frac{2}{\alpha}}}}\nonumber\\
    &\!\! \times \!\left\{ \gamma \!\left( 1 + n + \frac{2}{\alpha },\Pi_{\mathrm{sl}}(x,\psmax)  \right) \!- \!\!\!\left. \left. \left.  \gamma \!\left( 1 + n + \frac{2}{\alpha },\Pi_{\mathrm{sl}}(x,\psth) \right) \right\} \right] \right\} \right),
\end{align}
where 
\begin{align}
    \Pi_{\mathrm{sl}}(x,\psi)=\frac{w_2 x}{2b} \left(\sqrt{r^2+(r+\av)^2-2r(r+\av)\cos\psi}\right)^{\alpha}.
\end{align}
\end{lem}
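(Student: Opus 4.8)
The plan is to mirror the main-lobe argument of Lemma~\ref{Lem:CDFgemlapp}, replacing the region $\Aeml$ by $\Aesl$ and the main-lobe constant $w_1$ by the side-lobe constant $w_2$. First I would invoke the Poisson approximation established just before Lemma~\ref{Lem:CDFgemlapp}: as $\av\to 0$ the effective eavesdroppers in $\Aesl$ form a PPP $\PPPesl$ of density $\lame=\frac{N}{4\pi(r+\av)^2}$ on that spherical cap. Since $\gemdslapp$ is the maximum of the per-satellite SNRs over this PPP, its CDF is the probability that no point of $\PPPesl$ produces an SNR exceeding $x$, which by the void probability (equivalently, the probability generating functional) of a PPP equals
\[
F_{\gemdslapp}(x)=\exp\!\left(-\lame\int_{\Aesl}\big(1-F_{h_i}(w_2 d^{\alpha} x)\big)\,dS\right),
\]
where $d$ is the terminal--satellite distance at the surface element $dS$ and the side-lobe gain applies throughout $\Aesl$ (hence $w_2$); the conditional SNR CDF $\mathbb{P}[\g_\idxe\le x\mid d]=F_{h_i}(w_2 d^{\alpha}x)$ is read off exactly as in the proof of Lemma~\ref{Lem:CDF_SNRs}.

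Next I would use the azimuthal symmetry of $\Aesl$ to collapse the surface integral to one dimension. Writing $dS=(r+\av)^2\sin\psi\,d\psi\,d\varphi$ and integrating $\varphi$ over $[0,2\pi]$ produces the prefactor $\lame\cdot 2\pi(r+\av)^2=\frac{N}{2}$, so
\[
F_{\gemdslapp}(x)=\exp\!\left(-\frac{N}{2}\int_{\psth}^{\psmax}\big(1-F_{h_i}(w_2 d(\psi)^{\alpha}x)\big)\sin\psi\,d\psi\right),
\]
with $d(\psi)=\sqrt{r^2+(r+\av)^2-2r(r+\av)\cos\psi}$. Splitting the integrand and using $\int_{\psth}^{\psmax}\sin\psi\,d\psi=\cos\psth-\cos\psmax=\cos\psth-\frac{r}{r+\av}$ recovers the first term in the exponent of \eqref{eq:CDF_SNReslapp_fin}, leaving the integral $\int_{\psth}^{\psmax}F_{h_i}(w_2 d(\psi)^{\alpha}x)\sin\psi\,d\psi$.

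For that integral I would substitute the series form \eqref{eq:CDF_ch_gain} of $F_{h_i}$ and change variables from $\psi$ to $d$ via $\sin\psi\,d\psi=\frac{d\,dd}{r(r+\av)}$, which maps $[\psth,\psmax]$ to $[\dth,\dmax]$ with $\dmax=\sqrt{\av(2r+\av)}$ as in Lemma~\ref{Lem:CDF_Ye}. Interchanging sum and integral, each term reduces to $\int_{\dth}^{\dmax}\gamma\!\big(1+n,\frac{w_2 d^{\alpha}x}{2b}\big)\,d\,dd$, which is precisely the integral evaluated (by integration by parts together with the substitution $t=\frac{w_2 x}{2b}d^{\alpha}$) in the derivation of $\mathcal{I}_{\mathrm{sl}}(x,n)$ in Lemma~\ref{Lem:CDF_SNResl}. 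Substituting $\mathcal{I}_{\mathrm{sl}}(x,n)$ and regrouping the coefficient of $\gamma(1+n,\Pi_{\mathrm{sl}}(x,\psth))$ via the identity $\av^2+2r\av+2r^2=(r+\av)^2+r^2$ yields the grouping with coefficients $\frac{\av(\av+2r)}{2r(r+\av)}$ and $\cos\psth-\frac{r}{r+\av}$ displayed in \eqref{eq:CDF_SNReslapp_fin}.

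The hard part will be the final bookkeeping: justifying the term-by-term integration (the interchange of the infinite sum with the integral) and, more tediously, reorganizing the boundary and residual terms from integration by parts into the exact closed form of the statement. Since the side-lobe computation is structurally identical to the already-established main-lobe case of Lemma~\ref{Lem:CDFgemlapp}, the only genuinely new inputs are the integration limits $[\psth,\psmax]$ (equivalently $[\dth,\dmax]$) and the side-lobe gain entering through $w_2$; everything else transfers verbatim.
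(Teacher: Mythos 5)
Your proposal is correct and follows essentially the same route as the paper: the paper's own proof simply defers to that of Lemma~\ref{Lem:CDFgemlapp}, which is exactly your argument — the Poisson (PGFL/void-probability) step, azimuthal integration yielding the $N/2$ prefactor, and evaluation of the remaining radial integral by the same domain-splitting used for $\mathcal{I}_{\mathrm{sl}}(x,n)$ in Lemma~\ref{Lem:CDF_SNResl}. Your change of variables from $\psi$ to $d$ and the regrouping via $\av^2+2r\av+2r^2=(r+\av)^2+r^2$ are just an explicit writing-out of the bookkeeping the paper leaves implicit.
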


\begin{proof}[Proof:\nopunct]
The proof is similar to that of Lemma \ref{Lem:CDFgemlapp}.
\end{proof}

Let $\gemdapp$ denote the SNR at the most detrimental eavesdropping satellite in $\Ae$ when $\av \rightarrow 0$. Then, the CDF of $\gemdapp$ is given by $F_{\gemdapp}(x)=F_{\gemdmlapp}(x)F_{\gemdslapp}(x)$. 

With this CDF, we obtain the approximated ergodic secrecy capacity and secrecy outage probability in the following theorem.
\begin{thm}\label{thm:asymCerg}
When the altitude of the eavesdropping satellites is low, e.g., LEO or VLEO satellites, the ergodic secrecy capacity and the secrecy outage probability are respectively approximated as
\begin{align}\label{eq:Cerg_assym_fin}
\Cerg
    \approx \frac{1}{\ln2} \int_0^{\infty}\frac{F_{\gemdapp}(x)}{1+x}(1-F_\gs(x))dx
\end{align}
and
\begin{align}\label{eq:Pout_assym_fin}
\Pout
    \approx 1- \int_{2^{\Rt}-1}^{\infty} F_{\gemdapp}(2^{-\Rt}(1+x)-1) f_{\gs}(x)dx.
\end{align}
\end{thm}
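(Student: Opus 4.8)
The plan is to reduce both approximations to a single algebraic identity, exploiting the fact that Theorems~\ref{thm:Cerg} and~\ref{thm:Pout} share the common structure $\sum_{p=0}^{N}\sum_{q=0}^{N-p}\mathcal{P}[N,p,q]\,F_{\g_{\mathrm{e}^*}^{(p,q)}}(\cdot)$, only the surrounding integrand being different. Because the double sum is finite, I would first interchange it with the integral in \eqref{eq:Cerg_fin} and \eqref{eq:Pout_fin}, so that both statements follow at once from the single uniform-in-$x$ approximation
\[
\sum_{p=0}^{N}\sum_{q=0}^{N-p}\mathcal{P}[N,p,q]\,F_{\g_{\mathrm{e}^*}^{(p,q)}}(x)\;\approx\;F_{\gemdapp}(x),\qquad \av\to0,
\]
after which substituting the right-hand side back into the two integrals reproduces \eqref{eq:Cerg_assym_fin} and \eqref{eq:Pout_assym_fin}.

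The next step is to evaluate the left-hand sum in closed form. By Lemmas~\ref{Lem:CDF_SNReml}, \ref{Lem:CDF_SNResl}, and~\ref{Lem:CDF_SNRe*}, the conditional CDF factorizes as $F_{\g_{\mathrm{e}^*}^{(p,q)}}(x)=[G_{\mathrm{ml}}(x)]^{p}[G_{\mathrm{sl}}(x)]^{q}$, where $G_{\mathrm{ml}}(x)$ and $G_{\mathrm{sl}}(x)$ denote the bracketed bases in \eqref{eq:CDF_SNReml_fin} and \eqref{eq:CDF_SNResl_fin}. Writing the multinomial weight of Corollary~\ref{Cor:prob_pq} as $\mathcal{P}[N,p,q]=\frac{N!}{p!\,q!\,(N-p-q)!}\,P_{\mathrm{ml}}^{\,p}P_{\mathrm{sl}}^{\,q}P_{\mathrm{c}}^{\,N-p-q}$ with $P_{\mathrm{ml}}=\frac{1-\cos\psth}{2}$, $P_{\mathrm{sl}}=\frac{(r+\av)\cos\psth-r}{2(r+\av)}$, and $P_{\mathrm{c}}=1-\frac{\av}{2(r+\av)}$, the sum is exactly the generating function of the multinomial law, so the multinomial theorem collapses it to
\[
\sum_{p=0}^{N}\sum_{q=0}^{N-p}\mathcal{P}[N,p,q]\,F_{\g_{\mathrm{e}^*}^{(p,q)}}(x)=\bigl(P_{\mathrm{ml}}G_{\mathrm{ml}}(x)+P_{\mathrm{sl}}G_{\mathrm{sl}}(x)+P_{\mathrm{c}}\bigr)^{N},
\]
which, using $P_{\mathrm{ml}}+P_{\mathrm{sl}}+P_{\mathrm{c}}=1$, equals $\bigl(1-t(x)\bigr)^{N}$ with $t(x)=P_{\mathrm{ml}}(1-G_{\mathrm{ml}}(x))+P_{\mathrm{sl}}(1-G_{\mathrm{sl}}(x))$.

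I would then invoke the Poisson limit theorem exactly as in Section~\ref{Sec:Perf_approx}. As $\av\to0$ the spherical cap shrinks, so $\psth\to0$ and both $P_{\mathrm{ml}},P_{\mathrm{sl}}\to0$; since $0\le G_{\mathrm{ml}},G_{\mathrm{sl}}\le1$ we have $t(x)\le P_{\mathrm{ml}}+P_{\mathrm{sl}}=\frac{\av}{2(r+\av)}=O(\av)$ uniformly in $x$, whence $(1-t)^{N}\approx e^{-Nt}$ and
\[
\bigl(1-t(x)\bigr)^{N}\approx\exp\!\bigl(-N P_{\mathrm{ml}}(1-G_{\mathrm{ml}}(x))\bigr)\,\exp\!\bigl(-N P_{\mathrm{sl}}(1-G_{\mathrm{sl}}(x))\bigr).
\]
The two factors are precisely $F_{\gemdmlapp}(x)$ and $F_{\gemdslapp}(x)$ of Lemmas~\ref{Lem:CDFgemlapp} and~\ref{Lem:CDFgeslapp}: this is the probabilistic statement that, in the PPP regime, the restriction of $\PPPe$ to the disjoint regions $\Aeml$ and $\Aesl$ yields independent processes $\PPPeml$ and $\PPPesl$, so the two void probabilities multiply and their product is $F_{\gemdapp}(x)$. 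This closes the approximation and hence the theorem.

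The main obstacle is the identification $e^{-N P_{\mathrm{ml}}(1-G_{\mathrm{ml}}(x))}=F_{\gemdmlapp}(x)$, i.e., matching the collapsed exponent to the closed form of Lemma~\ref{Lem:CDFgemlapp}. The key is the algebraic simplification $\dth^{2}-\av^{2}=2r(r+\av)(1-\cos\psth)$, which combines $P_{\mathrm{ml}}$ with the prefactor $\frac{2K}{\dth^{2}-\av^{2}}$ of $G_{\mathrm{ml}}$ into $\frac{K}{2r(r+\av)}$ and turns $N P_{\mathrm{ml}}(1-G_{\mathrm{ml}}(x))$ into $\frac{N}{2}$ times the brace of \eqref{eq:CDF_SNRemlapp_fin}; the incomplete-gamma arguments then match through $\Pi_{\mathrm{ml}}(x,\psth)=\frac{w_1\dth^{\alpha}x}{2b}$ and $\Pi_{\mathrm{ml}}(x,0)=\frac{w_1\av^{\alpha}x}{2b}$, and the analogue $\dmax^{2}-\dth^{2}=2r\bigl((r+\av)\cos\psth-r\bigr)$ handles the side-lobe factor. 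A secondary point is quantifying the $(1-t)^{N}\approx e^{-Nt}$ error: since $t=O(\av)$ uniformly, the multiplicative error is $1+O(N\av^{2})$, so the approximation is controlled precisely in the assumed low-altitude (LEO/VLEO) regime. As this exponent identification is already carried out within the proofs of Lemmas~\ref{Lem:CDFgemlapp} and~\ref{Lem:CDFgeslapp}, I would simply cite those to finish.
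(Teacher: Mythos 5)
Your proposal is correct, but it reaches the result by a genuinely different route than the paper. The paper's own proof is a one-liner: having already replaced the BPP by a PPP (Section~\ref{Sec:Perf_approx}) and computed $F_{\gemdapp}(x)=F_{\gemdmlapp}(x)F_{\gemdslapp}(x)$ via the probability generating functional, it simply repeats the integral manipulations of Theorems~\ref{thm:Cerg} and~\ref{thm:Pout} with the unconditional CDF $F_{\gemdapp}$ in place of the law of total expectation over $(p,q)$. You instead start from the exact BPP expressions, interchange the finite double sum with the integral, and observe that $\sum_{p,q}\mathcal{P}[N,p,q]\,F_{\g_{\mathrm{e}^*}^{(p,q)}}(x)$ is a multinomial generating function that collapses exactly to $\bigl(1-t(x)\bigr)^{N}$ --- which is just the unconditional CDF of $\gemd$, since for a BPP each of the $N$ i.i.d.\ points independently fails to exceed SNR $x$ with probability $1-t(x)$ --- and only then apply $(1-t)^{N}\approx e^{-Nt}$ and match the exponent to Lemmas~\ref{Lem:CDFgemlapp} and~\ref{Lem:CDFgeslapp} via $\dth^{2}-\av^{2}=2r(r+\av)(1-\cos\psth)$ and $\dmax^{2}-\dth^{2}=2r\bigl((r+\av)\cos\psth-r\bigr)$ (both identities check out). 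Your route buys two things the paper's does not make explicit: a quantitative multiplicative error bound $1+O(N\av^{2})$ for the Poisson step, and the observation that the exact Theorems~\ref{thm:Cerg} and~\ref{thm:Pout} already admit a single-integral form $\bigl(1-t(x)\bigr)^{N}$ without any approximation, which somewhat undercuts the paper's complexity comparison based on $\frac{(N+1)(N+2)}{2}$ integrals. The paper's route is shorter because the PGFL absorbs the averaging over the random point count in one stroke. The only soft spot in your write-up is that the final identification $e^{-NP_{\mathrm{ml}}(1-G_{\mathrm{ml}}(x))}=F_{\gemdmlapp}(x)$ is asserted by citation rather than fully displayed, but since that computation is exactly the content of the proofs of Lemmas~\ref{Lem:CDFgemlapp} and~\ref{Lem:CDFgeslapp}, deferring to them is legitimate.
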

\begin{proof}[Proof:\nopunct]
With the expression for the CDF of $\gemdapp$, the proof is complete with the similar steps as in the proofs of Theorems~\ref{thm:Cerg} and \ref{thm:Pout}.
\end{proof}

Compared to the exact expressions derived in \eqref{eq:Cerg_fin} and \eqref{eq:Pout_fin},
the simplified expressions \eqref{eq:Cerg_assym_fin} and \eqref{eq:Pout_assym_fin} significantly reduce the computational complexity to evaluate the performance.
Specifically, the computational complexity to evaluate the CDFs of $\gemdml$ and $\gemdsl$ in \eqref{eq:CDF_SNReml_fin} and \eqref{eq:CDF_SNResl_fin} is given by
$O(\tau^p)$ and $O(\tau^q)$, respectively,
where $O(\tau)$ is the computational complexity for the lower incomplete gamma function.
In contrast, the computations of the CDFs of $\gemdmlapp$ and $\gemdmlapp$ only require the complexity of $O(\tau)$ thanks to the probability generating functional of the PPP.
In addition, the approximate expressions do not have $\mathcal{P}[N,p,q]$ requiring the complexity of $O(Np^2q^2(N-p-q)^2)$.
More importantly, $\frac{(N+1)(N+2)}{2}$ integrals in the exact expressions are reduced to only one in the approximate ones, which is the key for the complexity reduction.
For example, when $N=10,000$, around $50$ million integrals should be calculated in the exact expressions, while only a single integral is required in the approximate ones.

\begin{figure}
\begin{center}
\includegraphics[width=0.90\columnwidth]{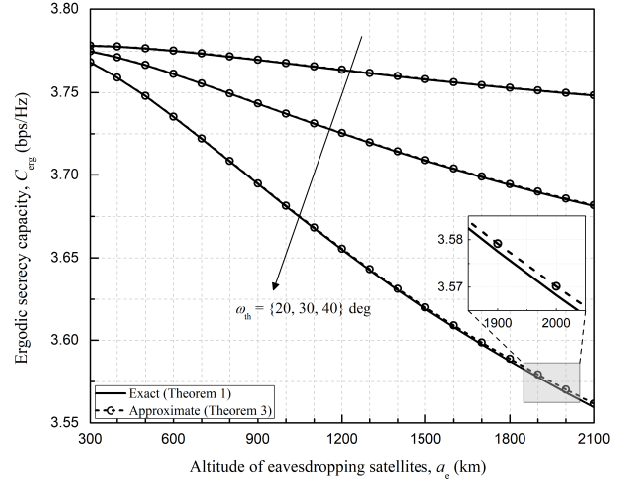}
\end{center}
\setlength\abovecaptionskip{.25ex plus .125ex minus .125ex}
\setlength\belowcaptionskip{.25ex plus .125ex minus .125ex}
\caption{Ergodic secrecy capacity versus the altitude of the eavesdropping satellites $\av$ for various threshold angles $\wth=\{20,30,40\}$ deg with $N=10$, $\as=\av=600$, $\thes=60$ deg, and $\wsb=0$ deg.}

\label{Fig:Cerg_vs_ae}
\end{figure}

\begin{figure}
\begin{center}
\includegraphics[width=0.90\columnwidth]{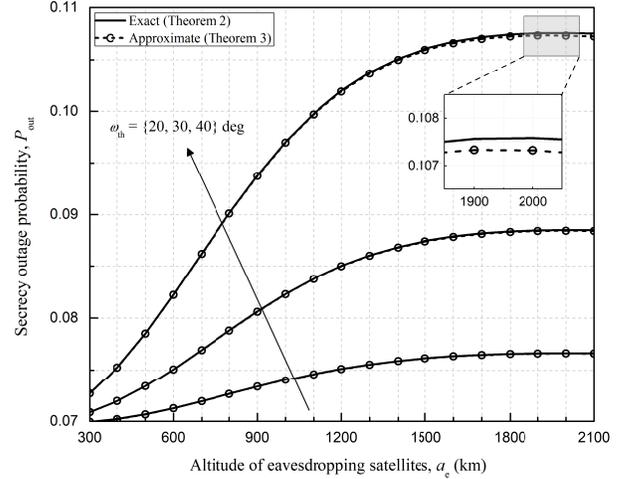}
\end{center}
\setlength\abovecaptionskip{.25ex plus .125ex minus .125ex}
\setlength\belowcaptionskip{.25ex plus .125ex minus .125ex}
\caption{Secrecy outage probability versus the altitude of the eavesdropping satellites $\av$ for various threshold angles $\wth=\{20,30,40\}$ deg with $N=10$, $\as=\av=600$, $\thes=60$ deg, $\wsb=0$ deg, and $\Rt=2$ bps/Hz.}

\label{Fig:Pout_vs_ae}
\end{figure}

In Figs. \ref{Fig:Cerg_vs_ae} and \ref{Fig:Pout_vs_ae}, we compare the exact expressions of the secrecy performance to the approximate ones.
For the altitudes from $300$ to $2,100$ km, at which the LEO and VLEO satellites are usually located, the secrecy performance is degraded with the altitude of the eavesdropping satellites $\av$ for a given~$N$. This shows that the eavesdroppers better overhear desired signals with the higher altitude, which leads to the larger coverage for a given 3-dB beamwidth.
As we expect from the assumption $\av \rightarrow 0$ for the approximate analyses, as $\av$ decreases, the gap between the exact and approximate performance decreases.
These results show the validity of the Poisson limit theorem, i.e., when the success probability is sufficiently small, the binomial distribution asymptotically becomes the Poisson distribution.
Thus, the approximate expressions can be used to evaluate the secrecy performance with low complexity but high accuracy.

\begin{figure*}[b]
\setcounter{myeqncount}{\value{equation}}
\setcounter{equation}{39}
\normalsize \hrulefill \vspace*{4pt}
\begin{align}\label{eq:Lamxy_fin}
\Lambda_{x,y}= \frac{N}{2r(r+\av)\ds}\left\{F_{d_{\idxe}}(x)^N-F_{d_{\idxe}}(y)^N\right\}^{-1}\sum\limits_{i=0}^{N-1}\binom{N-1}{i}\left(1+\frac{\av^2}{4r(r+\av)}\right)^{N-1-i}\left(\frac{-1}{4r(r+\av)}\right)^i\frac{x^{2i+3}-y^{2i+3}}{2i+3}
\end{align}
\normalsize \hrulefill \vspace*{4pt}
\setcounter{equation}{42}
\begin{align}\label{eq:Linf_fin}
\mathcal{L}_{\infty}
    &= \lim_{P\rightarrow\infty}\left(\log_2{P}-\frac{\Cerg^{\infty}}{S_{\infty}}\right) \nonumber\\
    &= -\frac{\log_2\Lambda_{\dth,\av}}{\left(1-F_{d_{\idxe}}(\dmax)\right)^N} - \frac{\left(1-F_{d_{\idxe}}(\dth)\right)^N}{\left(1-F_{d_{\idxe}}(\dmax)\right)^N}\log_2\left(\frac{\Grml\Lambda_{\dmax,\dth}^{\alpha}}{\Grsl\Lambda_{\dth,\av}^{\alpha}}\right) - \log_2 \left(\frac{c^2 G_{\mathrm{t}}\Grsl}{16 \pi^2 f_{\mathrm{c}}^2 N_0 W \Grml \ds^{\alpha}\Lambda_{\dmax,\dth}^{\alpha}}\right)
\end{align}
\setcounter{equation}{\value{myeqncount}}
\end{figure*}

\subsection{Asymptotic Analyses}\label{Sec:Asym_real}
In this subsection, we obtain simpler expressions to investigate asymptotic behavior of the secrecy performance in the three different scenarios.

\begin{figure}
\begin{center}
\includegraphics[width=0.90\columnwidth]{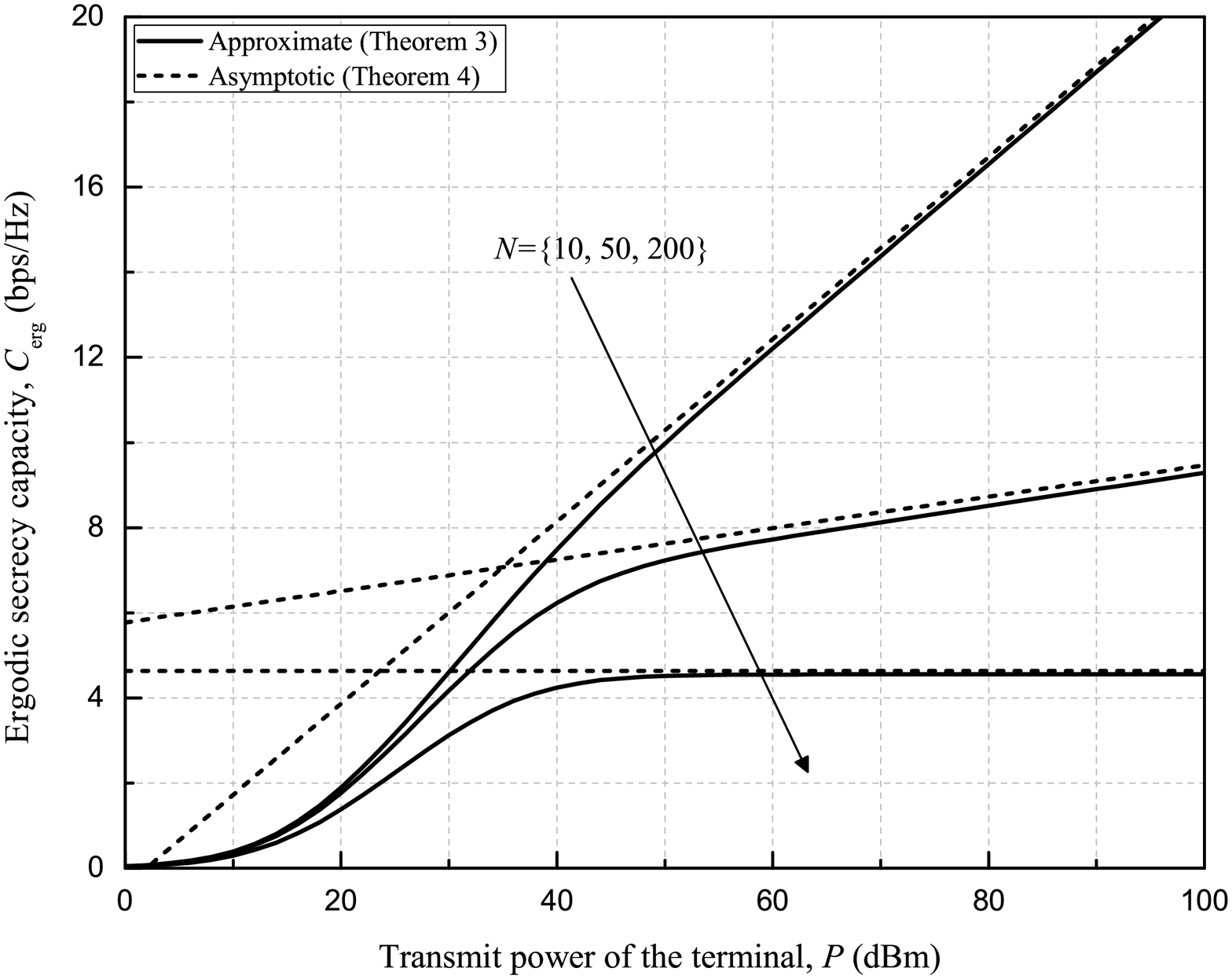}
\end{center}
\setlength\abovecaptionskip{.25ex plus .125ex minus .125ex}
\setlength\belowcaptionskip{.25ex plus .125ex minus .125ex}
\caption{Ergodic  secrecy capacity versus the transmit power of the terminal $P$ for various numbers of the eavesdropping satellites $N=\{10,50,200\}$ with $\as=\av=600$ km, $\thes=60$ deg, $\wth=40$ deg, and $\wsb=0$ deg.}

\label{Fig:Cerg_vs_P}
\end{figure}

\subsubsection{When there is no malicious eavesdropping satellite, i.e., $N\rightarrow0$} This scenario describes the situation where all satellites are trustworthy, i.e., the satellites do not eavesdrop on  terminal's signals. In such a case, the asymptotic ergodic secrecy capacity and secrecy outage probability are obtained in the following corollary.
\begin{cor}\label{cor:Cerg_N0}
When $N\rightarrow0$, the ergodic secrecy capacity and the secrecy outage probability in Theorem \ref{thm:asymCerg} are simplified as
\begin{align}\label{eq:Cerg_N0}
    \Cerg 
    &\rightarrow \frac{K}{\ln{2}}\sum_{k=0}^{\floor{m}-1}\sum_{t=0}^{k}\frac{(-1)^k (1-m)_{k} \delta^k (w_1 \ds^{\alpha})^t}{k! \left(\frac{1}{2b}-\delta\right)^{k-t+1}} \nonumber\\
    &\quad\qt\exp\left(\left(\frac{1}{2b}-\delta\right)w_1 \ds^{\alpha}\right)\Gamma\left(-t,\frac{1}{2b}-\delta\right)
\end{align}
and
\begin{align}\label{eq:Pout_N0}
    \Pout \rightarrow F_{\gs}(2^{\Rt}-1).
\end{align}
\end{cor}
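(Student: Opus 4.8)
The plan is to take the $N\to0$ limit directly in the approximate expressions \eqref{eq:Cerg_assym_fin} and \eqref{eq:Pout_assym_fin} of Theorem \ref{thm:asymCerg}, since these already isolate the eavesdropper contribution in the single factor $F_{\gemdapp}(x)=F_{\gemdmlapp}(x)F_{\gemdslapp}(x)$. The pivotal observation is that the two CDFs of Lemmas \ref{Lem:CDFgemlapp} and \ref{Lem:CDFgeslapp} both have the form $\exp(-\tfrac{N}{2}\{\cdots\})$, so that $F_{\gemdmlapp}(x)\to1$ and $F_{\gemdslapp}(x)\to1$ as $N\to0$, giving $F_{\gemdapp}(x)\to1$ for every $x\ge0$. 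This is the no-eavesdropper regime, in which $\gemd=0$ and the secrecy rate collapses to the legitimate rate $\log_2(1+\gs)$, so I expect both limits to reduce to quantities depending on $\gs$ alone.

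For $\Pout$, substituting $F_{\gemdapp}\to1$ into \eqref{eq:Pout_assym_fin} leaves $\int_{2^{\Rt}-1}^{\infty}f_{\gs}(x)\,dx=1-\cdfgs$ evaluated at $x=2^{\Rt}-1$, whence $\Pout\to1-(1-F_{\gs}(2^{\Rt}-1))=F_{\gs}(2^{\Rt}-1)$, which is \eqref{eq:Pout_N0}; this matches the elementary identity $\Pout=\P[\log_2(1+\gs)\le\Rt]$ and requires no further work. For $\Cerg$, the same substitution into \eqref{eq:Cerg_assym_fin} yields $\Cerg\to\frac{1}{\ln2}\int_0^{\infty}\frac{1-\cdfgs}{1+x}\,dx=\frac{1}{\ln2}\E[\ln(1+\gs)]$, the equality following by integration by parts with the boundary terms vanishing because $1-\cdfgs$ decays exponentially. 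All that remains is to evaluate this single integral in closed form, and this is where the real effort lies.

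To produce the finite double sum I would first rewrite the series PDF \eqref{eq:PDF_SNRs} as a terminating polynomial-times-exponential valid for integer $m$: writing $f_{\gs}(x)\propto e^{-\frac{w_1\ds^\alpha x}{2b}}\,{}_1F_1(m;1;\delta w_1\ds^\alpha x)$ and applying Kummer's transformation ${}_1F_1(m;1;z)=e^{z}\,{}_1F_1(1-m;1;-z)$ collapses the confluent hypergeometric factor to a degree-$(m{-}1)$ polynomial, since $(1-m)_k=0$ for $k\ge m$. This is exactly what generates the outer sum $\sum_{k=0}^{\floor{m}-1}$ with coefficient proportional to $(-1)^k(1-m)_k\delta^k/k!$ and the combined exponential rate $\tfrac{1}{2b}-\delta$, explaining the prefactor $\exp((\tfrac1{2b}-\delta)w_1\ds^\alpha)$. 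Each resulting term then contributes an integral of the form $\int_0^\infty x^k e^{-\beta x}(1+x)^{-1}\,dx$; the substitution $u=1+x$ followed by a binomial expansion of $(u-1)^k$ separates the powers $(w_1\ds^\alpha)^t$ from the powers of $\tfrac1{2b}-\delta$ and reduces the integral to incomplete gamma functions of the form $\Gamma(-t,\cdot)$, producing the inner sum $\sum_{t=0}^{k}$. The hard part will be precisely this last step: assembling the correct incomplete-gamma identity and tracking the binomial and Pochhammer bookkeeping so that the two nested sums and the exponents of $\tfrac1{2b}-\delta$ and $w_1\ds^\alpha$ align exactly with \eqref{eq:Cerg_N0}; everything preceding it (the $N\to0$ reduction and the $\Pout$ case) I expect to be routine.
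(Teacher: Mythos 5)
Your reduction steps coincide with the paper's proof: the paper likewise sets $F_{\gemdapp}(x)=1$ as $N\rightarrow0$, obtains $\Pout\rightarrow F_{\gs}(2^{\Rt}-1)$ immediately from \eqref{eq:Pout_assym_fin}, and reduces $\Cerg$ to $\frac{1}{\ln2}\int_0^{\infty}\frac{1-F_{\gs}(x)}{1+x}\,dx$. The only divergence is in the closed-form evaluation of that last integral. The paper does not re-derive the terminating representation: it cites the finite-sum form of the shadowed-Rician CDF for integer $m$ (Miridakis et al.), in which the inner $\sum_{t=0}^{k}$ already appears because it comes from the finite expansion of $\gamma(1+k,\cdot)$ when passing from the PDF polynomial to the complementary CDF, and then applies the single identity $\int_{0}^{\infty}\frac{x^{t}}{1+x}e^{-\xi x}\,dx=e^{\xi}\,\Gamma(1+t)\,\Gamma(-t,\xi)$ term by term (the $\Gamma(1+t)=t!$ cancels the $t!$ in the cited CDF, which is why no $t!$ survives in \eqref{eq:Cerg_N0}). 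Your Kummer-transformation step is a legitimate way to obtain that same terminating form, but your proposed final step is where the bookkeeping goes astray: substituting $u=1+x$ and expanding $(u-1)^{k}$ produces integrals $\int_{1}^{\infty}u^{j-1}e^{-\beta u}\,du$ with $j\ge 0$, i.e.\ upper incomplete gamma functions $\Gamma(j,\beta)$ of \emph{non-negative} order, not the $\Gamma(-t,\cdot)$ of negative order appearing in \eqref{eq:Cerg_N0}; the two are related only after repeatedly applying the recurrence $\Gamma(a+1,x)=a\Gamma(a,x)+x^{a}e^{-x}$. So the $t$-sum in the stated result is not generated by your binomial expansion, and your route, while it would yield an equivalent closed form, would not land on the displayed expression without that extra (and nontrivial to track) conversion. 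Everything else is routine and matches the paper.
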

\begin{proof}[Proof:\nopunct]
See Appendix \ref{App:CorCerg_N0}.
\end{proof}

\begin{rem}
The asymptotic expression \eqref{eq:Cerg_N0} gives a theoretical upper-bound of the ergodic secrecy capacity, and \eqref{eq:Pout_N0} is a lower-bound of the secrecy outage probability.
\end{rem}

\subsubsection{When there are a large number of eavesdropping satellites, i.e., $N\rightarrow\infty$}
In this scenario, the secrecy performance with a large number of eavesdroppers is given as follows.
\begin{cor}
When $N\rightarrow \infty$, we have $\Cerg \rightarrow 0$ and $\Pout \rightarrow 1$.
\end{cor}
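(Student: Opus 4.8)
The plan is to show that as $N\rightarrow\infty$ the system becomes saturated with eavesdroppers so that the most detrimental eavesdropping SNR grows without bound, which forces the secrecy rate to zero almost surely. The cleanest route is to work with the approximate PPP framework of Theorem~\ref{thm:asymCerg} (or directly with the exact CDF $F_{\g_{\mathrm{e}^*}^{(p,q)}}$ from Lemma~\ref{Lem:CDF_SNRe*}) and track the pointwise limit of the eavesdropper SNR CDF. First I would observe that for any fixed threshold $x$, the CDF $F_{\gemd}(x)$ is a probability that \emph{every} effective eavesdropper sees SNR at most $x$; since each such event has probability strictly less than one and the events become increasingly numerous as $N$ grows, this product collapses to zero. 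Concretely, using $F_{\gemdapp}(x)=F_{\gemdmlapp}(x)F_{\gemdslapp}(x)$ from the expressions in Lemmas~\ref{Lem:CDFgemlapp} and \ref{Lem:CDFgeslapp}, both factors are of the form $\exp(-N c(x)/2)$ with $c(x)>0$ for every finite $x$ (the bracketed term is a positive surface-area-weighted deficit), so $F_{\gemdapp}(x)\rightarrow 0$ as $N\rightarrow\infty$ for each fixed $x$.

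Next I would feed this pointwise limit into the integral expressions. For the secrecy outage probability \eqref{eq:Pout_assym_fin}, the integrand $F_{\gemdapp}(2^{-\Rt}(1+x)-1)f_{\gs}(x)$ is dominated by $f_{\gs}(x)$, which is integrable, and converges pointwise to $0$; the dominated convergence theorem then gives that the integral vanishes, so $\Pout\rightarrow 1$. For the ergodic secrecy capacity \eqref{eq:Cerg_assym_fin}, the integrand $\frac{F_{\gemdapp}(x)}{1+x}(1-F_{\gs}(x))$ is similarly bounded by the integrable envelope $\frac{1}{1+x}(1-F_{\gs}(x))$ (whose integral is the eavesdropper-free capacity, finite by Corollary~\ref{cor:Cerg_N0}) and tends pointwise to $0$, so again dominated convergence yields $\Cerg\rightarrow 0$. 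If one prefers to avoid the low-altitude approximation, the identical argument runs on the exact form: in \eqref{eq:Cerg_fin} and \eqref{eq:Pout_fin} the dominant contribution as $N\rightarrow\infty$ comes from Case~4 (both $p$ and $q$ large), and $F_{\g_{\mathrm{e}^*}^{(p,q)}}(x)=F_{\gemdml}(x)^{p}F_{\gemdsl}(x)^{q}$ from Lemmas~\ref{Lem:CDF_SNReml} and \ref{Lem:CDF_SNResl} is a product of $p+q$ bracketed factors each strictly below one, hence vanishing as the typical occupancy $p+q\approx N\av/(2(r+\av))$ diverges.

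The main obstacle is making the interchange of limit and integral rigorous rather than merely pointwise, since $N$ enters both the integrand $F_{\gemdapp}$ \emph{and}, in the exact version, the weights $\mathcal{P}[N,p,q]$ and the number of integrals. The dominated convergence argument handles the approximate expressions cleanly because the envelopes are $N$-independent and integrable. For the exact expressions the subtlety is that one is summing $\frac{(N+1)(N+2)}{2}$ nonnegative terms, so I would instead bound $\Cerg$ above by replacing $F_{\g_{\mathrm{e}^*}^{(p,q)}}(x)$ with $F_{\g_{\mathrm{e}^*}^{(p,q)}}(X)$ for any fixed reference $X$ on $[0,X]$ and by the trivial bound $1$ on $(X,\infty)$, then taking $N\rightarrow\infty$ followed by $X\rightarrow\infty$; because $\sum_{p,q}\mathcal{P}[N,p,q]=1$, the probability mass concentrates on configurations with at least one eavesdropper whose presence drives $F_{\g_{\mathrm{e}^*}^{(p,q)}}(X)\rightarrow 0$. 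This two-stage limit is the only delicate point; everything else reduces to the positivity of the exponents $c(x)$ and standard integrability of $f_{\gs}$, both of which follow directly from the lemmas already established.
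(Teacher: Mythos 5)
Your proof is correct and follows essentially the same route as the paper: the paper likewise observes that $F_{\gemdapp}(x)\rightarrow 0$ pointwise as $N\rightarrow\infty$ and then substitutes this limit into \eqref{eq:Cerg_assym_fin} and \eqref{eq:Pout_assym_fin}. Your additions (the explicit exponential form $\exp(-Nc(x)/2)$ with $c(x)>0$, the dominated-convergence justification for interchanging limit and integral, and the sketch for the exact expressions) only make rigorous what the paper states in one line.
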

\begin{proof}[Proof:\nopunct]
As $N$ goes to infinity, the CDF of $\gemdapp$ becomes zero for a given $x$, i.e., $F_{\gemdapp}(x) \to 0$. By letting $F_{\gemdapp}(x)=0$ in \eqref{eq:Cerg_assym_fin} and \eqref{eq:Pout_assym_fin}, the proof is complete.
\end{proof}
\begin{rem}
$F_{\gemdapp}(x) \to 0$ means that the SNR at the most detrimental eavesdropper is certainly very high, e.g., $\gemdapp \rightarrow \infty$. This explains that when there is a large number of eavesdropping satellites, the most detrimental eavesdropping satellite is likely to have a very high eavesdropping capability, which is intuitively true.
\end{rem}

\subsubsection{When the transmit power of the terminal is large, i.e., $P\rightarrow\infty$}
In the high-SNR regime, the asymptotic ergodic secrecy capacity is obtained in the following theorem.
\begin{thm}\label{thm:CergHighSNR}
In the high-SNR regime, the ergodic secrecy capacity is asymptotically upper-bounded as
\begin{align}\label{eq:Cerg_Pinf}
    \lim_{P\rightarrow\infty}{\Cerg(P)} \le  \Cerg^{\infty},
\end{align}
where 
\begin{align}\label{eq:Cerginf_fin}
\Cerg^{\infty}
    &=\log_2\Lambda_{\dth,\av} + \left(1-F_{d_{\idxe}}(\dth)\right)^N\log_2\left(\frac{\Grml\Lambda_{\dmax,\dth}^{\alpha}}{\Grsl\Lambda_{\dth,\av}^{\alpha}}\right)\nonumber\\
    &\quad +\left(1-F_{d_{\idxe}}(\dmax)\right)^N\log_2 \left(\frac{\Grsl}{\Grml w_1 \ds^{\alpha}\Lambda_{\dmax,\dth}^{\alpha}}\right)
\end{align}
with $\Lambda_{x,y}$ given in \eqref{eq:Lamxy_fin} at the bottom of this page.
\end{thm}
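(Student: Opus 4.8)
The plan is to start from the instantaneous secrecy rate $R=[\log_2(1+\gs)-\log_2(1+\gemd)]^+$ and push it into the high-SNR regime, where $w_1,w_2\propto 1/P\to 0$ so that $\log_2(1+\gamma)\to\log_2\gamma$ and $R$ collapses to (the positive part of) a ratio of SNRs. The upper bound will come from two elementary ingredients applied after conditioning: the inequality $[\log_2 Z]^+=\log_2\max(1,Z)\le\log_2(1+Z)$ with $Z=(1+\gs)/(1+\gemd)$, and Jensen's inequality for the concave map $t\mapsto\log_2(1+t)$. Conditioning on the mutually exclusive location events for the nearest effective eavesdropper gives $\Cerg=\E[R]=\sum_c\P[c]\,\E[R\mid c]\le\sum_c\P[c]\log_2\!\big(1+\E[Z\mid c]\big)$, and since $\E[Z\mid c]\gg1$ as $P\to\infty$ this is exactly the mechanism that produces the three logarithmic terms of $\Cerg^\infty$ in \eqref{eq:Cerginf_fin}.

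Next I would decompose over where the nearest (hence, after the fading is averaged, most detrimental) effective eavesdropper falls. Because $\Aeml$ is both nearer and carries the larger gain $\Grml>\Grsl$, a main-lobe eavesdropper dominates any side-lobe one, so the relevant cases are: (i) at least one eavesdropper in $\Aeml$; (ii) none in $\Aeml$ but at least one in $\Aesl$; and (iii) no effective eavesdropper at all. By Lemma~\ref{Lem:CDF_d_e} and the finite-dimensional distribution of the BPP, these are governed by the survival functions built from $F_{d_\idxe}$, in particular $(1-F_{d_\idxe}(\dth))^N=\P[\text{none in }\Aeml]$ and $(1-F_{d_\idxe}(\dmax))^N=\P[\text{none effective}]$. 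Case (iii) forces $\gemd=0$, i.e.\ $R=\log_2(1+\gs)$, and since $\gs\propto P$ through $1/w_1$ this is precisely what generates the $-\log_2 w_1$ (hence $\log_2 P$) growth weighted by $(1-F_{d_\idxe}(\dmax))^N$ in the last term of \eqref{eq:Cerginf_fin}. In cases (i) and (ii) the common factors of $P$ cancel in $\gs/\gemd$, so those contributions stay bounded; the gain ratio $\Grml/\Grsl=w_2/w_1$ surfaces in case (ii) because $\gs$ carries $\Grml$ while the side-lobe eavesdropper carries $\Grsl$, matching term~2.

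The core computation is then $\E[Z\mid c]=\E[\gs/\gemd\mid c]$ per case, whose distance part is handled by the nearest-point order statistic: its density is $N f_{d_\idxe}(z)(1-F_{d_\idxe}(z))^{N-1}$, so the conditional distance moment over a range $(y,x]$ is $\tfrac{N}{2r(r+\av)}\int_y^x z^2(1-F_{d_\idxe}(z))^{N-1}\,dz$ after substituting $f_{d_\idxe}(z)=z/[2r(r+\av)]$. Expanding $(1-F_{d_\idxe}(z))^{N-1}=\big((2r+\av)^2-z^2\big)^{N-1}/(4r(r+\av))^{N-1}$ by the binomial theorem and integrating $z^2$ term by term yields exactly the sum $\sum_i\binom{N-1}{i}(\cdots)^{N-1-i}(\cdots)^i\frac{x^{2i+3}-y^{2i+3}}{2i+3}$, i.e.\ the quantity $\Lambda_{x,y}$ of \eqref{eq:Lamxy_fin}, with the $1/\ds$ coming from normalizing $\gemd$ against $\gs$. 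A final use of Jensen in the form $\E[\log_2 d]\le\log_2\E[d]$ then turns these moments into the $\log_2\Lambda_{\dth,\av}$ and $\log_2\Lambda_{\dmax,\dth}$ contributions, with the path-loss exponent $\alpha$ multiplying the $\Lambda$-factors that arise from the $d^{-\alpha}$ dependence.

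The step I expect to be the main obstacle is the bookkeeping that assembles these pieces into the exact coefficients and powers of \eqref{eq:Cerginf_fin}. Two points demand care. First, the maximum is over a random number of eavesdroppers, so the nearest-point order statistic and the case probabilities must be combined consistently; the cleanest route is to integrate the conditional rate against the nearest-distance density and integrate by parts, which is what converts the endpoints $\dth,\dmax$ into the survival weights $(1-F_{d_\idxe}(\dth))^N$ and $(1-F_{d_\idxe}(\dmax))^N$ (and leaves the unit coefficient on $\log_2\Lambda_{\dth,\av}$ as the baseline term, the others being increments). Second, every application of Jensen and of $[\cdot]^+\le\log_2(1+\cdot)$ must point in the upper-bound direction, and the fading must be shown to vanish to leading order so that only the geometric $\Lambda_{x,y}$ and the gains remain. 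Checking that the residual $w_1$ yields the advertised $\log_2 P$ slope — that is, that $S_\infty=(1-F_{d_\idxe}(\dmax))^N$ in \eqref{eq:Linf_fin} — gives a clean consistency test on the final expression.
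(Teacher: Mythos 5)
Your decomposition is the right one and matches the paper's: condition on where the nearest effective eavesdropper sits ($d_0\le\dth$, $\dth<d_0\le\dmax$, $d_0>\dmax$), observe that only the empty case retains a $\log_2 P$ growth term, and compute the conditional distance moment of the nearest-point order statistic by binomial expansion of $(1-F_{d_\idxe}(z))^{N-1}$ --- that is exactly how \eqref{eq:Lamxy_fin} arises, and your identification of the weights $(1-F_{d_\idxe}(\dth))^N$ and $(1-F_{d_\idxe}(\dmax))^N$ and of the slope $S_{\infty}$ is correct. The genuine gap is in your central bounding mechanism. Applying Jensen to obtain $\E[R\mid c]\le\log_2(1+\E[Z\mid c])$ with $Z=(1+\gs)/(1+\gemd)$ requires the mean of the SNR \emph{ratio}, and this mean does not stay bounded in the cases where an eavesdropper is present: by independence, $\E[Z\mid c]=\E[1+\gs]\,\E[(1+\gemd)^{-1}\mid c]$, and because the shadowed-Rician power gain has strictly positive density at the origin (from \eqref{eq:CDF_ch_gain}, $F_{h}(x)\approx Kx$ near zero), $\E[(1+\gemd)^{-1}]$ decays only like $\ln P/P$. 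Hence $\E[Z\mid c]$ diverges logarithmically in $P$ and your bound grows like $\log_2\ln P$ instead of converging to the finite constants in the first two terms of \eqref{eq:Cerginf_fin}. Your remark that $\E[Z\mid c]\gg1$ ``produces the three logarithmic terms'' is symptomatic of this: in cases (i)--(ii) the ratio $Z$ itself is $O(1)$ in $P$; it is only its heavy-tailed mean that blows up.

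The paper avoids this by a different pair of steps. It first replaces $\gemd$ by the SNR of the \emph{nearest} eavesdropper $\gamma_{\mathrm{e}_0}\le\gemd$ (the move that licenses the single order statistic $d_0$ you also use), and then works with $\E[\log_2(\gs/\gamma_{\mathrm{e}_0})]$ rather than $\log_2\E[\gs/\gamma_{\mathrm{e}_0}]$, invoking the approximations $\E[\log(X/Y)]\approx\log(\E X/\E Y)$ and $\E[d_0^{-\alpha}]\approx(\E[d_0])^{-\alpha}$ from [\ref{Ref:Yuan2}]. Since the two links share the same fading law, the fading means cancel, only the geometric factor survives, and $\Lambda_{x,y}$ emerges as the conditional first moment $\E[d_0\mid\cdot]/\ds$, entering as $\Lambda^{\alpha}$. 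If you wish to keep a Jensen-style argument you must apply it at the level of $\E[\log(\cdot)]$ (e.g., $\E[\log\gs]\le\log\E[\gs]$ for the numerator only) and treat $\E[\log\gamma_{\mathrm{e}_0}]$ directly, exploiting that $\E[\log h]$ is finite even though $\E[1/h]$ is not; as written, the step fails.
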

\begin{proof}[Proof:\nopunct]
See Appendix \ref{App:CorCergHighSNR}.
\end{proof}

\setcounter{equation}{40}
The asymptotic upper-bound of the ergodic secrecy capacity in \eqref{eq:Cerginf_fin} can be expressed as [\ref{Ref:Jin}]
\begin{align}
\Cerg^{\infty}
    = S_{\infty}(\log_2 P - \mathcal{L}_{\infty}),
\end{align}
where $S_{\infty}$ denotes the high-SNR slope in bps/Hz (3 dB), and $\mathcal{L}_{\infty}$ is the high-SNR power offset in 3 dB units. After some manipulations, the high-SNR slope of $\eqref{eq:Cerginf_fin}$ is obtained as
\begin{align}\label{eq:HS_slope}
S_{\infty}
    = \lim_{P\rightarrow\infty}\frac{\Cerg^{\infty}}{\log_2{P}}
    =\left(1-\frac{0.5}{1+r/\av}\right)^N,
\end{align}
and the high-SNR power offset is given in \eqref{eq:Linf_fin} at the bottom of this page.
\begin{rem}\label{rem:CergHighSNR}
In \eqref{eq:HS_slope}, as $r/\av$ is always positive, $1-\frac{0.5}{1+r/\av}$ lies in $(0.5,1)$. From this observation, we can know that the high-SNR slope $S_{\infty}$  decreases with the number of eavesdroppers~$N$.
\end{rem}
\setcounter{equation}{43}

Fig. \ref{Fig:Cerg_vs_P} shows the ergodic secrecy capacity versus the transmit power of the terminal $P$ for various numbers of eavesdropping satellites. This figure verifies Theorem \ref{thm:CergHighSNR}, showing that the gap between the asymptotic and approximate results becomes smaller as $P$ increases. As mentioned in Remark~\ref{rem:CergHighSNR}, the asymptotic slope gets smaller as the number of the eavesdroppers increases, which means a greater eavesdropping capability.

\section{Eavesdropping Satellites at Different Altitudes}\label{sec:sat_diff_alti}
In this section, we extend the analytical results to the general case that the satellites are located at different altitudes, which can be applied to practical LEO satellite constellations.
Assume that $\Nl$, $v=\{1,2,\cdots,V\}$, eavesdropping satellites are located at the altitude $\al$ according to a BPP $\BPPl$, where $\sum_{v=1}^{V}\Nl=N$.
Then, the SNR at the most detrimental eavesdropping satellite at the altitude $\al$ is given by
\begin{align}\label{eq:SNRe*l}
\glemd
    =& \begin{cases} 
    \max\limits_{\idxe\in\BPPle} \g_\idxe, & \mbox{if } \BPPle \neq \emptyset,\\
    0, & \mbox{if } \BPPle = \emptyset,
     \end{cases}
\end{align}
where $\BPPle$ is the set of effective eavesdropping satellites at the altitude $\al$. The SNR at the most detrimental eavesdropping satellite among all the satellites is given by 
\begin{align}\label{eq:SNRe*L}
\gLemd
    = \max\limits_{v\in\{1,2,\cdots,V\}} \glemd.
\end{align}
Using Lemmas \ref{Lem:CDFgemlapp} and \ref{Lem:CDFgeslapp}, the CDF of $\gLemd$ is given in the following lemma.

\begin{lem}\label{Lem:CDF_SNRe*L}
When $\Nl$ eavesdropping satellites are randomly located at the altitude $\al$, $v=\{1,2,\cdots,V\}$, the CDF of the SNR at the most detrimental eavesdropping satellite is approximated as
\begin{align}\label{eq:CDF_SNRe*L}
F_{\gLemd}(x)
    \approx \prod_{v=1}^V F_{\gemdmlapp}(x|\Nl,\al)F_{\gemdslapp}(x|\Nl,\al),
\end{align}
where $F_{\gemdmlapp}(x|\Nl,\al)$ and $F_{\gemdslapp}(x|\Nl,\al)$ are the CDFs of $\gemdmlapp$ and $\gemdslapp$ in Lemmas \ref{Lem:CDFgemlapp} and \ref{Lem:CDFgeslapp} with $N=\Nl$ and $\av=\al$.
\end{lem}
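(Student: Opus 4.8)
The plan is to reduce the multi-altitude problem to the single-altitude Poisson approximation developed in Section~\ref{Sec:Perf_approx}. Starting from the definition \eqref{eq:SNRe*L}, I would write the CDF of $\gLemd$ as
\begin{align}
F_{\gLemd}(x)
    = \P\left[\max_{v\in\{1,\cdots,V\}} \glemd \le x\right]
    = \P\left[\bigcap_{v=1}^{V}\{\glemd \le x\}\right].
\end{align}
The crucial structural fact is that the eavesdropping satellites at different altitudes are generated by separate binomial point processes $\BPPl$, each defined over its own spherical shell of radius $r+\al$, and these processes are mutually independent. Hence the per-altitude maxima $\glemd$ are independent random variables, and the joint probability factorizes into
\begin{align}
F_{\gLemd}(x) = \prod_{v=1}^{V} \P[\glemd \le x] = \prod_{v=1}^{V} F_{\glemd}(x).
\end{align}

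Next, for each fixed $v$ I would observe that $\glemd$ defined in \eqref{eq:SNRe*l} is precisely the single-altitude most-detrimental SNR analyzed earlier, now with the substitutions $N\to\Nl$ and $\av\to\al$. Applying the low-altitude Poisson limit, the set $\BPPle$ is approximated by a Poisson point process and split into its main-lobe and side-lobe contributions exactly as in the derivation of $F_{\gemdapp}(x)=F_{\gemdmlapp}(x)F_{\gemdslapp}(x)$ stated following Lemma~\ref{Lem:CDFgeslapp}. Since these two contributions live on the disjoint regions $\Aeml$ and $\Aesl$, the same independence argument used in Lemma~\ref{Lem:CDF_SNRe*} gives
\begin{align}
F_{\glemd}(x) \approx F_{\gemdmlapp}(x|\Nl,\al)\,F_{\gemdslapp}(x|\Nl,\al),
\end{align}
with the two factors being the expressions of Lemmas~\ref{Lem:CDFgemlapp} and \ref{Lem:CDFgeslapp} evaluated at $N=\Nl$, $\av=\al$. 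Substituting this into the product over $v$ yields the claimed form \eqref{eq:CDF_SNRe*L}.

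Because the heavy lifting, namely the probability generating functional computation and the Poisson approximation, is already done in Lemmas~\ref{Lem:CDFgemlapp}--\ref{Lem:CDFgeslapp}, the remaining work is mainly bookkeeping. The main point requiring care is the independence step: I would justify it by noting that conditioning on the fixed counts $\Nl$ renders the altitude-wise processes independent by construction, and that the partition thresholds $\psth$, $\dth$, and $\dmax$ are deterministic functions of the altitude, so the main-/side-lobe split is performed altitude-by-altitude without introducing any cross-altitude coupling. The approximation symbol $\approx$ in \eqref{eq:CDF_SNRe*L} then simply aggregates the per-altitude Poisson-limit errors, each of which is controlled whenever the corresponding $\al$ is small.
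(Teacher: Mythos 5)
Your proposal is correct and follows essentially the same route as the paper: factor the CDF of the maximum over altitudes into a product of per-altitude CDFs using the independence of the altitude-wise point processes, then approximate each factor by the product $F_{\gemdmlapp}(x|\Nl,\al)F_{\gemdslapp}(x|\Nl,\al)$ from Lemmas \ref{Lem:CDFgemlapp} and \ref{Lem:CDFgeslapp} with $N=\Nl$ and $\av=\al$. Your version merely spells out the independence justification that the paper leaves implicit.
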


\begin{proof}[Proof:\nopunct]
The CDF of $\gLemd$ is given by $F_{\gLemd}(x)=\P[\gLemd \le x] = \prod_{v=1}^V\P[\glemd \le x]= \prod_{v=1}^V F_{\glemd}(x)$.
The CDF of $\glemd$ is approximated as $F_{\glemd}(x) \approx F_{\gemdmlapp}(x|\Nl,\al)F_{\gemdslapp}(x|\Nl,\al)$ by using Lemmas \ref{Lem:CDFgemlapp} and \ref{Lem:CDFgeslapp}, which completes the proof. 
\end{proof}

This lemma is used to derive the ergodic secrecy capacity and secrecy outage probability for the satellites with the different altitudes in the following theorem.

\begin{thm}\label{thm:CergL}
The ergodic secrecy capacity and the secrecy outage probability of the uplink satellite system in the presence of $\Nl$ eavesdropping satellites randomly located at the altitude $\al$, $v=\{1,2,\cdots,V\}$, are respectively approximated as
\begin{align}\label{eq:CergL}
\CergL
    \approx \frac{1}{\ln2} \int_0^{\infty}\frac{F_{\gLemd}(x)}{1+x}(1-F_\gs(x))dx
\end{align}
and
\begin{align}
\PoutL 
    \approx 1-\int_{2^{\Rt}-1}^{\infty} F_{\gLemd}(2^{-\Rt}(1+x)-1) f_{\gs}(x)dx.
\end{align}
\end{thm}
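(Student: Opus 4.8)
The plan is to mirror the derivations of Theorems \ref{thm:Cerg}, \ref{thm:Pout}, and especially the approximate versions in Theorem \ref{thm:asymCerg}, replacing the single-altitude eavesdropper SNR CDF $F_{\gemdapp}$ by the all-altitude CDF $F_{\gLemd}$ already obtained in Lemma \ref{Lem:CDF_SNRe*L}. Two structural facts make this substitution legitimate. First, the serving-satellite SNR $\gs$ is generated by the serving link's fading and fixed geometry, hence is independent of $\gLemd$, which aggregates only the eavesdropping satellites. Second, Lemma \ref{Lem:CDF_SNRe*L} has already folded the per-altitude Poisson approximations and the cross-altitude independence into the product form of $F_{\gLemd}$, so no further conditioning on the number of satellites in each region is required.

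For the ergodic secrecy capacity, I would start from $\CergL = \E\big[[\log_2(1+\gs)-\log_2(1+\gLemd)]^+\big]$ and use the integral representation $\log_2(1+\gamma) = \frac{1}{\ln 2}\int_0^\infty \frac{\mathbf{1}(u<\gamma)}{1+u}\,du$. This rewrites the positive part as $\frac{1}{\ln 2}\int_0^\infty \frac{\mathbf{1}(\gLemd<u<\gs)}{1+u}\,du$, which correctly vanishes whenever $\gs\le\gLemd$. Taking expectations, applying Fubini, and using the independence of $\gs$ and $\gLemd$ to factor $\P[\gLemd<u<\gs]=F_{\gLemd}(u)\,(1-F_\gs(u))$ yields $\CergL \approx \frac{1}{\ln 2}\int_0^\infty \frac{F_{\gLemd}(u)}{1+u}(1-F_\gs(u))\,du$, which is \eqref{eq:CergL}; the $\approx$ is inherited solely from the Poisson approximation carried inside $F_{\gLemd}$.

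For the secrecy outage probability, I would work with the complement of the outage event. Since $\Rt>0$ and the secrecy rate is nonnegative, $R>\Rt$ forces $R>0$, so the event is equivalent to $\log_2(1+\gs)-\log_2(1+\gLemd)>\Rt$, i.e. $\gLemd<2^{-\Rt}(1+\gs)-1$, which in turn requires $\gs>2^{\Rt}-1$. Conditioning on $\gs$ and invoking independence gives $\P[R>\Rt]=\int_{2^{\Rt}-1}^{\infty} F_{\gLemd}\big(2^{-\Rt}(1+x)-1\big)\,\pdfgs\,dx$, and complementing produces the stated expression for $\PoutL$.

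I do not expect a genuine obstacle: once Lemma \ref{Lem:CDF_SNRe*L} supplies $F_{\gLemd}$ in product form, both claims reduce to exactly the manipulations used for the single-altitude approximations in Theorem \ref{thm:asymCerg}. The only point deserving care is confirming that the multi-altitude setting collapses to a single integral rather than a sum over distribution cases as in the exact Theorems \ref{thm:Cerg} and \ref{thm:Pout}. This holds precisely because the discrete conditioning on $\BPP(\Aeml)=p$ and $\BPP(\Aesl)=q$ has been absorbed into the probability generating functional underlying each factor $F_{\gemdmlapp}(\cdot|\Nl,\al)$ and $F_{\gemdslapp}(\cdot|\Nl,\al)$, so the weighting $\mathcal{P}[N,p,q]$ no longer appears and no per-region combinatorial sum survives.
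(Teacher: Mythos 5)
Your proposal is correct and follows essentially the same route as the paper: the paper's proof simply invokes Lemma \ref{Lem:CDF_SNRe*L} for the product-form CDF of $\gLemd$ and repeats the steps of Theorem \ref{thm:asymCerg} (hence of Theorems \ref{thm:Cerg} and \ref{thm:Pout}), relying on the independence of $\gs$ and $\gLemd$ and on the fact that the Poisson approximation removes the sum over distribution cases, exactly as you argue. Your layer-cake representation of $\log_2(1+\gamma)$ is just a more explicit justification of the double-integral manipulation labeled $(a)$ in Appendix \ref{App:Thm1}, not a different method.
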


\begin{proof}[Proof:\nopunct]
Using the derived CDF of $\gLemd$ in \eqref{eq:CDF_SNRe*L}, this result can be obtained with the similar steps as in the proof of Theorem~\ref{thm:asymCerg}.
\end{proof}

This theorem provides the approximate results that can be applicable to the satellites with arbitrary low altitudes and becomes the result of Theorem \ref{thm:asymCerg} when a single altitude is considered.

\section{Simulation Results}\label{Sec:Sim_results}

In this section, we numerically verify the derived results with the simulation parameters listed in Table \ref{Table:Sim_Param} unless otherwise stated, where the parameters for the terminal are based on the handheld-type terminal considered in the 3GPP NTN standard [\ref{Ref:3GPP_38.821}].
For the shadowed-Rician fading, the average shadowing is assumed with $\{b=0.126,\ m=10.1,\ \Omega=0.835\}$ [\ref{Ref:Abdi}].
Except for the simulations shown in Fig. \ref{Fig:C_vs_as}, we set the number of altitudes for the eavesdropping satellites to $V=1$ and denote the altitude by $\av$ for simplicity of notation.
For the single altitude constellation, i.e., $V=1$, the analytical results for the secrecy performance are obtained from Theorems \ref{thm:Cerg} and \ref{thm:Pout}, 
while for the multi-altitude constellation, i.e., $V>1$, the analytical results are given from Theorem \ref{thm:CergL}.

\begin{table}[t]
    \caption{Simulation Parameters}\label{Table:Sim_Param}
    \centering
    \begin{tabular}{|l|c|}
     \hline 
     Parameter & Value \\
     \hline\hline
     Radius of the Earth $r$ [km] & 6,378\\
     \hline
     Path-loss exponent $\alpha$ & 2\\
     \hline
     Speed of light $c$ [m/s] & $3 \times 10^8$\\
     \hline
     Carrier frequency $f_{\mathrm{c}}$ [GHz] & 2 \\
     \hline
     Carrier bandwidth $W$ [MHz] & 1\\
     \hline
     Transmit power $P$ [dBm] & 23\\
     \hline
     Noise spectral density $N_0$ [dBm/Hz] & $-174$
     \\
     \hline
     Transmit antenna gain $G_{\mathrm{t}}$ [dBi] & 0\\
     \hline
     Receive antenna gain for the main lobe $G_{\mathrm{r,ml}}$ [dBi] & 30\\
     \hline
     Receive antenna gain for the side lobe $G_{\mathrm{r,sl}}$ [dBi] & 10
     \\
     \hline
    \end{tabular}
\end{table}

Fig. \ref{Fig:Pcase_vs_N} shows the probability of distribution cases for the eavesdropping satellites with both fixed-beam (FB) and steerable-beam (SB) antennas. 
When the number of eavesdropping satellites is small, the probability of Case 1, i.e., the probability that there is no effective eavesdropping satellite, is larger than those of the other cases, while as $N$ increases, Case 4 becomes the most probable case. 
This is because, with the large number of eavesdropping satellites, it is more likely for both $\Aeml$ and $\Aesl$ to include at least one satellite.
It is also shown that the probability of Case 2 is much larger than that of Case 3 for the eavesdropping satellites with the FB antennas.
This is because the surface area of $\Aesl$ is $\mathcal{S}_{\Aesl}=51.88 \times 10^6$ $\mathrm{km}^2$, which is approximately 10 times larger than that of $\Aeml$, $\mathcal{S}_{\Aeml}=5.26 \times 10^6$ $\mathrm{km}^2$, so that Case 3 is much less likely to happen than Case 2.
On the other hand, for the eavesdropping satellites with the SB antennas, the surface areas of $\Aeml$ and $\Aesl$ are given by $\mathcal{S}_{\Aeml}=2.56 \times 10^7$ $\mathrm{km}^2$ and $\mathcal{S}_{\Aesl}=3.15 \times 10^7$ $\mathrm{km}^2$, respectively.
The gap between $\mathcal{S}_{\Aeml}$ and $\mathcal{S}_{\Aesl}$ becomes smaller compared to that with the FB antennas, which in turn makes the probabilities of Cases 2 and 3 close to each other.

\begin{figure}
\centering
\subfigure[Fixed-beam case.]{
\includegraphics[width=0.45\columnwidth]{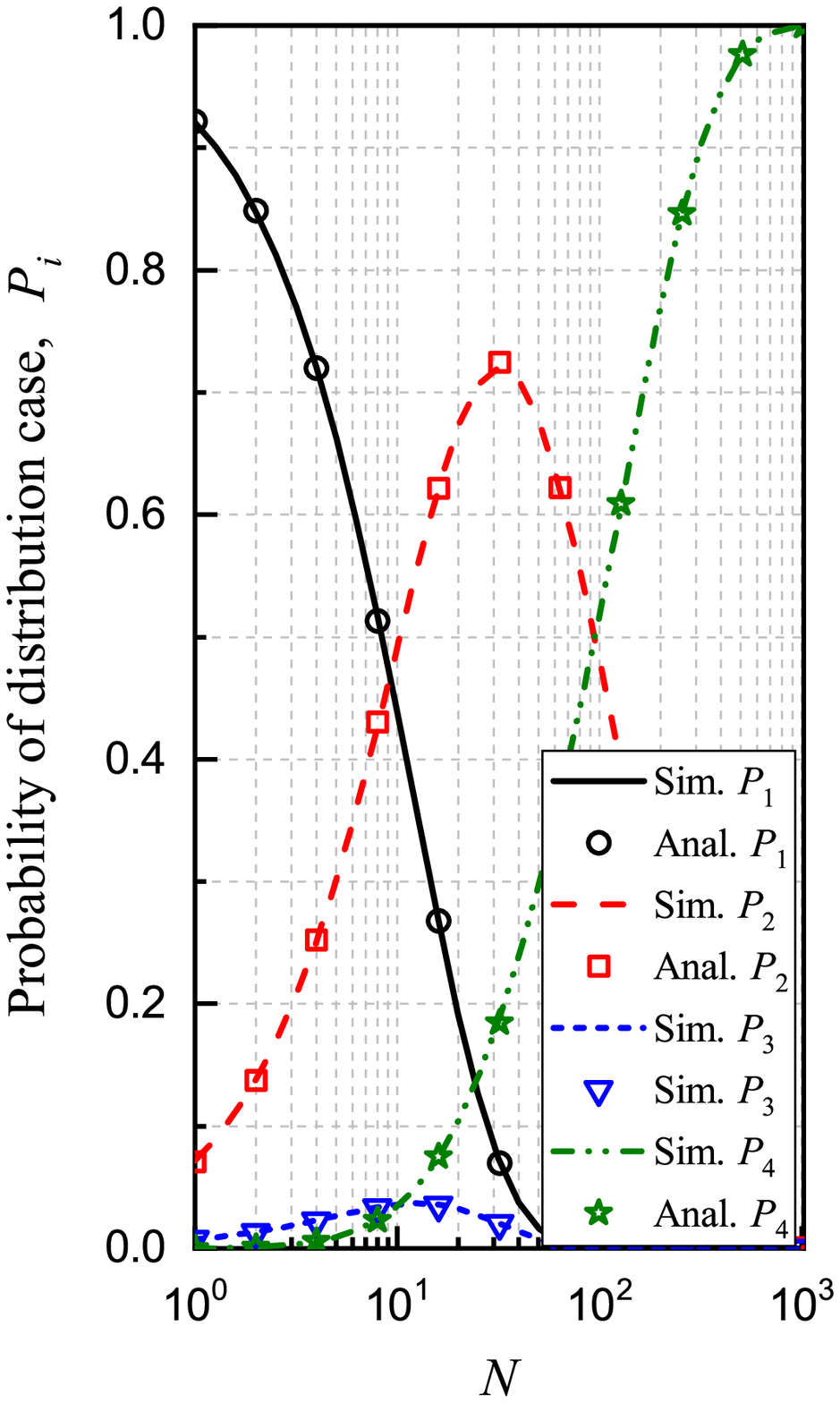}
}
\subfigure[Steerable-beam case.]{
\includegraphics[width=0.45\columnwidth]{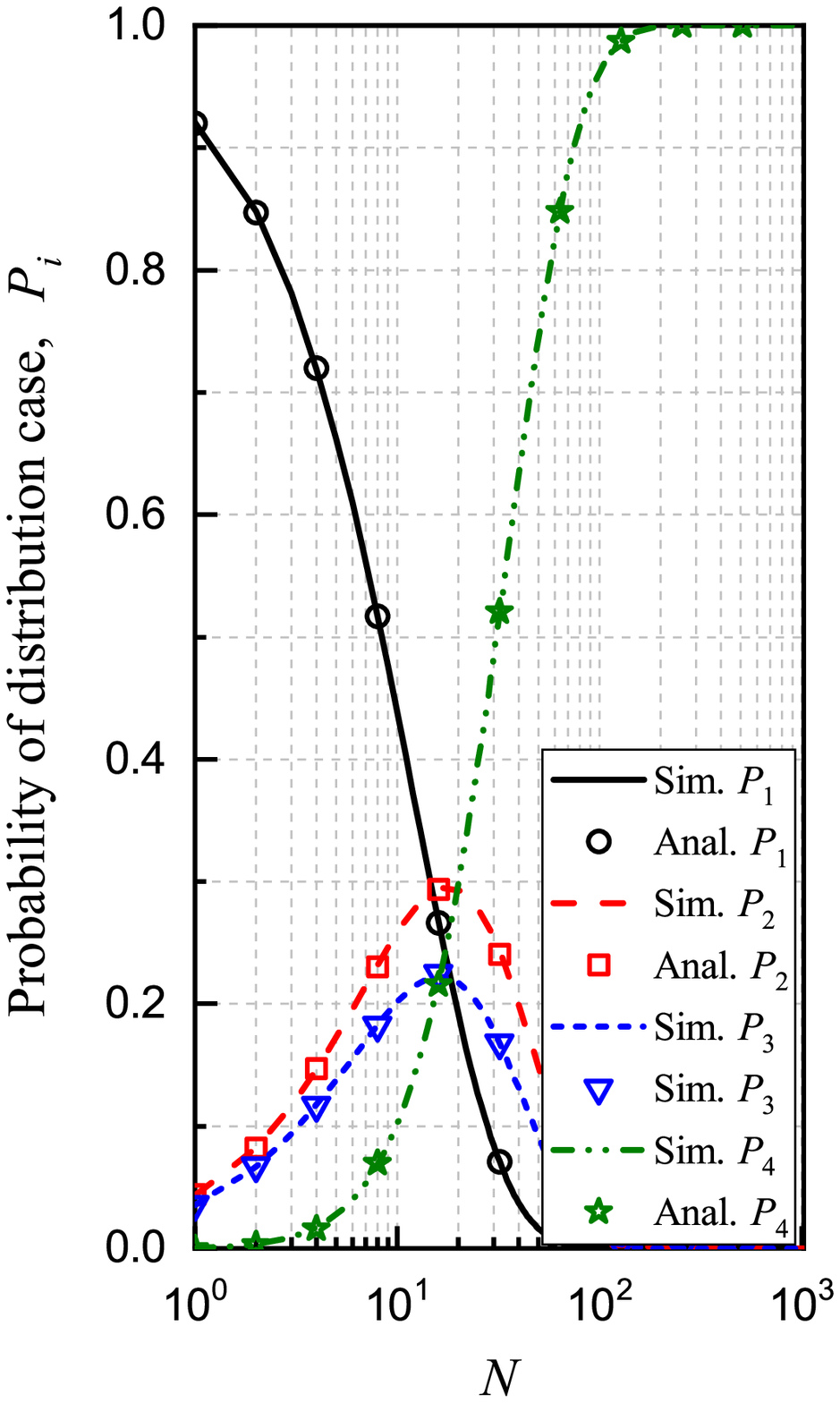}
}
\caption{Probabilities of distribution cases for the eavesdropping satellites with (a) fixed-beam and (b) steerable-beam antennas, where $\av=1200$ km, $\wth=40$ deg, and $\Delta\omega_\mathrm{sb}=15$ deg.}

\label{Fig:Pcase_vs_N}
\end{figure}

Fig. \ref{Fig:C_vs_as} shows the impact of the serving satellite's altitude $\as$ on the secrecy capacities with $\wth=20$ deg, $\wsb=10$ deg, $\epsilon=0.1$, $V=2$, $\{a_1, a_2\}=\{1015,1325\}$ km, and $\{N_1, N_2\}=\{78,220\}$ [\ref{Ref:Pachler}].
The analytical results for the satellites at different altitudes well match the simulation results.
The ergodic and outage secrecy capacities decrease with $\as$ because the path loss of the legitimate link becomes larger.
The secrecy capacities for the eavesdropping satellites with the FB antennas are always higher than those with the SB antennas.
This is mainly because when the eavesdropping satellites have a more flexible beam-steering capability, the number of eavesdropping satellites whose main lobes are directed to the terminal increases.

\begin{figure}
\centering
\subfigure[Ergodic secrecy capacity.]{
\includegraphics[width=0.45\columnwidth]{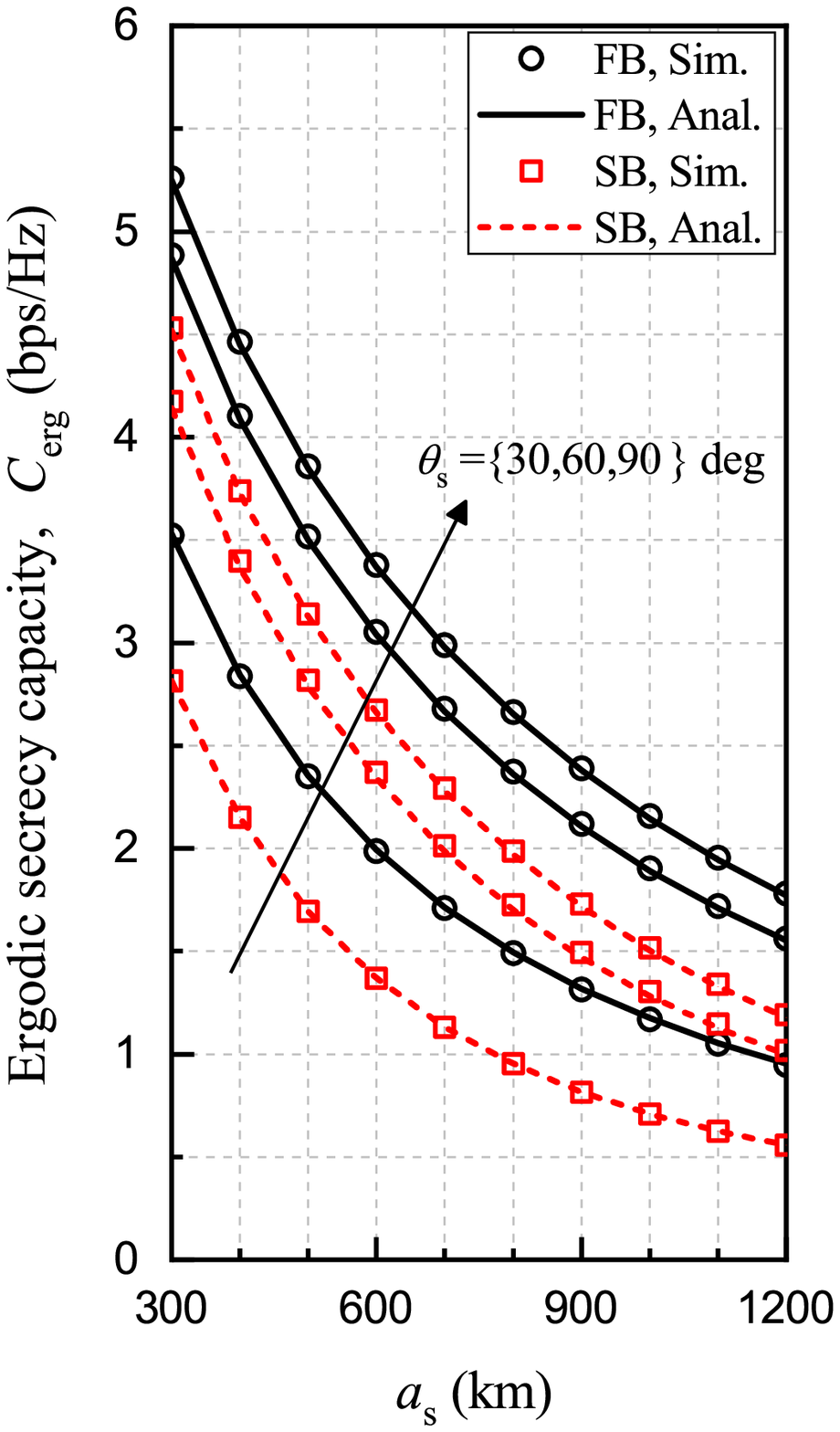}
}
\subfigure[Outage secrecy capacity.]{
\includegraphics[width=0.45\columnwidth]{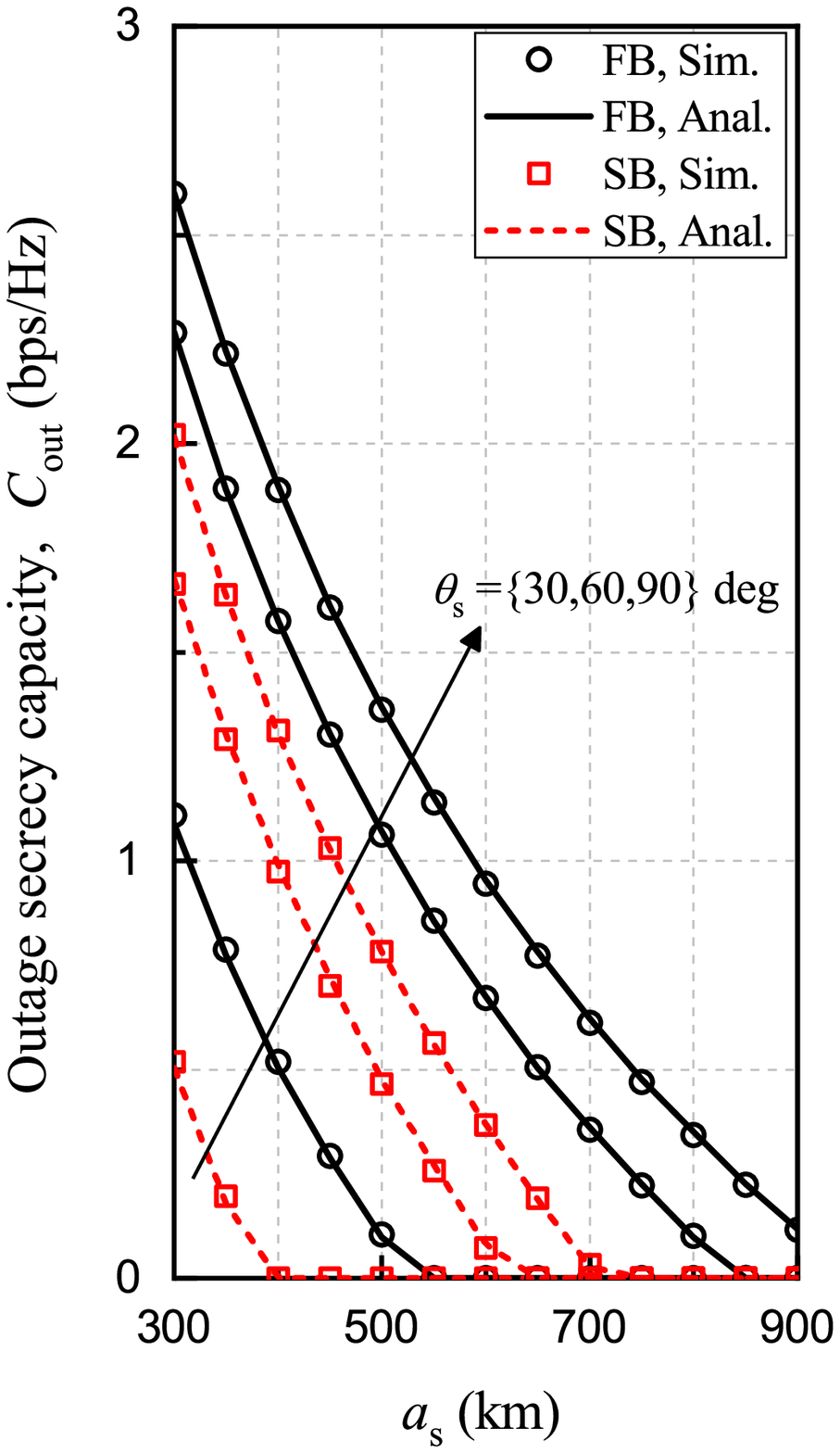}
}
\caption{Ergodic and outage secrecy capacities versus the altitude of the serving satellite $\as$ for various numbers of eavesdropping satellites with $V=2$, $\{a_1, a_2\}=\{1015,1325\}$ km, $\{N_1, N_2\}=\{78,220\}$, $\thes=60$ deg, $\wth=20$ deg, $\wsb=10$ deg, and $\epsilon=0.1$.}
\label{Fig:C_vs_as}
\end{figure}

\begin{figure}
\begin{center}
\includegraphics[width=0.90\columnwidth]{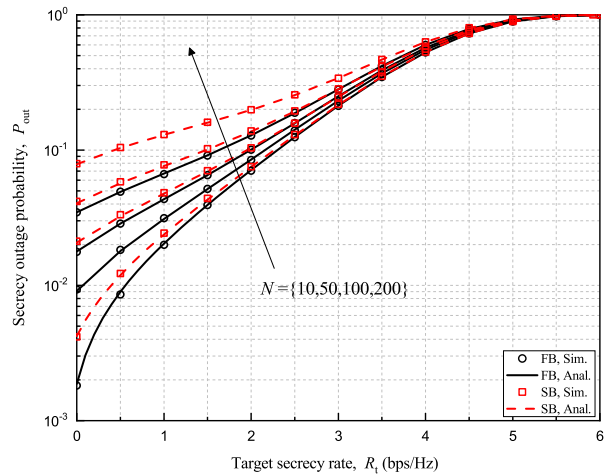}
\end{center}
\setlength\abovecaptionskip{.25ex plus .125ex minus .125ex}
\setlength\belowcaptionskip{.25ex plus .125ex minus .125ex}
\caption{Secrecy outage probability versus the target secrecy rate $\Rt$ for various numbers of eavesdropping satellites $N=\{10,50,100,200\}$ with  $\as=\av=600$, $\thes=60$ deg, $\wth=20$ deg, and $\wsb=10$ deg.}

\label{Fig:Pout_vs_Rt}
\end{figure}

Fig. \ref{Fig:Pout_vs_Rt} shows the secrecy outage probability versus the target secrecy rate $\Rt$ for various numbers of eavesdropping satellites $N=\{10,50,100,200\}$ with  $\as=\av=600$, $\thes=60$ deg, $\wth=20$ deg, and $\wsb=10$ deg. 
The analytical results for the exact secrecy outage probability are in good agreement with the simulation results. 
The secrecy outage probability increases with the target secrecy rate $\Rt$, which is expected from the definition of the secrecy outage probability. 
To achieve 10$\%$ of the secrecy outage probability for $N=\{10, 50, 100, 200\}$, the target secrecy rates for the FB case have to be less than approximately $\{2.3, 2.16, 1.99, 1.64\}$ bps/Hz.
For the SB case, the maximum values of the target secrecy rates are $\{2.26, 1.95, 1.48, 0.43\}$ bps/Hz, which are smaller than the FB case due to the greater ability of eavesdropping.
From these observation, SB antennas will enhance the eavesdropping performance at the price of higher costs.

Fig. \ref{Fig:Cerg_vs_N_add_asym} shows the ergodic secrecy capacity versus the number of eavesdropping satellites $N$ for various altitudes of the serving satellite $\as=\{300,600,1200\}$ km with $\thes=60$ deg, $\av=600$ km, $\wth=40$ deg, $\Delta\omega_\mathrm{sb}=20$ deg, and $\epsilon=0.1$.
As expected, the ergodic and outage secrecy capacities decrease with $N$ because the SNR at the most detrimental eavesdropping satellite increases with $N$.
This means that, from an eavesdroppers' perspective, it will be more efficient to deploy eavesdropping satellites as many as possible.
With small $N$, the performance gap between the FB and SB cases is marginal as the probability that the main lobes of the eavesdropping satellites' beams are directed to the terminal is very low for both FB and SB cases.
In contrast, with large $N$, the performance gap is large because there are many effective eavesdropping satellites that can steer their main lobes in the direction to the terminal for the SB case.

Fig. \ref{Fig:Pout_vs_N_add_asym} shows the secrecy outage probability versus the number of eavesdropping satellites $N$ for various altitudes of the serving satellite $\as=\{300,600,1200\}$ km with $\thes=60$ deg, $\av=600$ km, $\wth=40$ deg, $\Delta\omega_\mathrm{sb}=20$ deg, and $\Rt=3$ bps/Hz.
The secrecy outage probability increases with the number of eavesdropping satellites $N$.
For $\as=300$ km, when there are less than 56 eavesdropping satellites with the FB antennas, less than 10$\%$ of the secrecy outage probability is achieved, and, for the satellites with the SB antennas, it is achieved with less than  16 eavesdropping satellites.
On the other hand, when $\as=\{600,1200\}$ km, 10$\%$ of the secrecy outage probability cannot be achieved with any number of eavesdropping satellites due to the low SNR of the legitimate link.

\begin{figure}
\begin{center}
\includegraphics[width=0.90\columnwidth]{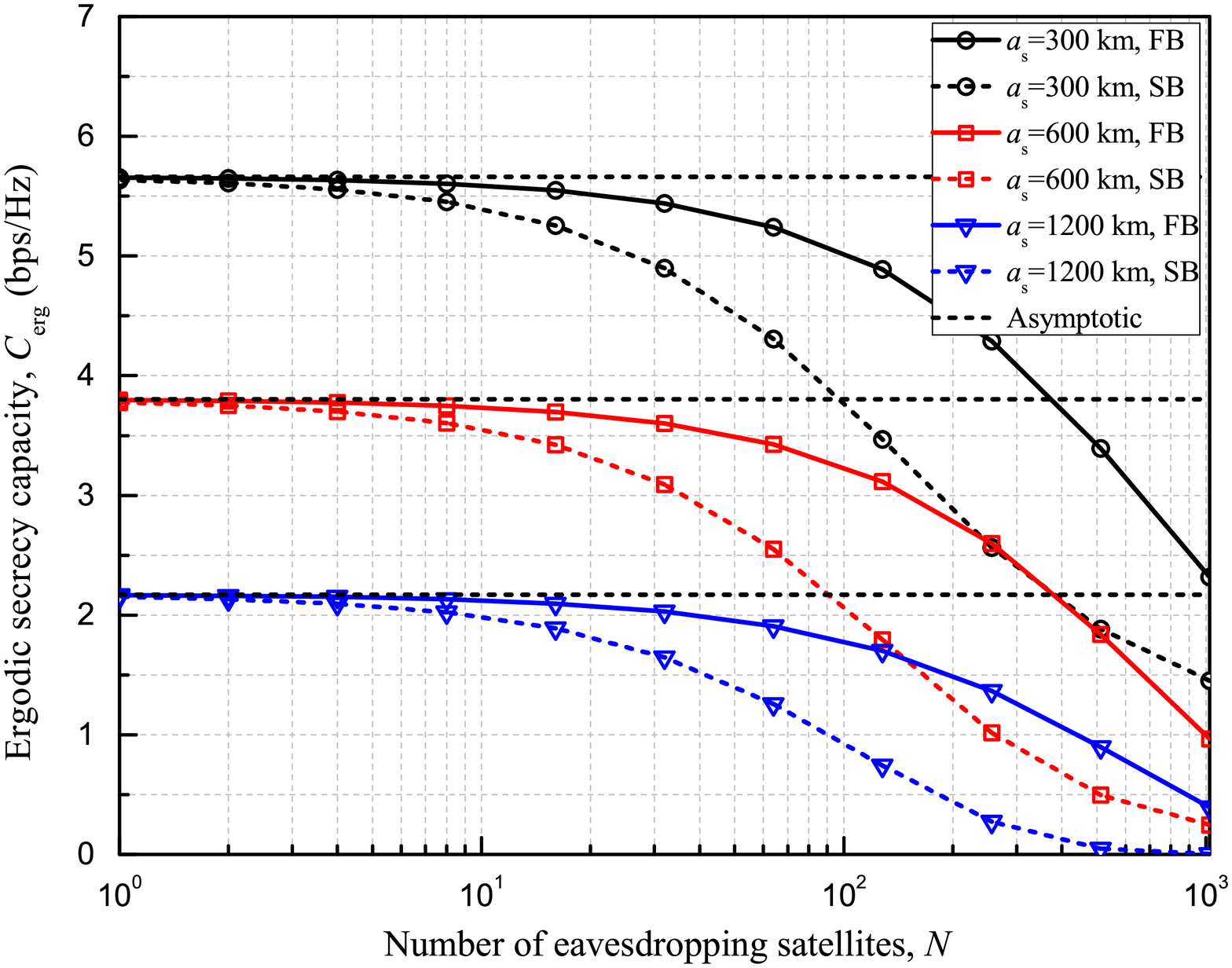}
\end{center}
\setlength\abovecaptionskip{.25ex plus .125ex minus .125ex}
\setlength\belowcaptionskip{.25ex plus .125ex minus .125ex}
\caption{Ergodic secrecy capacity  versus the number of eavesdropping satellites $N$ for various altitudes of the serving satellite $\as=\{300,600,1200\}$ km with $\thes=60$ deg, $\av=600$ km, $\wth=40$ deg, and $\Delta\omega_\mathrm{sb}=20$ deg.}
\label{Fig:Cerg_vs_N_add_asym}
\end{figure}

\begin{figure}
\begin{center}
\includegraphics[width=0.90\columnwidth]{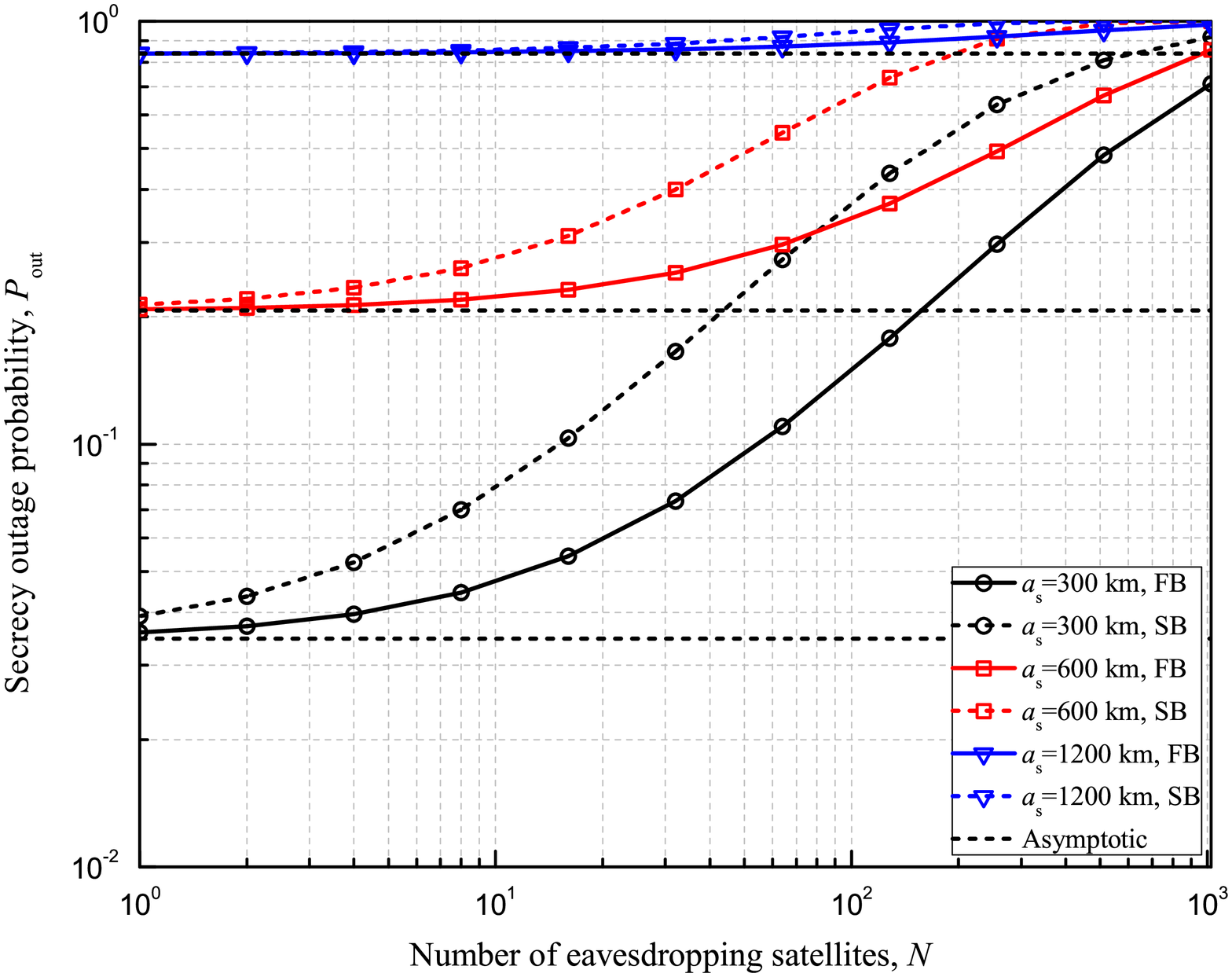}
\end{center}
\setlength\abovecaptionskip{.25ex plus .125ex minus .125ex}
\setlength\belowcaptionskip{.25ex plus .125ex minus .125ex}
\caption{Secrecy outage probability  versus the number of eavesdropping satellites $N$ for various altitudes of the serving satellite $\as=\{300,600,1200\}$ km with $\thes=60$ deg, $\av=600$ km, $\wth=40$ deg, $\Delta\omega_\mathrm{sb}=20$ deg, and $\Rt=3$ bps/Hz.}
\label{Fig:Pout_vs_N_add_asym}
\end{figure}

Fig. \ref{Fig:Cerg_vs_wsb} shows the ergodic secrecy capacity  versus the steerable angle of the boresight $\wsb$ for various altitudes of the eavesdropping satellites $\av=\{600,800,1200\}$ km with $\as=600$ km, $\thes=60$ deg, $\wth=40$ deg, and $N=100$. 
As $\wsb$ increases, the ergodic secrecy capacity for the SB case decreases because the surface area of $\Aeml$ increases.
When $\wsb$ is larger than $\wsb^*$ given in Remark \ref{Rem:wsb*}, the surface area of $\Aeml$ becomes the same as that of $\Ae$, which is the reason why the ergodic secrecy capacity stays constant.
With small $\wsb$, as $\av$ increases, the ergodic secrecy capacities for both FB and SB cases decrease because the number of effective eavesdropping satellites increases with $\av$.
More specifically, the average numbers of effective eavesdropping satellites are given by $\mathbb{E}[\BPP(\Ae)]=N\SAe/\SA$, which results in $\{\mathbb{E}[\BPP(\Ae)]\}=\{4.3,5.57,7.92\}$ for $\av=\{600,800,1200\}$ km and $N=100$.
In contrast, with large $\wsb$, the secrecy capacity for the SB case increases with $\av$.
This is because when the eavesdropping satellites have the greater beam-steering ability, the number of satellites whose main lobe is directed to the terminal increases, which makes up the decreased number of effective eavesdropping satellites for small $\av$. Thus, the impact of the increased SNRs of the eavesdropping links from smaller path loss with small $\av$ becomes a dominant factor to degrade the ergodic secrecy capacity.

\begin{figure}
\begin{center}
\includegraphics[width=0.90\columnwidth]{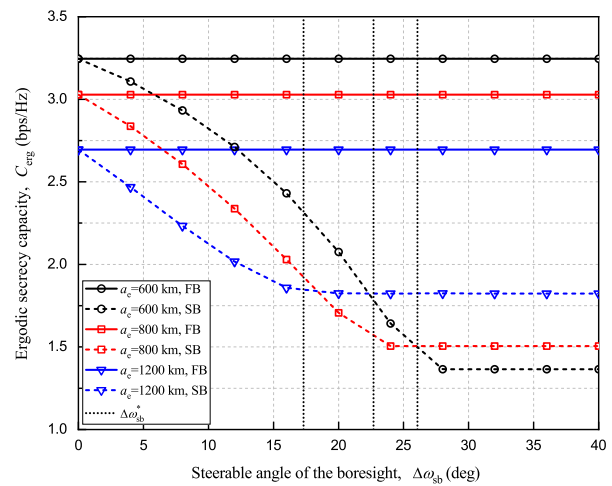}
\end{center}
\setlength\abovecaptionskip{.25ex plus .125ex minus .125ex}
\setlength\belowcaptionskip{.25ex plus .125ex minus .125ex}
\caption{Ergodic secrecy capacity  versus the steerable angle of the boresight $\wsb$ for various altitudes of the eavesdropping satellites $\av=\{600,800,1200\}$ km with $\as=600$ km, $\thes=60$ deg, $\wth=40$ deg, and $N=100$.}
\label{Fig:Cerg_vs_wsb}
\end{figure}

\section{Conclusions}\label{Sec:Conclusions}
In this paper, we investigated uplink low Earth orbit satellite communication systems with eavesdropping satellites randomly distributed according to binomial point processes (BPPs).
The possible distribution cases and the distance distributions for the eavesdropping satellites were analyzed based on the characteristics of the BPP.
The distributions of the signal-to-noise ratios (SNRs) at both the serving and the most detrimental eavesdropping satellite were derived in closed-forms.
The ergodic and outage secrecy capacities with the secrecy outage probability were analyzed by using the SNR and distance distributions.
To reduce the computational complexity of the performance evaluation, the approximate expressions were derived with the help of the Poisson limit theorem, and the asymptotic performance was obtained in various scenarios.
The analyses were verified by the simulation results where
the impact of the location of the serving satellite, the number of eavesdropping satellites, and the beam-steering capability was also discussed in terms of secure communications.
The analytical results are expected to give a guideline when designing practical techniques for secure satellite communication systems, e.g., multi-antenna transmissions and anti-jamming schemes.

\appendices

\section{Proof of Corollary \ref{Cor:prob_pq}}\label{App:Cor1}
Let $g$ be the cap height of a spherical cap $\mathcal{X}$ with radius $r$. Then, the surface area of $\mathcal{X}$ is given by $\mathcal{S}_{\mathcal{X}}=2\pi r g$ [\ref{Ref:Book:Polyanin}].
The region $\Ae$ is a spherical cap with the radius $r+\av$ and cap height $\av$, of which surface area is given by 
\begin{align}\label{eq:area_Ae}
    \SAe=2\pi(r+\av)\av.
\end{align}
As shown in Fig. \ref{Fig:area_description}, the region $\Aeml$ is a spherical cap with the radius $r+\av$ whose cap height $\overline{\mathrm{NM}}$ is  $\overline{\mathrm{ON}}-\overline{\mathrm{OM}}= (r+\av)(1 - \cos\psth)$.
The surface area of $\Aeml$ is given by 
\begin{align}\label{eq:area_Aeml}
\SAeml
    = 2\pi(r+\av)\overline{\mathrm{NM}}=2\pi(r+\av)^2 (1 - \cos\psth).    
\end{align}
The surface area of the region $\Aesl$ can be obtained from the difference between $\SAe$ and $\SAeml$ as
\begin{align}\label{eq:area_Aesl}
\SAesl
    &= \SAe-\SAeml
    = 2\pi(r+\av)\{(r+\av)\cos\psth-r\}. 
\end{align}

For the three disjoint regions $\Aeml$, $\Aesl$, and $\Ae^{\mathrm{c}}$ with $\Aeml \cup \Aesl \cup \Ae^{\mathrm{c}}=\A$, the probability that $p$ and $q$ points respectively lie in $\Aeml$ and $\Aesl$ is obtained by the finite-dimensional distribution as
\begin{align}\label{eq:prob_pq}
    \mathbb{P}[&\BPP(\Aeml)=p,\BPP(\Aesl)=q,\BPP(\Ae^{\mathrm{c}})=N\!-\!p\!-\!q]\nonumber\\
    &= \frac{N!}{p!q!(N\!-\!p\!-\!q)!}\!\left(\frac{\mathcal{S}_{\Aeml}}{\mathcal{S}_{\A}}\right)^p\! \left(\frac{\mathcal{S}_{\Aesl}}{\mathcal{S}_{\A}}\right)^q\! \left(\frac{\mathcal{S}_{\Ae^{\mathrm{c}}}}{\mathcal{S}_{\A}}\right)^{N-p-q}.
\end{align}
By plugging \eqref{eq:area_Ae}-\eqref{eq:area_Aesl} into \eqref{eq:prob_pq} with the fact that $\S_{\Ae^{\mathrm{c}}}=\S_{\A}-\S_{\Ae}$, the proof is complete.

\section{Proof of Lemma 2}\label{App:Lem2}
Let $\mathcal{A}(x)$, $x \in [\av, 2r+\av]$, be the surface of a spherical cap with the radius $r+\av$ and the cap height $g(x)$ such that the distance between any point on $\mathcal{A}(x)$ and the terminal is less than $x$.
Then, the surface area of $\mathcal{A}(x)$ is given by
\begin{align}\label{A_c}
\mathcal{S}_{\mathcal{A}(x)}
    = 2\pi(r+\av)g(x)=\frac{\pi(r+\av)(x^2-\av^2)}{r},
\end{align}
where $g(x)=(x^2-\av^2)/(2r)$ is obtained using the Pythagorean theorem.
For example, when $x=\av$, the surface area vanishes, i.e., $\mathcal{S}_{\mathcal{A}(\av)}=0$, and when $x=2r+\av$, the area $\mathcal{A}(x)$ becomes the entire sphere with the radius $r+\av$ whose surface is given by $\mathcal{S}_{\mathcal{A}(2r+\av)}=4\pi(r+\av)^2$.
The number of points on $\mathcal{A}(x)$ follows the binomial distribution with the total number of points $N$ and the success probability $p(\mathcal{A}(x))$, given by
\begin{align}\label{eqn:p(x)}
p(\mathcal{A}(x))
    = \frac{\mathcal{S}_{\mathcal{A}(x)}}{\mathcal{S}_{\mathcal{A}(2 r+\av)}}=\frac{x^2-\av^2}{4r (r+\av)}.
\end{align}
Since the success probability is the probability that an eavesdropping satellite is located in $\mathcal{A}(x)$, it is equivalent to the probability that the distance between the terminal and eavesdropping satellite $\idxe$ is less than $x$.
With this fact, the proof is complete.

\section{Proof of Lemma 3}\label{App:Lem3}
By the law of cosines, the distance between the terminal and the eavesdropping satellite $\idxe$ is obtained as $d_\idxe=\sqrt{r^2+(r+\av)^2-2r(r+\av)\cos\psi}$. Since $d_\idxe$ is an increasing function of $\psi \in [0,\psmax]$, the probability that the eavesdropping satellite $\idxe$ lies in $\Aeml$ can be expressed as $\Pr[\idxe \in \BPPeml]=\Pr[d_\idxe \le \dth]$, where $\dth=\sqrt{r^2+(r+\av)^2-2r(r+\av)\cos\psth}$. 
From this fact, the CDF of $X_\idxe$ is given by
\begin{align}\label{eq:CDFXe1}
F_{X_\idxe}(x)
    =& \Pr [X_\idxe \le x] \nonumber\\
    =&  \Pr[d_\idxe \le x|\idxe \in \BPPeml]\nonumber\\
    =&  \frac{\Pr[d_\idxe \le x, d_\idxe \le \dth]}{\Pr[d_\idxe \le \dth]}\nonumber\\
    =& \begin{cases} 0, & \mbox{if } x \le \av,\\ 
     \frac{F_{d_\idxe}(x)}{F_{d_\idxe}(\dth)}, & \mbox{if } \av < x \le \dth, \\
     1, & \mbox{if } x > \dth.
     \end{cases}
\end{align}
Plugging the result of Lemma \ref{Lem:CDF_d_e} into \eqref{eq:CDFXe1} completes the proof.

\section{Proof of Lemma 8}\label{App:Lem8}
Given $\BPP(\Aeml)=p$, the CDF of the SNR at the most detrimental eavesdropping satellite in $\Aeml$ is given by
\begin{align}\label{eq:CDF_SNReml1}
{F_{\gemdml}}(x) 
    &= \mathbb{P}[\gemdml \le x]\nonumber\\
    &= \mathbb{E}_{X_\idxe}\left[\prod\limits_{\idxe\in\BPPeml}  {\mathbb{P} [ h_\idxe \le w_1 X_\idxe^{\alpha} x ]}  \right]\nonumber\\
    &\mathop=\limits^{(a)}  \left[ \int_{\av}^{\dth} F_{h_\idxe}(w_1 z^{\alpha} x) f_{X_\idxe}(z)dz \right]^p\nonumber\\
    &\mathop =\limits^{(b)} \left[\frac{2K}{\dth^2-\av^2}\sum\limits_{n = 0}^\infty  {\frac{{{{(m)}_n}{\delta ^n}{{(2b)}^{1 + n}}}}{{{{(n!)}^2}}}} \right.\nonumber\\
    &\qquad \times \left. \int_{\av}^{\dth}\gamma\left(1+n, \frac{w_1 z^{\alpha} x}{2b} \right) z dz\right]^{p},
\end{align}
where ($a$) follows form the fact that the distances of $p$ eavesdropping satellites, $X_\idxe$, $\idxe \in \BPPeml$, are independently and identically distributed, and ($b$) follows from the CDF of the channel  gain and the PDF of $X_\idxe$, given by \eqref{eq:CDF_ch_gain} and \eqref{eq:PDF_Xe}, respectively.

From the definition of the lower incomplete Gamma function, the integral in \eqref{eq:CDF_SNReml1} is expressed as
\begin{align}\label{eq:CDF_SNReml2_int1}
\mathcal{I}_{\mathrm{ml}}(x,n)
    &=\int_{\av}^{\dth}\int_0^{\Lambda(z)}t^n e^{-t} z dtdz,
\end{align}
where $\Lambda(z)=\frac{w_1 z^{\alpha} x}{2b}$.
Since $\Lambda(z)$ is an increasing function of $z$ for $z>0$ and $\alpha > 0$, the domain of the integration \eqref{eq:CDF_SNReml2_int1} can be divided into two domains $S_1$ and $S_2$ that are respectively given by
$S_1=\{0 \le t \le \Lambda(\av),\,\, \av \le z \le \dth\}$
and 
$S_2 =\{\Lambda(\av)\le t \le \Lambda(\dth),\,\, \Lambda^{-1}(t) \le z \le \dth\}$, where $\Lambda^{-1}(t)=\left(\frac{2bt}{w_1 x}\right)^{1/\alpha}$. 
The integral over domain $S_1$ is obtained as the product of two independent integrals over $z$ and $t$, which is given by
\begin{align}\label{eq:intS1_fin}
\mathcal{I}_{S_1}(x,n)
    =&\int_0^{\Lambda(\av)} \int_{\av}^{\dth}    {{t^n}e^{-t}} z dz dt  \nonumber\\
    =&\int_0^{\Lambda(\av)} {t^n}e^{-t} dt \times \int_{\av}^{\dth} z dz  \nonumber\\
    =& \frac{\dth^2-\av^2}{2} \gamma(1+n,\Lambda(\av)),
\end{align}
and the integral over domain $S_2$ is given by
\begin{align}\label{eq:intS2_fin}
\mathcal{I}_{S_2}(x,n)
    &=\int_{\Lambda(\av)}^{\Lambda(\dth)} {\int_{\Lambda^{-1}(t)}^{\dth} {{t^n}{e^{ - t}}z dz dt} } \nonumber\\
    &=\frac{1}{2}\int_{\Lambda(\av)}^{\Lambda(\dth)} {t^n}e^{ - t} \left(\dth^2-\left(\frac{2bt}{w_1 x}\right)^{\frac{2}{\alpha}}\right) dt  \nonumber\\
    &=\frac{\dth^2}{2}\!\!\int_{\Lambda(\av)}^{\Lambda(\dth)} {t^n}e^{ - t}dt -\frac{1}{2}\left(\frac{2b}{w_1 x}\right)^{\frac{2}{\alpha}} \!\!\int_{\Lambda(\av)}^{\Lambda(\dth)} \! t^{n+\frac{2}{\alpha}}e^{-t}  dt  \nonumber\\
    &=\frac{\dth^2}{2} \{\gamma(1+n,\Lambda(\dth))-\gamma(1+n,\Lambda(\av))\}\nonumber\\
    &\quad-\frac{1}{2}\left(\frac{2b}{w_1 x}\right)^{\frac{2}{\alpha}} \left\{\gamma\left(1+n+\frac{2}{\alpha},\Lambda(\dth)\right)\right.\nonumber\\
    &\quad\left.-\gamma\left(1+n+\frac{2}{\alpha},\Lambda(\av)\right)\right\}.
\end{align}
From \eqref{eq:CDF_SNReml1}-\eqref{eq:intS2_fin}, we can obtain the final expression for the CDF of $\gemdml$.

\section{Proof of Theorem 1}\label{App:Thm1}
Based on the law of total expectation, i.e., $\mathbb{E}[X]=\sum_i\mathbb{P}[A_i]\mathbb{E}[X|A_i]$, the ergodic secrecy capacity of the system is given by
\begin{align}\label{eq:Cerg1}
\Cerg
    &=\sum_{p=0}^{N} \sum_{q=0}^{N-p} \mathcal{P}[N,p,q] \mathbb{E}[R\,|\,\BPP(\Aeml)=p,\BPP(\Aesl)=q] \nonumber\\
    &=\sum_{p=0}^{N} \sum_{q=0}^{N-p} \mathcal{P}[N,p,q]\nonumber\\
    &\quad\times\int_0^{\infty}\int_0^{x}\log_2\left(\frac{1+x}{1+y}\right) f_{\gs, {\g_{\mathrm{e}^*}^{(p,q)}}}(x,y) dy dx\nonumber\\
    &\mathop=\limits^{(a)}\sum_{p=0}^{N} \sum_{q=0}^{N-p} \frac{\mathcal{P}[N,p,q]}{\ln2} \int_0^{\infty}\frac{F_{\g_{\mathrm{e}^*}^{(p,q)}}(x)}{1+x}(1-F_\gs(x))dx,
\end{align}
where $f_{\gs, {\g_{\mathrm{e}^*}^{(p,q)}}}(x,y)$ is the joint PDF of $\gs$ and ${\g_{\mathrm{e}^*}^{(p,q)}}$, and ($a$) follows from the independence of two random variables $\gs$ and ${\g_{\mathrm{e}^*}^{(p,q)}}$.
From \eqref{eq:CDF_SNRs_fin}, \eqref{eq:CDF_SNRe*_fin}, and \eqref{eq:Cerg1}, the final expression for the ergodic secrecy capacity of the system can be obtained.

\section{Proof of Theorem 2}\label{App:Thm2}
Using the law of total probability, the secrecy outage probability is given by
\begin{align}\label{eq:Pout1}
\Pout
    &=\sum_{p=0}^{N} \sum_{q=0}^{N-p} \mathcal{P}[N,p,q] \mathbb{P}[\g_{\mathrm{e}^*}^{(p,q)} \geq  2^{-\Rt}(1+\gs)-1]\nonumber\\
    &=1-\sum_{p=0}^{N} \sum_{q=0}^{N-p} \mathcal{P}[N,p,q] \nonumber\\
    &\quad\qquad\times\int_{2^{\Rt}-1}^{\infty} F_{\g_{\mathrm{e}^*}^{(p,q)}}(2^{-\Rt}(1+x)-1) f_{\gs}(x)dx.
\end{align}
With the fact that $\frac{d\gamma(a,x)}{dx}=e^{-x}x^{a-1}$, the PDF of the SNR at the serving satellite in \eqref{eq:Pout1} is obtained by differentiating \eqref{eq:CDF_SNRs_fin} as \eqref{eq:PDF_SNRs}, which completes the proof.

\section{Proof of Lemma \ref{Lem:CDFgemlapp}}\label{App:Cor3}
When $\av \rightarrow 0$, the CDF of the SNR at the most detrimental eavesdropping satellite in $\Aeml$ is given by
\begin{align}\label{eq:CDF_SNRemlapp1}
{F_{\gemdmlapp}}(x) 
    \!=\! \mathbb{P}[\gemdmlapp \le x]
    \!= {\mathbb{E}_{{\PPPeml}}}\!\left[ {\prod\limits_{\idxet \in {\PPPeml}}\!\! {{F_{h_\idxet}}\!\left( {\frac{{{N_0}Wx}}{{P\ell(d_\idxet)}}} \right)} } \right].
\end{align}
By using the probability generating functional of the PPP $\PPPeml$ [\ref{Ref:Book:Chiu}]
\begin{align}\label{eq:PGFL}
    \mathbb{E}_{\PPPeml}\left[\prod\limits_{\idxet \in {\PPPeml}} f(x) \right]=\exp\left(-\lame \int_{\Aeml} (1-f(x))dx\right),
\end{align}
with conversion from the Cartesian to spherical coordinates, 
we have
\begin{align}\label{eq:CDF_SNRemlapp2}
{F_{\gemdmlapp}}&(x)
    = \exp \left(  - \lambda_{\mathrm{e}}{(r + \av)}^2\int_0^{2\pi } \int_0^{\psth} \right.\nonumber\\
    &\,\,\,\,\qquad\qquad\times \left. \left\{ {1 - {F_{h_\idxet}}\left( {\frac{{{N_0}Wx}}{{P\ell(d_\idxet)}}} \right)} \right\}\sin \psi d\psi d\varphi \right)\nonumber\\
    =& \exp \left( -\frac{N}{2} \left\{ 1-\cos\psth - K\sum\limits_{n = 0}^\infty  \frac{{{{(m)}_n}{\delta ^n}{{(2b)}^{1 + n}}}}{{{{(n!)}^2}}} \right.\right.\nonumber\\
    &\qquad\qquad\times\left.\left.\int_0^{{\psth}}  \int_0^{\Pi_{\mathrm{ml}}(x,\psi)} t^n e^{-t} \sin\psi dt d\psi\right\}  \right),
\end{align}
where $\Pi_{\mathrm{ml}}(x,\psi)$ is in \eqref{eq:Pi_ml}.
The integration in \eqref{eq:CDF_SNRemlapp2} can be obtained by using the similar way as in the proof of Lemma \ref{Lem:CDF_SNReml}. The details are omitted due to the space limitation.

\section{Proof of Corollary \ref{cor:Cerg_N0}}\label{App:CorCerg_N0}
As $N$ goes to zero, $F_{\gemdapp}(x)$ becomes one. Substituting $F_{\gemdapp}(x)=1$ into \eqref{eq:Cerg_assym_fin}, we have
\begin{align}\label{eq:Cerg_N01}
\Cerg
    &\rightarrow \frac{1}{\ln2} \int_0^{\infty}\frac{1-F_\gs(x)}{1+x}dx\nonumber\\
    &\mathop\approx\limits^{(a)}\frac{K}{\ln2}\sum_{k=0}^{\floor{m}-1}\sum_{t=0}^{k}\frac{(-1)^k (1-m)_{k} \delta^k (w_1 \ds^{\alpha})^t}{k! t! \left(\frac{1}{2b}-\delta\right)^{k-t+1}} \nonumber\\
    &\quad\qt\int_0^{\infty}\frac{x^t}{1+x}e^{-\left(\frac{1}{2b}-\delta\right)w_1 \ds^{\alpha}x}dx,
\end{align}
where ($a$) follows from the fact that the CDF of the channel gain in \eqref{eq:CDF_ch_gain} can be approximated with integer $m$ as [\ref{Ref:Miridakis}]
\begin{align}
F_{h}(x)=1-K\sum_{k=0}^{m-1}\sum_{t=0}^{k}\frac{(-1)^k (1-m)_{k} \delta^k }{k! t! \left(\frac{1}{2b}-\delta\right)^{k-t+1}} x^t e^{-\left(\frac{1}{2b}-\delta\right)x}.
\end{align}
Using $\int_{0}^{\infty}\frac{x^t}{1+x}e^{-\xi x}=e^\xi\Gamma(1+t)\Gamma(-t,\xi)$ in \eqref{eq:Cerg_N01}, we obtain the final expression of the ergodic secrecy capacity.
Similarly, the asymptotic secrecy outage probability in \eqref{eq:Pout_N0} is obtained by simply letting $F_{\gemdapp}(x)=1$ in \eqref{eq:Pout_assym_fin}.

\section{Proof of Theorem \ref{thm:CergHighSNR}}\label{App:CorCergHighSNR}
Let $\mathrm{e}_0 \in \BPPe$ denote the nearest eavesdropping satellite from the terminal. Then, the SNR at the nearest eavesdropping satellite is given by $\g_{\mathrm{e}_0} = \frac{P h_{\mathrm{e}_0} \ell(d_0)}{N_0 W}$, where $d_0=\min_{\idxe \in \BPPe}d_\idxe$ is the distance between the nearest eavesdropper and the terminal. The CDF of $d_0$ is given by 
\begin{align}\label{eq:cdfd0}
F_{d_0}(x)&=\P\left[\min\limits_{\idxe\in\BPPe} d_\idxe \le x\right]=1-\prod\limits_{\idxe\in\BPPe}\P[d_\idxe > x]\nonumber\\
&=1-(1-F_{d_\idxe}(x))^N.
\end{align}
By differentiating \eqref{eq:cdfd0} with respect to $x$ and using \eqref{eq:CDFde}, the PDF of $d_0$ is given by
\begin{align}
f_{d_0}(x)
    &=N(1-F_{d_\idxe}(x))^{N-1}f_{d_\idxe}(x)\nonumber\\
    &=\frac{N x}{2r(r+\av)}\left(1-\frac{x^2-\av^2}{4r(r+\av)}\right)^{N-1}.
\end{align}

In the high-SNR regime, the ergodic secrecy capacity is upper-bounded by 
\begin{align}
\Cerg
    &\le \E\left[\log_2\left(\frac{1+\gs}{1+\g_{\mathrm{e}_0}}\right)\right]\mathop\approx\limits^{P\rightarrow\infty} \E\left[\log_2\left(\frac{\gs}{\g_{\mathrm{e}_0}}\right)\right] \delequal \Cerg^{\infty}.
\end{align}
Using the law of total probability, the upper-bound is expressed as \begin{align}\label{eq:CergUB2}
\Cerg^{\infty}
    =&\P[d_0 \le \dth]\underbrace{\E\left[\log_2\left.\left(\frac{\gs}{\g_{\mathrm{e}_0}}\right)\right|d_0 \le \dth\right]}_{\Xi_1} \nonumber\\
    &+ \P[\dth < d_0 \le \dmax]\underbrace{\E\left[\log_2\left.\left(\frac{\gs}{\g_{\mathrm{e}_0}}\right)\right| \dth < d_0 \le \dmax\right]}_{\Xi_2} \nonumber\\
    &+ \P[d_0 > \dmax]\underbrace{\E[\log_2\gs|d_0 > \dmax]}_{\Xi_3}.
\end{align}
The first expectation on the right-hand side of \eqref{eq:CergUB2}, $\Xi_1$, can be further approximated as
\begin{align}\label{eq:Xi1}
\Xi_1
    &= \E_{h_{\mathrm{s}}, h_{\mathrm{e}_0}, d_0}\left[\log_2\left.\left(\frac{\gs}{\g_{\mathrm{e}_0}}\right)\right|d_0 \le \dth\right]\nonumber\\
    &\mathop\approx\limits^{(a)} \E_{d_0}\left[\log_2\left.\left(\frac{\E_{h_{\mathrm{s}}}[\gs]}{\E_{h_{\mathrm{e}_0}}[\g_{\mathrm{e}_0}]}\right)\right|d_0 \le \dth\right] \nonumber\\
    &\mathop=\limits^{(b)} \E_{d_0}\left[\log_2\left.\left(\frac{\ds^{-\alpha}}{d_0^{-\alpha}}\right)\right|d_0 \le \dth\right]\nonumber\\
    &\mathop\approx\limits^{(c)} \log_2\left(\frac{\ds^{-\alpha}}{\E_{d_0}[d_0|d_0 \le \dth]^{-\alpha}}\right),
\end{align}
where ($a$) follows from the results in Appendix C of [\ref{Ref:Yuan2}] and the independence between the random variables $h_{\mathrm{s}}$ and $h_{\mathrm{e}_0}$; ($b$) follows from the definition of the SNR in \eqref{eq:SNR}; and ($c$) follows from the approximation given in Appendix B of [22].

To further derive \eqref{eq:Xi1}, we obtain the conditional CDF $F_{d_0|d_0\le \dth}(x)$ as
\begin{align}
F_{d_0|d_0\le \dth}(x)
    &= \P[d_0 \le x | d_0\le \dth]\nonumber\\
    &= \begin{cases} 0, & \mbox{if } x \le \av,\\ 
     \frac{F_{d_0}(x)}{F_{d_0}(\dth)}, & \mbox{if } \av < x \le \dth, \\
     1, & \mbox{if } x > \dth,
     \end{cases}
\end{align}
which gives the corresponding PDF as
\begin{align}
f_{d_0|d_0\le \dth}(x)
    = \begin{cases} \frac{f_{d_0}(x)}{F_{d_0}(\dth)}, & \mbox{if } \av < x \le \dth, \\
     0, & \mbox{otherwise}.
     \end{cases}
\end{align}
Using this PDF, $\E_{d_0}[d_0|d_0 \le \dth]$ in \eqref{eq:Xi1} is given by
\begin{align}\label{eq:Xi1exp}
&\E_{d_0}[d_0|d_0 \le \dth]
    = \int_{\av}^{\dth} x f_{d_0|d_0\le \dth}(x) dx\nonumber\\
    &=\frac{N}{2r(r+\av)F_{d_0}(\dth)}\int_{\av}^{\dth} x^2 \left(1-\frac{x^2-\av^2}{4r(r+\av)}\right)^{N-1} dx\nonumber\\
    &\mathop=\limits^{(a)}\frac{N}{2r(r+\av)F_{d_0}(\dth)}\sum\limits_{i=0}^{N-1}\binom{N-1}{i}\nonumber\\
    &\times\left(1+\frac{\av^2}{4r(r+\av)}\right)^{N-1-i}\left(\frac{-1}{4r(r+\av)}\right)^i\frac{\dth^{2i+3}-\av^{2i+3}}{2i+3},
\end{align}
where ($a$) follows from the binomial expansion. From \eqref{eq:Xi1} and \eqref{eq:Xi1exp}, we can obtain $\Xi_1$. Similarly, $\Xi_2$ and $\Xi_3$ can be readily derived, but the details are omitted due to lack of space. Plugging $\Xi_1$, $\Xi_2$, and $\Xi_3$ into \eqref{eq:CergUB2} with the CDF of~$d_0$ completes the proof.

\ifCLASSOPTIONcaptionsoff
  \newpage
\fi

\end{document}